\DocumentMetadata{
    pdfstandard=A-2b,
    pdfversion=1.7,
}

\PassOptionsToPackage{rgb}{xcolor}

\documentclass[sigconf, nonacm, authorversion, screen, pdfa]{acmart}

\usepackage[
	type={CC},
	modifier={by-nc-nd},
	version={4.0},
]{doclicense}

\usepackage[utf8]{inputenc}

\usepackage[
	linesnumbered,
	lined,
	algoruled,
	noend,
]{algorithm2e}
\SetAlCapNameSty{textbf}

\usepackage{dirtytalk}%
\usepackage{listings}
\usepackage{enumitem}
\usepackage{multirow}
\usepackage{subcaption}
\usepackage{xspace}

\providecolor{MaterialPurple800}{HTML}{6A1B9A}
\providecolor{MaterialGreen500}{HTML}{4CAF50}

\lstdefinelanguage{pgconstraints}{%
	alsoletter={
	},
	morekeywords={
		GRAPH,
		TYPE,
		CREATE,
		LOOSE,
		STRICT,
		OPEN,
		OPTIONAL,
		FOR,
		IN,
		EXCLUSIVE,
		MANDATORY,
		SINGLETON,
		IDENTIFIER,
		WITHIN,
		WHERE,
		NULL,
		AND,
		OR,
		MATCH,
		FILTER,
		IS,
		EXISTS,
		ALL,
		ANY,
		SHORTEST,
		TRAIL,
		ACYCLIC,
		LET,
		USE,
		RETURN,
		AS,
		THEN,
		INTERSECT,
		UNION,
		EXCEPT,
		NOT,
		STRING,
		UINT16,
		UINT32,
	},
	morecomment=[l]{//},
	sensitive=true,
}

\lstdefinestyle{pgconstraints}{%
	language=pgconstraints,
	basicstyle=\footnotesize\ttfamily,%
	keywordstyle={\color{MaterialPurple800}\upshape\bfseries},
	backgroundcolor=\color{white},%
	commentstyle=\color{MaterialGreen500}\upshape,
	texcl=true,%
	numbers=none,
	numberstyle=\tiny\upshape,
	tabsize=2,
	escapechar=|,
	mathescape,
	float,
	framerule=1pt,%
	xleftmargin=0pt,%
	xrightmargin=0pt,%
	breaklines=true,
}
\graphicspath{{figures/plots/},{figures/generated}}

\newcommand{\refapp}[2]{Appendix~\ref{#1}}

\newcommand{\deffunc}[3]{#1\colon #2 \to #3}
\newcommand{\defpartfunc}[3]{#1\colon #2 \rightharpoonup #3}
\newcommand{\tuple}[1]{\overline{#1}} %
\newcommand{\abs}[1]{\lvert #1\rvert}

\newcommand{\range}[2][1]{\{#1,\ldots,#2\}}

\newcommand{\nodeset}{N}
\newcommand{\edgeset}{E}
\newcommand{\graph}{G}
\newcommand{\weightof}[1]{w(#1)}
\newcommand{\exgraph}{\graph_{\texttt{e}}}

\newcommand{\hgraph}{\mathcal{H}}
\newcommand{\hedgeset}{\mathcal{E}}
\newcommand{\hnodeset}{\mathcal{W}}

\newcommand{\labelset}{\mathcal{L}}
\newcommand{\keyset}{\mathcal{K}}
\newcommand{\valueset}{\mathcal{V}}
\newcommand{\idset}{\mathcal{I}}
\newcommand{\pgkey}[1]{\texttt{#1}} %
\newcommand{\pglabel}[1]{\texttt{#1}}

\newcommand{\pgrelmap}{\rho}
\newcommand{\pglabelmap}{\lambda}
\newcommand{\pgpropmap}{\pi}

\newcommand{\pgtuple}{(\nodeset,\edgeset,\pgrelmap,\pglabelmap,\pgpropmap)}

\newcommand{\prop}[1]{\texttt{#1}}
\newcommand{\propof}[2]{#1\texttt{.}\prop{#2}}

\newcommand{\match}{\mu}

\newcommand{\gpceplabel}[1]{\raisebox{-2pt}[0pt][0pt]{\ensuremath{\scriptstyle{\,{:}\,\pglabel{#1}}}}}
\newcommand{\gpcnp}[2][]{(#1\,{:}\,\pglabel{#2})}
\newcommand{\gpcep}[1]{\xrightarrow{\gpceplabel{#1}}}
\newcommand{\gpceprev}[1]{\xleftarrow{\gpceplabel{#1}}}
\newcommand{\gpcrep}[3][0]{\big[#3\big]^{#2}}

\newcommand{\constraintset}{\Sigma}
\newcommand{\constraint}{\sigma}
\newcommand{\cimp}{\Rightarrow}

\newcommand{\autA}{\mathcal{A}}

\newcommand{\stateA}{q} %
\newcommand{\statesetA}{Q}%
\newcommand{\transitionRel}{\Delta}%

\newcommand{\istateA}{\stateA_{0}}%
\newcommand{\astateset}{F} %
\newcommand{\astateA}{\stateA_{f}}%

\newcommand{\query}{q}

\newcommand{\complexityclass}[1]{\textnormal{\textsf{#1}}}

\makeatletter
\newcommand{\pgnid}[1]{#1}
\newcommand{\pgeid}[1]{#1}
\newcommand{\pathlikenode}[1]{\pgnid{#1}\setpathlikeentityedge}
\newcommand{\pathlikeedge}[1]{\pgeid{#1}\setpathlikeentitynode}
\newcommand{\pathlikeentity}[1]{\pathlikenode{#1}}
\newcommand{\setpathlikeentitynode}{\renewcommand{\pathlikeentity}[1]{\pathlikenode{##1}}}
\newcommand{\setpathlikeentityedge}{\renewcommand{\pathlikeentity}[1]{\pathlikeedge{##1}}}
\newcommand{\pathlikeseparator}{}
\newcommand{\enablepathlikeseparator}[1]{\renewcommand{\pathlikeseparator}{#1}}
\newcommand{\disablepathlikeseparator}{\renewcommand{\pathlikeseparator}{}}
\newcommand{\pathlike}[2][]{\setpathlikeentitynode\disablepathlikeseparator\@for\entity:=#2\do{\pathlikeseparator\pathlikeentity{\entity}\enablepathlikeseparator{#1}}}
\makeatother

\newcommand{\pgpath}[1]{\pathlike{#1}}
\newcommand{\pgerr}[1]{\{\pathlike[,]{#1}\}}

\newcommand{\NP}{\complexityclass{NP}}

\newcommand{\highlightcell}[1]{\hspace{-4pt}\begingroup\setlength{\fboxrule}{1pt}\fcolorbox{blue}{white}{\bfseries #1}\endgroup}

\begin{document}

\title{Repairing Property Graphs under PG-Constraints}

\author{Christopher Spinrath}
\affiliation{%
	\institution{Lyon 1 University, Liris CNRS}
	\city{Lyon}
	\country{France}
}
\email{christopher.spinrath@liris.cnrs.fr}

\author{Angela Bonifati}
\orcid{https://orcid.org/0000-0002-9582-869X}
\affiliation{%
	\institution{Lyon 1 University, Liris CNRS \& IUF}
	\city{Lyon}
	\country{France}
}
\email{angela.bonifati@univ-lyon1.fr}

\author{Rachid Echahed}
\affiliation{%
	\institution{CNRS LIG, Univ. Grenoble Alpes}
	\city{Grenoble}
	\country{France}
}
\email{rachid.echahed@imag.fr}

\begin{abstract}
Recent standardization efforts for graph databases lead to standard query languages like GQL and SQL/PGQ, and constraint languages like Property Graph Constraints (PG-Constraints). In this paper, we embark on the study of repairing property graphs under PG-Constraints. %
We identify a significant subset of PG-Constraints, encoding denial constraints and including recursion as a key feature, while still permitting automata-based structural analyses of errors.
We present a comprehensive repair pipeline for these constraints to repair Property Graphs, involving changes in the graph topology and leading to node, edge and, optionally, label deletions.
We investigate three algorithmic strategies for the repair procedure, based on Integer Linear Programming (ILP), a naive, and an LP-guided greedy algorithm.
Our experiments on various real-world datasets reveal that repairing with label deletions can achieve a 59\% reduction in deletions compared to node/edge deletions.
Moreover, the LP-guided greedy algorithm offers a runtime advantage of up to 97\% compared to the ILP strategy, while matching the same quality.%
 \end{abstract}

\maketitle

\pagestyle{plain}
\begingroup
\renewcommand\thefootnote{}\footnote{\noindent
	\raggedright Copyright~\copyright2026~\authors
	\doclicenseThis
}\addtocounter{footnote}{-1}%
\endgroup

\begingroup\small\noindent\raggedright\textbf{Related Version:}\\
This paper, without the appendix, has been accepted for publication in Volume 19 of PVLDB and will be presented at the 52nd International Conference on Very Large Data Bases (VLDB 2026).
\endgroup

\vspace{.3cm}
\begingroup\small\noindent\raggedright\textbf{Artifact Availability:}\\
The source code, data, and/or other artifacts have been made available at \url{https://doi.org/10.5281/zenodo.18301604}.
\endgroup

\section{Introduction}\label{section:intro}
Repairing errors and inconsistencies is an elemental task when working with large datasets.
Applications arise naturally as part of data integration scenarios or if requirements -- e.g., the schema -- of databases change.
Consequently, error repairing has been studied extensively in the literature, for various database models \cite{Chu2016}.

Property graph languages spanning query languages, i.e.\ GQL and SQL/PGQ \cite{Deutsch2022,GQLStandard}, and schema and constraint languages, e.g.\ PG-Schema and PG-Keys \cite{Angles2021,Angles2023}, have recently been standardized.

To the best of our knowledge, repairing techniques for property graphs under these schema and constraint languages have not been investigated yet. %
We embark on the study of \emph{qualitative} error repairing \cite{Chu2016} for \emph{property graphs}, that is, error repairing with respect to \emph{constraints}; or more precisely, a subset of PG-Constraints~\cite{Angles2021,Angles2023}, which are part of the PG-Schema formalism and extend PG-Keys.%

To motivate our work, we illustrate property graphs and PG-Constraints by means of a running example.
\begin{example}\label{example:intro-constraint-desc}
Figure~\ref{figure:example-pg-intro} depicts the property graph \(\exgraph\) modelling an organisation with persons, tasks, and documents for planning activities.
Persons can work on tasks, which reference documents required for the task.
Both, nodes and edges, can have multiple labels, and can be equipped with key-value pairs, called \emph{properties}.
For instance, the node \(d_3\) has the two labels \pglabel{document} and \pglabel{important}, as well as the two properties \prop{\#pages} and \prop{access\_level}.
\begin{figure*}
		\includegraphics[width=.85\linewidth]{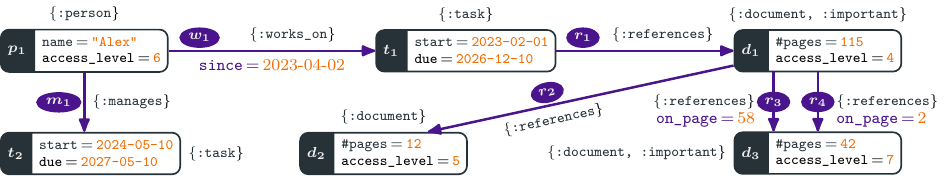}\\[.5em]

		\begin{minipage}{.9175\textwidth}
		\hrule
		~\\
		\emph{\small PG-Constraint in GQL}
		\begin{lstlisting}[numbers=none, gobble=6,basicstyle=\scriptsize\ttfamily]
			FOR x, y MATCH z = (x:person)-[:works_on]->(u:task)(-[:references]->(:document))+(y:document & important) FILTER u.start <= NOW()
			MANDATORY x.access_level, y.access_level FILTER x.access_level >= y.access_level
		\end{lstlisting}
		\hrule
	\end{minipage}
	\caption{%
		A property graph modelling an organisation with persons, tasks, and documents, along with a PG-Constraint%
	}
	\label{figure:example-pg-intro}
\end{figure*}

\noindent	Let us consider the following constraint for the property graph \(G_e\).
	\begin{quote}
		\say{%
			If a person works on a task, which has started and which references directly or indirectly, i.e.\ recursively, an important document, then the person's access level is at least as high as the (required) access level of the referenced document.%
		}
	\end{quote}
	The constraint describes a \emph{graph pattern} in its precondition:
	The part \say{a person works on a task} refers to a node labelled \pglabel{person} and a node labelled \pglabel{task} which are connected by an edge labelled \pglabel{works\_on}.
	A \emph{match} of this partial pattern in \(\exgraph\) consists of the nodes \(p_1\), \(t_1\), and the edge \(w_1\).
	The described pattern also extends to documents which are referenced -- directly or indirectly -- by the task.
	The match for the partial pattern above could be extended in several ways: \(d_1\), \(d_2\), and \(d_3\) are all documents referenced by \(t_1\), where \(d_2\) and \(d_3\) are referenced indirectly.
	In general, reference chains might be of arbitrary length.
	However, the document \(d_2\) is not labelled \pglabel{important} and thus it should not be matched.

	Lastly, the constraint compares property values, namely the \prop{start} property of \(t_1\) to assert that the task has started, and the \prop{access\_level} properties of \(p_1\) and, depending of which node is matched, \(d_1\) or \(d_3\). A possible specification of this constraint as a \emph{PG-Constraint} \cite{Angles2021} is illustrated in Figure~\ref{figure:example-pg-intro}.

	\(G_e\) violates the above constraint since \(p_1\), \(w_1\), \(t_1\), \(r_1\), \(d_1\), \(r_4\), and~\(d_3\) (with \(d_3\) being the important document in question) form a match, the task \(t_1\) has started, but \(d_3\) has a higher access level than \(p_1\).
\end{example}

\paragraph{Contributions}
We identify \emph{regular graph pattern calculus (RGPC) constraints} as a subset of PG-Constraints, and introduce an automata model for its patterns to reason about the structure of errors.
At the same time, our subset entails major features of PG-Constraints; especially, recursion, complex restrictions of labels for nodes and edges, as well as comparability of node properties.
They can express denial constraints, which capture, among others, functional dependencies, including the constraint from Example~\ref{example:intro-constraint-desc}.
For specifying patterns we use a subset of the graph pattern calculus GPC \cite{Francis2023a}, which captures the core features for patterns in GQL and SQL/PG.

	For repairing databases various repair models have been studied in the literature \cite{Fan2008}, including the addition of new data \cite{Arenas1999}, changing data values \cite{Wijsen2005,Bohannon2005,Lopatenko2007} -- i.e. values of properties -- and deletion~\cite{DBLP:journals/iandc/ChomickiM05}.
	The choice of a repair model depends on the setting.

In this paper, we assume the repair model of deletions \cite[see, e.g.][]{Fan2008,DBLP:journals/pvldb/LiuSGGKR24,DBLP:journals/iandc/ChomickiM05} in which property graphs are repaired by deleting objects from it, i.e.\ nodes, edges, or labels.
Deletions make sense as a repair model in several applications where
inserting new data is undesirable or too risky, for example, due to privacy or security reasons.
More concretely, in a social network inserting a friendship could expose private data, but deleting one can be rectified.
In a property graph modelling a supply chain, inserting a \pglabel{sourced\_from} relationship between a manufacturer and a product which the manufacturer cannot actually supply can be disastrous for a company, while deleting such a relationship would highlight the issue of unavailability.
In a scenario as modelled by the property graph shown in Figure~\ref{figure:example-pg-intro}, increasing, that is changing, the access level of a person or document is a security risk.
Recall that the property graph shown in Figure~\ref{figure:example-pg-intro} does not satisfy the constraint from Example~\ref{example:intro-constraint-desc}, since the person represented by \(p_1\) works on the task \(t_1\) which references the important document \(d_3\).
But the access level of \(p_1\) is not high enough to access \(d_3\).
One way to repair the property graph is to delete the edge \(w_1\), effectively removing \(p_1\) from working on \(t_1\).
Another possibility is to remove the label \pglabel{important} from \(d_3\).

In this paper, we propose a comprehensive pipeline (cf., Figure~\ref{figure:pipeline}) featuring different graph repair options.
The repair method follows a \emph{holistic} approach, meaning it detects and repairs
all errors simultaneously. This ensures
that the fewest possible objects are removed, and their relationships are respected:
For example, the deletion of a node entails deleting all its incident edges.
However, managing multiple errors -- especially those involving long, arbitrary-length paths rather than fixed-length tuples -- can be
time-consuming.
To mitigate this, we propose three algorithms for tackling the underlying combinatorial
problem of determining which objects to delete.
The first is a naive greedy
algorithm, which sequentially repairs each error by deleting an
object from it, unless it has already
been resolved through prior deletions. The second formulates the
problem as an integer linear program (ILP), using a
solver to derive an optimal set of objects for deletion.
The third one is an LP-guided greedy algorithm which uses non-integer solutions of the LP-relaxation of the ILP to guide the greedy algorithm.

Finally, we gauge the effectiveness of our repair pipeline and analyse the trade-offs associated with the different algorithms and options through experiments conducted on real-world datasets.
They reveal that the LP-guided greedy algorithm can outperform the other algorithms, while maintaining the quality, i.e.\ number of deletions, of the ILP strategy.
Permitting label deletions can reduce the number of deletions by up to \(59\%\), but increases the runtime (by up to \(5\times\)).
Moreover, our experiments reveal that approximate versions of our algorithms can be up to \(89\%\) faster, while yielding a good approximation or even a repair.

\paragraph{Structure}
In Section~\ref{section:setting} we recall some basics on property graphs, and present our subset of the graph pattern calculus (GPC).
We then introduce our RGPC constraints in Section~\ref{section:constraints}.
In Section~\ref{section:cleaning} we present our pipeline and algorithm(s) for error repairing, and in Section~\ref{section:experiments} our experiments.
Finally, we discuss further related work in Section~\ref{section:related-work}, and conclude in Section~\ref{section:conclusion}.
\section{Preliminaries}\label{section:setting}

In this section, we provide basics on the data model used in property graphs, and introduce the subset of the pattern calculus GPC \cite{Francis2023a} which we use as an ingredient for our constraints in Section~\ref{section:constraints}.

\subsection{Property Graphs}\label{section:property-graphs}
By \(\labelset\), \(\keyset\), \(\valueset\), and \(\idset\) we denote countably infinite, pairwise disjoint sets of \emph{labels}, \emph{keys} (of properties), \emph{property values}, and \emph{identifiers} (\emph{ids} for short), respectively.

A \emph{property graph} is a tuple \(\graph = \pgtuple\), where
\begin{itemize}
	\item \(\nodeset\subset\idset\) is a finite set of node identifiers;
	\item \(\edgeset\subset\idset\) is a finite set of edge identifiers disjoint from \(\nodeset\);
	\item \(\deffunc{\pgrelmap}{\edgeset}{(\nodeset \times \nodeset)}\) defines the endpoints, i.e.\ the source and target, for all edges;
	\item \(\deffunc{\pglabelmap}{(\nodeset\cup\edgeset)}{2^{\labelset}}\) is a labelling function that assigns a finite set of labels to node and edge identifiers; and
	\item \(\defpartfunc{\pgpropmap}{((\nodeset\cup\edgeset)\times \keyset)}{\valueset}\) is a partial function assigning values to pairs of (node or edge) identifiers and keys.
\end{itemize}
For convenience, we will refer to node and edge identifiers simply as the \emph{nodes} and \emph{edges} of the property graph.
Also, we use the term \emph{object} to refer to nodes, edges, and labels on specific nodes or edges.

	For example, the nodes of the property graph \(\exgraph\) depicted in Figure~\ref{figure:example-pg-intro} are \(p_1,t_1,t_2,d_1,d_2,d_3\) and the edges are \(w_1,m_1,r_1,r_2,r_3,r_4\).
	The endpoints of the edge \(w_1\) are \(p_1\) (the source) and \(t_1\) (the target).
We note that it is possible that a node or an edge does not have any labels.
In that case, \(\pglabelmap\) assigns the empty set (of labels).

A \emph{path} in a property graph \(\graph = \pgtuple\) is a non-empty alternating sequence
\(
v_0e_1v_1\ldots e_nv_n
\)
where \(v_0,\ldots,v_n\in\nodeset\) are nodes and \(e_1,\ldots,e_n\in\edgeset\) are edges such that \(\pgrelmap(e_i) = (v_{i-1},v_i)\) holds for all \(i\in\range{n}\).
For example, the sequence
\(\pgpath{p_1,w_1,t_1,r_1,d_1,r_3,d_3}\)
is a path in \(\exgraph\).
We emphasize that every path starts and ends with a node.
The \emph{length} of a path \(v_0e_1v_1\ldots e_nv_n\) is the number of edges \(n\).
An \empty{empty path} consists of exactly one node and has length~\(0\).
Two paths \(v_0e_1\ldots e_nv_n\) and \(v_0'e_1'\ldots e_m'v_m'\) can be \emph{concatenated} if \(v_n = v_0'\) holds.
The resulting path is then \(v_0e_1\ldots e_nv_ne_1'\ldots e_m'v_m'.\)
The \emph{trace} \(\pglabelmap(P)\) of a path \(P = v_0e_1v_1\ldots e_nv_n\) is \(\pglabelmap(P) = \pglabelmap(v_0)\pglabelmap(e_1)\pglabelmap(v_1)\ldots\pglabelmap(e_n)\pglabelmap(v_n)\).
For example, the trace of the path \(\pgpath{p_1,w_1,t_1,r_1,d_1,r_3,d_3}\) is
\begin{multline*}%
	\{\pglabel{\small person}\}
	\{\pglabel{\small works\_on}\}
	\{\pglabel{\small task}\}
	\{\pglabel{\small references}\}\\
	\{\pglabel{\small document},\pglabel{\small important}\}
	\{\pglabel{\small references}\}
	\{\pglabel{\small document},\pglabel{\small important}\}.
\end{multline*}

\subsection{Regular Graph Pattern Calculus}
In the following we introduce the syntax and semantics of a proper subset of GPC, a graph pattern calculus, which captures the core features of the pattern sublanguage shared by GQL and SQL/PGQ~\cite{Francis2023a}. We present our subset, which we call \emph{RGPC} (short for \emph{regular GPC}), by means of examples. %

\begin{example}\label{example:gpc-simple}
	The following RGPC pattern matches all pairs \(x,y\) of nodes such that \(x\) has the label \(\pglabel{document}\) and there is a path \(z\) from \(x\) to \(y\) on which every edge has the label \(\pglabel{references}\).
	\[
		z = \gpcnp[x]{document} \left[\gpcep{references}\right]^* (y)
	\]
\end{example}
We explain the components of the pattern in Example~\ref{example:gpc-simple}, as well as some variations, in a bottom-up fashion.
The basic building blocks of RGPC patterns are \emph{node} and \emph{edge patterns}.
A node pattern has the form \(\gpcnp[x]{\(\varphi\)}\) where \(x\) is a \emph{node variable} and \(\varphi\) is a \emph{label expression} which restricts the labels of the nodes matching the pattern.
Both \(x\) and \(\varphi\) are optional.
The node patterns in Example~\ref{example:gpc-simple} are \(\gpcnp[x]{document}\), and \((y)\).
In the latter pattern the labels of \(y\) are not restricted, the node may have any label(s).
An example for a node pattern without variable is \(\gpcnp{a}\).
If \(x\) and \(\varphi\) are both omitted, we just write \(()\).
In general, \(\varphi\) can be any propositional formula connecting labels from the set \(\labelset\) with the operators \(\land\) (conjunction), \(\lor\) (disjunction), and \(\neg\) (negation).
For example, \(\gpcnp{(a \(\land\) b) \(\lor\) \(\neg\)c}\) matches every node which has labels \(\pglabel{a}\) and \(\pglabel{b}\) or not label \(\pglabel{c}\).

Similarly to node patterns, edge patterns have the form \(\smash{\gpcep{\(\varphi\)}}\), although we do not consider variables in edge patterns.
In Example~\ref{example:gpc-simple}, there is one edge pattern, namely \(\smash{\gpcep{references}}\).

\paragraph{Path Patterns.}
Node and edge patterns can be combined into \emph{path patterns} using \emph{concatenation}, \emph{union}, and \emph{repetition}.

For example, a \emph{concatenation} of two node and an edge pattern is
\[
\gpcnp[x]{document} \gpcep{references} (y)
\]
which matches all pairs \(x,y\) of nodes such that \(x\) has the label \(\pglabel{document}\) and there is an edge labelled \(\pglabel{references}\) from \(x\) to \(y\).

The \emph{union} of two patterns can be built if they \emph{do not contain node variables}.
For instance, \(\gpcep{a} \cup \gpcep{c}\gpcep{c}\) states that there is an edge labelled \pglabel{a}, \emph{or} a path of length two on which all edges are labelled \pglabel{c}.

For \emph{repetitions} we consider the classic Kleene-star operator. For example,
\(
\gpcrep[1]{*}{%
	\gpcep{a}
}
\)
specifies that the path specified by \(\gpcep{a}\) can be repeated arbitrarily often (including \(0\) times).
We also write \(\gpcrep{+}{\gpcep{a}}\)  as shorthand for \(\gpcep{a}\gpcrep{*}{\gpcep{a}}\).
Similar to union, repetition is limited to patterns that do not contain any node
variables. Additionally, patterns below a repetition must not match a path of length \(0\).

Of course, concatenation, union, and repetition can be nested.
To improve readability (round brackets are used in node patterns) we use square brackets for disambiguation.

\paragraph{Graph Patterns.}
Finally, \emph{graph patterns} are conjunctions of path patterns, associated with pairwise distinct \emph{path variables}.\footnote{In the original definition of GPC, path variables are optional, but here they are obligatory.
They also do not have to be distinct. We demand them to be distinct to avoid \say{hidden} path equality constraints.}
For example, the following graph pattern consists of two path patterns
\[
z_a = \gpcnp[x]{b} \gpcep{a} (y),\ z_c = (y) \gpcep{c}^{+} (x)
\]
\noindent with variables \(z_a\) and \(z_c\). It specifies that there are nodes \(x, y\) in the property graph such that \(x\) is labelled \pglabel{b}, there is an edge \((x,y)\) labelled \(a\), and a path from \(y\) to \(x\) on which every edge is labelled \(c\).
Let us point out that path and node variables are disjoint.%

\begin{example}\label{example:graph-pattern}
	Consider the following graph pattern where \pglabel{p}, \pglabel{w}, \pglabel{t}, \pglabel{d}, \pglabel{r} and \pglabel{i} are shorthands for the labels \pglabel{person}, \pglabel{works\_on}, \pglabel{task}, \pglabel{document}, \pglabel{references} and \pglabel{important}, respectively.
	\[
		z =
		\gpcnp[x]{p} \gpcep{w} \gpcnp{t}
		\gpcrep{+}{%
			\gpcep{r}\gpcnp{d}
		}
		\gpcnp[y]{d \(\land\) i}
	\]
	It specifies the topology from Example~\ref{example:intro-constraint-desc}:
	person \(x\) works on a task which (possibly indirectly) references an important document~\(y\).
	Note that all nodes on the subpath from the task to \(y\) must be labelled \(\pglabel{document}\), except for the first node, representing the task.
\end{example}

\paragraph{Semantics.}
A \emph{match} for a (graph) pattern \(\query\) in a property graph \(\graph\) is a mapping \(\match\) that maps the node and path variables of \(\query\) to nodes and paths in \(\graph\), respectively,  which constitutes an answer to~\(\query\) on~\(\graph\). We refer to the literature \cite{Francis2023a} for a formal definition.%
\begin{example}\label{example:match}
	A match for the pattern in Example~\ref{example:graph-pattern} in the property graph depicted in Figure~\ref{figure:example-pg-intro} is the mapping \(\match\) with \(\mu(x) = p_1\), \(\mu(y) = d_3\), and \(\mu(z) = \pgpath{p_1,w_1,t_1,r_1,d_1,r_3,d_3}\).
\end{example} %
\section{Constraints for Property Graphs}\label{section:constraints}
With the preliminaries in place, we are now ready to introduce the constraints we study in this paper.
We first explain the idea by means of an example, and then make the terms more precise.

\begin{example}\label{example:constraint-simple-natural}
	We consider the following constraint as a running example in this section:
		\say{%
			Every document has a greater access level than any other document referenced (directly or indirectly) by it, that has not been edited since \(2020\).%
		}
\end{example}

The core ingredient of a constraint is an RGPC graph pattern,
incorporating characteristics of established graph query languages
such as GQL and SQL/PGQ, notably recursion.
This allows us to express the topology described in Example~\ref{example:constraint-simple-natural}:
The constraint affects all pairs of documents, say \(x\) and \(y\), such that \(x\) references (directly or indirectly) \(y\).
The following RGPC pattern models this topology.
\[
	z = \gpcnp[x]{document} \left[\gpcep{references}\right]^+ \gpcnp[y]{document}
\]

The RGPC pattern is complemented with two more ingredients to yield a constraint.
Both ingredients can refer to node variables.

The first can be understood as an additional filter on the matches of the pattern.
For instance, the predicate \(x \neq y\) ensures that \(x\) and \(y\) in the above pattern are matched to different documents.
Furthermore, the constraint in Example~\ref{example:constraint-simple-natural} only affects documents \(y\) which have not been edited since \(2020\).
Assuming nodes labelled \pglabel{document} have a property for the last time it was edited, i.e.\ a timestamp, this can be ensured by the predicate \(\propof{y}{edited} \le \texttt{2020-12-31}\).
In general, it is possible to supply any number of predicates (or none), in which case all of them have to hold for a match to pass the filter.

The last ingredient also consists of predicates and states which conditions have to hold, for every match of the pattern that passes the filter.
For our example, the condition is the set consisting of the predicate \(\propof{x}{access\_level} \ge \propof{y}{access\_level}\).

Overall the constraint described in Example~\ref{example:constraint-simple-natural} can be written as follows using the shorthands from Example~\ref{example:graph-pattern}.
\begin{multline*}
	z = \gpcnp[x]{d} \left[\gpcep{r}\right]^+ \gpcnp[y]{d};
	\{x \neq y,~\propof{y}{edited} \le \texttt{2020-12-31}\}\\
	\cimp
	\{\propof{x}{access\_level} \ge \propof{y}{access\_level}\}
\end{multline*}

In general, the constraints we study in this paper have the shape
\( \query; F \cimp C\)
where \(\query\) is an RGPC graph pattern, and \(F\) and~\(C\) are sets of predicates modelling the filter and conditions of the constraint, respectively.
More precisely, we consider all comparison predicates of GQL \cite[Section~19.3]{GQLStandard}:
	\(x_i = y_j\), \(x_i \neq y_j\), \(\propof{x_i}{propA} \oplus \propof{y_j}{propB}\), and \(\propof{x_i}{propA} \oplus c\) where \(x_i\) and \(y_j\) are node variables that occur in~\(q\), \(\oplus\in\{=,\neq,\le,\ge,<,>\}\), and~\(c\) is a value from~\(\valueset\).

The semantics of these predicates are as usual, with the noteworthy trait that all mentioned properties have to exist.
For example, the filter and condition in the constraint above imply the existence of the \prop{edited} and \prop{access\_level} property, respectively.
A formal definition is available in \refapp{section:definition-literal-semantics}{Appendix~A.2}.
A match \(\match\) satisfies a set of predicates if it satisfies all predicates in the set.
A property graph \(\graph\) satisfies a constraint \(\query;F \cimp C\) if every match \(\match\) in \(\graph\) for \(\query\) satisfies \(C\) or not \(F\).

\begin{example}\label{example:constraints-intro}
	The following constraint \(\constraint\) is described informally in Example~\ref{example:intro-constraint-desc}.
	\begin{multline*}
		\underbrace{%
			z =
			\gpcnp[x]{p} \gpcep{w} \gpcnp[u]{t}
			\gpcrep{+}{%
				\gpcep{r}\gpcnp{d}
			}
			\gpcnp[y]{d \(\land\) i}%
		}_{\query};\\
			\{\propof{u}{start} \le \texttt{now}\}\vphantom{\big[}
		\cimp
			\{\propof{x}{access\_level} \ge \propof{y}{access\_level}\}
	\end{multline*}
	Here \(\query\) is the RGPC pattern explained in Example~\ref{example:graph-pattern}, except for the additional node variable \(u\).
	We use \say{\texttt{now}} as a place holder for the current timestamp.
	The property graph illustrated in Figure~\ref{figure:example-pg-intro} does \emph{not} satisfy the constraint, because the match in Example~\ref{example:match} satisfies the filter (the start date of the task is in the past), but not \( \{\propof{x}{access\_level} \ge \propof{y}{access\_level}\}\).
	Indeed, \(x\) is matched to \(p_1\) and \(y\) to \(d_3\) and we have
	\(\propof{p_1}{access\_level} = 6 \not\ge 7 = \propof{d_3}{access\_level}.\)
\end{example}

As discussed in the introduction, the constraint \(\sigma\) from Example~\ref{example:constraints-intro} can be expressed as the PG-Constraint in Figure~\ref{figure:example-pg-intro}: The RGPC pattern is translated into a GQL, ASCII-art-like pattern, and the filter and consequence are expressed as \lstinline|FILTER| clauses.
Analogously, all RGPC constraints can be written as PG-Constraints.

\begin{example}\label{example:denial-constraint}
	So far our examples were implication dependencies, a subclass of denial constraints.
	Using an unsatisfiable predicate like \(x\neq x\), which we simply denote as \(\texttt{false}\), we can also express denial constraints which do not fall into this subclass.
	For instance, the following denial constraint states that there are no cyclic references between documents.
	\[
		\gpcnp[x]{d}\gpcrep{+}{\gpcep{r}\gpcnp{d}}(x); \emptyset \cimp \{\texttt{false}\}
	\]
\end{example}
\section{Repairing Property Graphs}\label{section:cleaning}

	This section is dedicated to the presentation of our repair pipeline, which is depicted in Figure~\ref{figure:pipeline}.
	We start by introducing the notion of a \emph{repair of a property graph} in Section~\ref{section:cleaning:repairing}.
	In Section~\ref{section:repair-pipeline} we present the \emph{base pipeline}, that is, our pipeline without any of the optional steps it features.
	They are introduced in Sections~\ref{section:label-repairs} and~\ref{section:neighbourhood-repairs}.%

\subsection{Property Graph Repairs}\label{section:cleaning:repairing}
As discussed in the introduction, motivated by applications where changing or adding objects is too risky due to, e.g.\ privacy or security reasons, we repair property graphs by deleting objects -- i.e.\ nodes, edges, or labels -- from it.

\begin{example}\label{example:repair-intro}
	Consider the property graph \(\exgraph\) depicted in Figure~\ref{figure:example-pg-intro} and the constraint \(\constraint\) from Example~\ref{example:constraints-intro}.
	The property graph~\(\exgraph\) does \emph{not} satisfy the constraint, because \say{Alex} works on task \(t_1\) which indirectly references, via document \(d_1\), the important document \(d_3\).
	However, \say{Alex} has a strictly smaller access level than \(d_3\).

	To repair~\(\exgraph\), under the repair model of deletions, it suffices to delete one of the nodes \(p_1, t_1, d_1\), or \(d_3\), or one of the edges \(w_1\) or \(r_1\) between them.
	Note that deleting either \(r_3\) or \(r_4\) is \emph{not} sufficient (but deleting both is).
	Moreover, all of these options will change the \emph{topology} of~\(\exgraph\).

	Another option is to delete a label of a node or edge, for instance, here it would suffice to remove the label \lstinline|:important| from \(d_3\).
	This option does not change the topology of the graph at all.
\end{example}

To properly specify repairs, we will employ \emph{subgraphs}.
Intuitively, a subgraph is a property graph obtained by deleting nodes, edges, labels (or properties) from a given property graph.

\begin{definition}[Subgraph]
	Let \(\graph = \pgtuple\) be a property graph.
	A property graph \(\graph' = (\nodeset',\edgeset',\pgrelmap',\pglabelmap',\pgpropmap')\) is called a \emph{subgraph} of \(\graph\) if the following five conditions hold~: (i) \(\nodeset'\subseteq\nodeset\),
	(ii) \(\edgeset'\subseteq\edgeset\),
	(iii) \(\pgrelmap'(o') = \pgrelmap(o') \cap \left(\nodeset'\times\nodeset'\right)\) for all \(o'\in\edgeset'\),
	(iv) \(\pglabelmap'(o') \subseteq \pglabelmap(o')\) for all \(o'\in\nodeset'\cup\edgeset'\) and
    (v) If \(\pgpropmap'(o',k)\) is defined for a property key \(k\) and \(o'\in\nodeset'\cup\edgeset'\), then so is \(\pgpropmap(o', k)\) and \(\pgpropmap'(o',k) = \pgpropmap(o',k)\).
	\noindent
	That is, all nodes, edges, labels, and properties (as well as their values) of \(\graph'\) are also present in \(\graph\).
	If all nodes and edges in \(\graph'\) have exactly the same labels and property key/value pairs as in \(\graph\), i.e.\ if \(\pglabelmap'(o') = \pglabelmap(o')\) and \(\pgpropmap'(o',k) = \pgpropmap(o',k)\), for all \(o'\in\nodeset'\cup\edgeset'\) and property keys \(k\), then we call \(\graph'\) a \emph{topological subgraph} of \(\graph\).
	We have that \(\graph'\) is a \emph{proper subgraph} of \(\graph\), if \(\graph \neq \graph'\) holds.
\end{definition}

For example, removing the label \pglabel{:important} from node \(d_3\) in the property graph \(\exgraph\) depicted in Figure~\ref{figure:example-pg-intro} results in a proper subgraph, say~\(\exgraph^\star\).
The subgraph \(\exgraph^\star\) is \emph{not} topological because the node \(d_3\) is also present in the original graph \(\exgraph\) but the labels of \(d_3\) in \(\exgraph^\star\) are not the same as in \(\exgraph\).
Another example is the topological subgraph obtained from
\(\exgraph\)
by deleting the edges \(r_3\) and \(r_4\).

For being a \emph{repair}, a subgraph has to satisfy all constraints, and, in the spirit of previous work \cite[see, e.g.][Definition~2.2]{DBLP:journals/iandc/ChomickiM05}, it should be maximal, i.e. there should be no unnecessary deletions.
\begin{definition}[Repair]\label{definition:repair}
	Given a property graph \(\graph\) and a set \(\constraintset\) of constraints, a \emph{repair} of \(\graph\) is a subgraph \(\graph'\) of \(\graph\) that
	\begin{enumerate}[label={\alph*)}]
		\item satisfies \(\constraintset\); and\label{def:repair-satisfaction}
		\item is \emph{maximal}, that is, there is \emph{no} subgraph \(\graph''\) of \(\graph\) such that \(\graph'\) is a proper subgraph of \(\graph''\) and \(\graph''\) satisfies \(\constraintset\).\label{definition:repair-cond-b}\label{def:repair-maximality}
	\end{enumerate}
	When a subgraph \(\graph_a\) of \(\graph\) satisfies Condition~\ref{def:repair-satisfaction} but not necessarily Condition~\ref{def:repair-maximality}, we call \(\graph_a\) an \emph{approximate} repair.
	A \emph{topological (approximate) repair} is defined analogously to a(n approximate) repair, except that all subgraphs involved are topological subgraphs.
\end{definition}
\begin{example}[Continuation of Example~\ref{example:repair-intro}]\label{example:repair}
	Let \(\exgraph'\) be the subgraph of the property graph \(\exgraph\) obtained by deleting the edge \(r_1\) between \(t_1\) and \(d_1\).
	We observe that \(\exgraph'\) is a topological repair.
	But it is \emph{not} a repair, because the subgraph \(\exgraph''\) obtained from \(\exgraph\) by removing (just) the \lstinline|:references| label from \(r_1\) is also a repair, and \(\exgraph'\) is a subgraph of \(\exgraph''\).
	Thus, Condition~\ref{definition:repair-cond-b} of Definition~\ref{definition:repair} is violated.
	It is, however, an approximate repair, since it satisfies Condition~\ref{def:repair-satisfaction}.

	Similarly, removing the node \(d_3\) is not a repair, since it suffices to remove either the \lstinline|:document| or the \lstinline|:important| label.
	Observe that removing a node always implies removing all edges involving this node as well: otherwise, the result is not a well-defined property graph.
	Thus, the removal of \(d_3\) leads to the removal of the edges \(r_3\) and \(r_4\).
	But then the resulting subgraph is also \emph{not} a topological repair: it is possible to add back \(d_3\), as long as the edges are not added back.
	We discuss this effect in more detail later on.
\end{example}

\subsection{The Base Pipeline}\label{section:repair-pipeline}
We are now ready to present our repair pipeline, which is illustrated in Figure~\ref{figure:pipeline}.
It takes a set \(\constraintset\) of RGPC constraints and a property graph \(\graph\) as input, and has~\(6\) steps, the last of which carries out the deletion operations.
Two of these steps are optional, namely steps~\(2\) and~\(3\): they lead to label removals and a controllable trade-off between quality and runtime, respectively.
In the following, we first discuss the \emph{base pipeline}, consisting of steps~\(1\),~\(4\),~\(5\), and~\(6\).
\begin{figure*}
	\includegraphics[width=\textwidth]{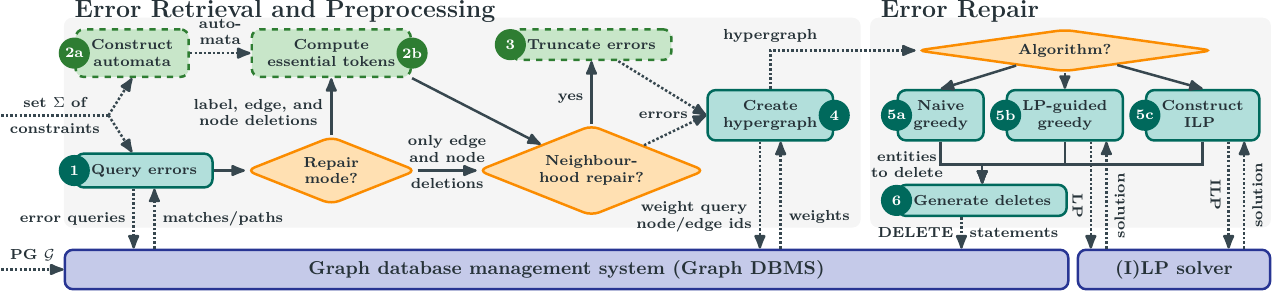}
	\caption{%
		Our repair pipeline for property graphs consists of 6 steps, where steps 2 and 3 are optional.
		2a and 2b are either both enabled or not, while 5a, 5b and 5c are alternatives.
		Solid edges indicate control flow, while dotted edges indicate communication.
	}
	\label{figure:pipeline}
\end{figure*}

\paragraph{Error Retrieval (Step~1)}
Intuitively, we understand an error as a set of objects witnessing that a constraint is not satisfied by the property graph.
Since all path patterns of an RGPC constraint are associated with a path variable, an error can be obtained from the range of a match which satisfies the filter but not the consequence of the constraint.

Consider again the constraint from Example~\ref{example:constraints-intro} and the property graph depicted in Figure~\ref{figure:example-pg-intro}.
The property graph does not satisfy the constraint, because there are two matches of the constraint's pattern which pass the filter and do not satisfy the condition of the constraint:
The first match corresponds to the path \(\pgpath{p_1,w_1,t_1,r_1,d_1,r_3,d_3}\) (cf.\ Example~\ref{example:match}), and the second one to the path \(\pgpath{p_1,w_1,t_1,r_1,d_1,r_4,d_3}\).
As illustrated in Figure~\ref{figure:errors-to-hg}~(\(1\)), (\(2\)), the sets of nodes and edges occurring on these paths, namely \(O_1 = \pgerr{p_1,w_1,t_1,r_1,d_1,r_3,d_3}\) and \(O_2 = \pgerr{p_1,w_1,t_1,r_1,d_1,r_4,d_3}\) constitute \emph{topological errors}.

\begin{definition}[Topological Error]
	Let \(\graph = \pgtuple\) be a property graph, \(\query; F \cimp C\) be a constraint with path variables \(z_1,\ldots,z_k\), and \(\match\) be a match for \(\query\) in \(\graph\) that witnesses \(\graph\) \emph{not} satisfying the constraint.
	The set \(O_\match\), consisting of all nodes and edges, $o \in \nodeset\cup\edgeset$, for which there is an $i$, \(1\le i\le k\), such that $o$ occurs in \(\match(z_i)\), is called a \emph{topological error} of the constraint in \(\graph\).
\end{definition}

Error detection by means of graph queries has already been introduced in previous work \cite{Angles2021}.
We put this approach into practice and delegate error detection and retrieval to a graph database system by rewriting every constraint into an \emph{error query}, which asks for all paths involved in an error.
For example, the constraint from Example~\ref{example:constraints-intro} is rewritten into the following GQL error query.
\begin{lstlisting}[numbers=none]
MATCH z = (x:person)-[:works_on]->(u:task)
          (-[:references]->(:document))+
          (y:document & important)
FILTER u.start IS NOT NULL AND u.start <= NOW()
AND (x.access_level < y.access_level
     OR x.access_level IS NULL OR y.access_level IS NULL)
RETURN z
\end{lstlisting}

\paragraph{Hypergraph Construction (Step~4)}
For repairing the property graph, the idea is to pick one object from each error and delete it.
Since every error corresponds to a match, deleting a single object from it suffices to eliminate the match; hence, the resulting property graph satisfies the constraint(s).
However, the outcome is not necessarily a repair.
For instance, selecting \(r_1\) and \(r_4\) from the errors \(O_1\) and \(O_2\) discussed above, respectively, does not yield a repair, because it is possible to add \(r_4\) back without violating the constraint.
This is because \(r_1\) occurs in both errors.
We already discussed in Example~\ref{example:repair} that removing nodes can have a similar effect.

To address this issue it helps to understand a collection of (topological) errors as a hypergraph, called \emph{conflict hypergraph}.
A \emph{hypergraph} is a pair \(\hgraph = (\hnodeset,\hedgeset)\) consisting of a set \(\hnodeset\) of vertices\footnote{We call them vertices to differentiate them from the nodes of property graphs.} and a set \(\hedgeset\subseteq 2^{\hnodeset}\) of hyperedges.
The conflict hypergraph for our running example is depicted in Figure~\ref{figure:errors-to-hg}~(\(3\)).
\begin{definition}[Topological Conflict Hypergraph]
	Given a property graph \(\graph = \pgtuple\), a set \(\constraintset\) of constraints,
	the \emph{topological conflict hypergraph} of \(\graph\) w.r.t.\ \(\constraintset\) is the hypergraph \(\hgraph = (\hnodeset, \hedgeset)\) whose edges are precisely all topological errors of all constraints in \(\constraintset\) in \(\graph\), and where
	\(\hnodeset =  \bigcup_{O\in \hedgeset} O\) is the union of all these edges.
\end{definition}

\begin{figure}
	\centering
	\includegraphics[width=.9\linewidth]{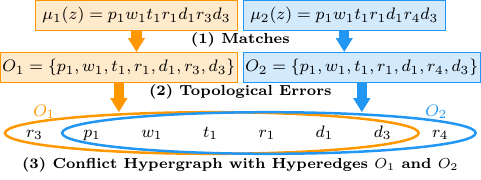}
	\caption{Different encodings of topological errors}\label{figure:errors-to-hg}
\end{figure}

\paragraph{Computing a Repair (Step~5c)}
	To repair a property graph, the idea is to select vertices from the conflict hypergraph with high-inbetweenness, i.e.\ vertices which participate in a high amount of errors.
	At the same time we want to select a minimal number of vertices.
This amounts to computing a minimal vertex cover:%

A \emph{vertex cover} of a hypergraph \(\hgraph = (\hnodeset,\hedgeset)\) is a set \(V\subseteq \hnodeset\) such that \(V\cap O \neq \emptyset\), for each \(O\in\hedgeset\).
A vertex cover \(V\) is minimal if there is no vertex cover \(V'\) with \(V'\subsetneq V\).

Intuitively, a minimal vertex cover tells us which objects to delete to obtain a repair.
For example, a vertex cover for the conflict hypergraph with hyperedges \(O_1\) and \(O_2\) from above is \(V = \{r_1\}\).
Trivially, no subset of \(V\) is a vertex cover, and removing \(r_1\) yields a repair, as discussed above.
The set \(V \cup \{r_4\} = \{r_1, r_4\}\) is not a minimal vertex cover, since it contains \(V\).
And indeed, removing \(r_1\) and \(r_4\) does not yield a repair for the same reason: \(r_4\) can be added (back).

However, this does not address the issue of node removal implying removal of incident edges.
For instance, \(\{d_3\}\) is a minimal vertex cover, but as discussed in Example~\ref{example:repair}, removing \(d_3\) implies the removal of \(r_3\) and  \(r_4\), and does not yield a repair.
	Therefore, the deletion of nodes should be avoided, unless it is necessary, e.g.\ because an error consists only of nodes.
We realize this by assigning a weight to each vertex of the conflict hypergraph \(\hgraph\).
The weight \(\weightof{n}\) of a node \(n\) of the property graph is the number of edges incident to \(n\) plus one.
The weight \(\weightof{e}\) of an edge \(e\) is simply \(1\).
We are then interested in a \emph{minimum weight vertex cover} \cite[see e.g.][]{Xiao2024} of~\(\hgraph\).
The weight of a vertex cover \(V\) is \(\weightof{V} = \sum_{v\in V}\weightof{v}\).
A \emph{minimum weight vertex cover}~\(V\) is a vertex cover whose weight~\(\weightof{V}\) is the minimum among all vertex covers (of \(\hgraph\)).
Note that, if all vertices have the same weight, minimum weight vertex covers coincide with minimum vertex covers\footnote{\(V\) is a minimum vertex cover, if there is no vertex cover \(V'\) with \(\abs{V'} < \abs{V}\).}, which is a stronger notion than that of a minimal vertex cover.
Each minimum vertex cover is in turn a minimal vertex cover (but not vice versa).
We have the following result, whose proof is available in \refapp{section:error-repairing-proofs}{Appendix~B.2}.

\begin{proposition}\label{result:topological-repair-ind-set}
	Given a minimum weight vertex cover \(V\), removing\footnote{Recall that removing a node implies removing all its incident edges in~\(G\).} all objects in \(V\) from \(\graph\) yields a topological repair of \(\graph\).
\end{proposition}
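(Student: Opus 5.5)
The plan is to verify the two conditions of Definition~\ref{definition:repair}, restricted to topological subgraphs, for the graph \(\graph'\) obtained from \(\graph\) by deleting \(V\). Let \(R\) be the set of objects actually removed, i.e.\ \(V\) together with all edges incident to nodes in \(V\). Then \(\graph'\) is a topological subgraph, since labels and properties of surviving objects are untouched.

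\emph{Satisfaction.} I first show that topological subgraphs can only lose violations, never gain them. Every path of \(\graph'\) is a path of \(\graph\) with the same trace, because labels are unchanged. So every match \(\match\) of an RGPC pattern \(\query\) in \(\graph'\) is also a match in \(\graph\); this holds even though label expressions may use \(\neg\), precisely because labels are unchanged. Moreover \(F\) and \(C\) only read properties of matched nodes, and these properties are identical in \(\graph\) and \(\graph'\). Hence a violating match in \(\graph'\) would be a violating match in \(\graph\). Its topological error \(O_\match\) is then a hyperedge of \(\hgraph\), so \(V\cap O_\match\neq\emptyset\). But then some object of \(O_\match\) is absent from \(\graph'\), which is a contradiction.

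\emph{Maximality.} Suppose \(\graph''\) is a topological subgraph of \(\graph\) satisfying \(\constraintset\), with \(\graph'\) a proper subgraph of \(\graph''\). Let \(D''\) be the set of nodes and edges of \(\graph\) missing from \(\graph''\). Then \(D''\subsetneq R\), and \(D''\) is closed under adding edges incident to its nodes, since \(\graph''\) is a well-defined property graph. The argument from the satisfaction step, applied to \(\graph''\), shows \(O\cap D''\neq\emptyset\) for every hyperedge \(O\). Indeed, otherwise the whole match survives in \(\graph''\) and still violates the constraint.

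From \(D''\) I will build a cheaper cover
\[
U = (V\cap D'') \cup \{e\in D''\cap\hnodeset : e \text{ is an edge incident to a node } n\in V\setminus D''\}.
\]
To see that \(U\) is a cover, take \(x\in O\cap D''\). If \(x\in V\), then \(x\in U\). Otherwise \(x\in R\setminus V\), so \(x\) is an edge incident to some node \(n\in V\). If \(n\notin D''\), then \(x\in U\). If \(n\in D''\), then \(n\in O\) because endpoints of edges on matched paths lie in the error, so \(n\in V\cap D''\subseteq U\). In all cases \(O\cap U\neq\emptyset\).

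For the weights, each node \(n\in V\setminus D''\) is replaced by at most \(\deg(n) < \weightof{n}\) edges of weight \(1\). Each edge in \(V\setminus D''\) is simply dropped. Hence \(\weightof{U}\le \weightof{V}\), with strict inequality as soon as \(V\setminus D''\neq\emptyset\). This set is indeed nonempty: if \(V\subseteq D''\), then closure of \(D''\) under incident edges gives \(R\subseteq D''\), contradicting \(D''\subsetneq R\). So \(\weightof{U}<\weightof{V}\), contradicting the minimality of \(V\).

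The main obstacle is this last weight comparison. The naive choice \(D''\cap\hnodeset\) can overcount, because an edge shared between two deleted nodes is counted in both node weights. The construction of \(U\) avoids this by charging deleted edges only to nodes of \(V\) that were kept in \(\graph''\).
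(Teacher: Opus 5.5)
Your proof is correct. The satisfaction part is the paper's argument: a violating match in the topological subgraph \(G'\) would be a violating match in \(G\) whose error is disjoint from \(V\). You add the useful remark that negated label expressions are harmless because labels are untouched.

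For maximality you take a genuinely different, global route. The paper fixes one object \(o\) that lies in \(G''\) but not in \(G'\). If \(o\notin V\), it passes to an endpoint of \(o\) in \(V\). If \(o\in V\), it shows that every error containing \(o\) still meets \(V\setminus\{o\}\), so \(V\setminus\{o\}\) is a lighter cover. The subcase where the only missing witness is an edge incident to \(o\) itself is dismissed by a short weight remark, which implicitly swaps \(o\) for those edges.

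You instead turn the entire deletion set \(D''\) of \(G''\) into the cover \(U\). This handles both cases in one construction: the case \(o\notin V\) becomes your closure argument for \(V\setminus D''\neq\emptyset\). It also makes the swap explicit and gives the quantitative bound \(w(U)\le w(V)-\lvert V\setminus D''\rvert\). The cost is the double-counting issue, which you correctly avoid by charging deleted edges only to nodes of \(V\) that \(G''\) keeps.

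The paper's object-by-object case split has its own advantage: it is the template reused for Proposition~\ref{result:topological-repair-greedy}. There the greedy cover is not of minimum weight, so a global exchange argument like yours would not apply.
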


The following integer linear program (ILP) with integer variables \(x_v\), for each vertex \(v\in\hnodeset\), encodes the problem of finding a minimum weight vertex cover.
\begin{align*}
	\textnormal{minimize } \sum_{v\in \hnodeset}\weightof{v}\cdot x_v
	&&\textnormal{subject to } \sum_{v\in O} x_v \ge 1 \; \textnormal{ for all } O\in\hedgeset\\
	&&\textnormal{and } x_v \in \{0, 1\} \; \textnormal{ for all } v\in\hnodeset
\end{align*}
Given a solution for this ILP, that is, a function that assigns all variables \(x_v\) into \(\{0,1\}\), corresponds to a minimum weight vertex cover \(V\): It consists of all vertices \(v\) for which \(x_v\) is mapped to \(1\).

Overall, a repair can be computed by rewriting every constraint into an error query, querying for errors, and then solving the ILP above.
Since there are at most exponentially many hyperedges and integer linear programming is in \(\NP\), the complete procedure runs in non-deterministic exponential time (data complexity).

\paragraph{Greedy Repairs (Steps~5a and~5b).}
Despite ILP solvers being reasonably fast in practice, it can be desirable to trade the quality of a repair -- in our case, the number of objects deleted -- for a better runtime.
A common approach to achieve this are so called greedy algorithms, of which we consider two.

The first we call \emph{naive greedy algorithm}.
It computes a vertex cover \(V_g\) with a small weight in two phases: the \emph{selection} and the \emph{trimming} phase.
In the selection phase the algorithm iterates over all hyperedges to select candidates for the vertex cover: If the current hyperedge contains a vertex, which has already been selected and has minimal weight among the vertices in the hyperedge, the hyperedge is skipped.
Otherwise, the algorithm selects an arbitrary vertex from the hyperedge with minimal weight among the vertices in the hyperedge.
The selected vertices form a vertex cover \(V_g\), because every hyperedge contains at least one selected vertex.

However, the algorithm may end up selecting too many vertices, and thus \(V_g\) might not be \emph{minimal}.
Thus, this phase yields an approximate repair.
To obtain a repair, the algorithm attempts, in the trimming phase, to reduce the number of selected vertices again.
It does this by checking, for every selected vertex \(v\), ordered by weight in descending order, whether there is a hyperedge \(E\) for which \(v\) is the only selected vertex, i.e.\ for which \(V_g \cap E = \{v\}\) holds.
If there is \emph{no} such hyperedge \(E\), then \(v\) has to be removed from~\(V_g\).

Consider again the two errors \(O_1 = \pgerr{p_1,w_1,t_1,r_1,d_1,r_3,d_3}\) and \(O_2 = \pgerr{p_1,w_1,t_1,r_1,d_1,r_4,d_3}.\)
The naive greedy algorithm might compute \(V_g = \{r_3, r_4\}\) in the first phase.
In the second phase, \(V_g\) is not changed, because \(r_3\) and \(r_4\) do not occur in \(O_2\) and \(O_1\), respectively.
Thus, the vertex cover \(V_g\) is minimal because neither \(r_3\) nor \(r_4\) can be removed.
However, it is not a minimum weight vertex cover, since, e.g., the minimum vertex cover \(\{r_1\}\) has a smaller weight.\footnote{It is also possible that the naive greedy algorithm picks \(r_1\) instead of \(r_3\) and \(r_4\) in which case it would yield the minimum weight vertex cover \(\{r_1\}\).}
While the naive greedy algorithm does not necessarily compute a minimum weight vertex cover, picking vertices with minimal weight is sufficient to compute a (topological) repair.
Since the algorithm iterates over all errors, it runs in exponential time.

\begin{proposition}\label{result:topological-repair-greedy}
	Given a minimal vertex cover \(V_g\) computed by the naive greedy algorithm, removing all objects in \(V_g\) from \(\graph\) yields a topological repair of \(\graph\).
\end{proposition}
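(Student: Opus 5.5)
Write \(G_g\) for the topological subgraph obtained from \(\graph\) by deleting every object of \(V_g\) together with all edges incident to deleted nodes. Two things have to be shown: \(G_g\) satisfies \(\constraintset\), and \(G_g\) is maximal among topological subgraphs satisfying \(\constraintset\).

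\emph{Satisfaction.} Take a match \(\match\) in \(G_g\) that witnesses a violation of some constraint \(\query;F\cimp C\). Predicates only refer to node variables and property values, and a topological subgraph keeps all labels and properties of its surviving objects. Hence \(\match\) is also a match in \(\graph\), it satisfies \(F\), and it still fails \(C\). So its object set \(O_\match\) is a topological error in \(\graph\), that is, a hyperedge of \(\hgraph\). Since \(V_g\) is a vertex cover, \(O_\match\) contains some \(v\in V_g\). But \(v\) was deleted, so the paths \(\match(z_i)\) cannot all lie in \(G_g\), a contradiction.

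\emph{Maximality.} Suppose some topological subgraph \(G''\) of \(\graph\) satisfies \(\constraintset\) and has \(G_g\) as a proper subgraph. Then \(G''\) contains some object \(o\) missing from \(G_g\). We split into two cases.
\begin{enumerate}[label={\roman*)}]
\item \(o\in V_g\). By the trimming phase, minimality of \(V_g\) gives a hyperedge \(E\) with \(V_g\cap E=\{o\}\).
\item \(o\notin V_g\). Then \(o\) is an edge incident to a deleted node \(n\in V_g\). Since \(G''\) is a property graph containing \(o\), it also contains \(n\). Apply case~i) to \(n\) instead.
\end{enumerate}
So in either case there is \(v\in V_g\) in \(G''\) and a hyperedge \(E\) with \(V_g\cap E=\{v\}\). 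We then want to show that all of \(E\) lies in \(G''\). If so, the match behind \(E\) is a match in \(G''\) (labels and properties are unchanged), and it violates the constraint, contradicting that \(G''\) satisfies \(\constraintset\).

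\emph{Main obstacle.} The hard step is the claim that \(E\) lies inside \(G''\). Every object of \(E\setminus\{v\}\) is outside \(V_g\), but it may still have been removed from \(G_g\): it could be an edge incident to some other selected node \(n'\in V_g\) with \(n'\notin E\), and then \(G''\) need not contain it.

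This is where the minimal-weight selection rule of the greedy algorithm enters. The plan is to choose \(v\) and \(E\) more carefully.
\begin{itemize}
\item Among the objects in \(G''\) but not in \(G_g\), prefer edges over nodes. If an edge \(o\in V_g\) is available, use it as \(v\).
\item An edge \(e\in E\) that is incident to a selected node \(n'\notin E\) is itself cheaper than \(n'\), since \(\weightof{e}=1<\weightof{n'}\).
\item Argue that any hyperedge containing such an \(e\) that caused \(n'\) to be selected would instead have had a lighter vertex selected. This needs a hyperedge through \(n'\) with path-incident edges, which holds because a matched path through \(n'\) of positive length contains its incident edges there.
\end{itemize}
The conclusion should be that selected nodes occur only in hyperedges made up of nodes alone, namely length-\(0\) path matches. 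Then deleting them removes no edge that lies in a hyperedge, and so \(E\subseteq G''\).

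The delicate part is the case where the selected hyperedge contains both the node and its incident edges. There I would use the selection invariant: the chosen vertex has minimum weight within its hyperedge, and the trimming phase keeps this invariant for at least one witnessing hyperedge.
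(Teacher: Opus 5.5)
Your satisfaction argument and the case split for maximality (including the reduction of case ii) to case i)) are correct and match the paper. The gap is in the step showing that the witness \(E\) lies inside \(G''\). Your plan for that step rests on a false claim, while the case that actually needs work is only gestured at.

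\emph{The obstacle you single out cannot occur.} A topological error is the set of objects on the paths \(\mu(z_i)\), so every edge in \(E\) comes with both of its endpoints. Suppose some \(e\in E\setminus\{v\}\) had been deleted as an edge incident to a selected node \(n'\). Then \(n'\in E\cap V_g=\{v\}\), so \(n'=v\). The only genuine case is therefore: \(v\) is a node and \(E\) contains an edge incident to \(v\).

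\emph{The claim your plan aims at is false.} You want to show that selected nodes occur only in hyperedges made up of nodes alone, so that deleting them removes no edge lying in a hyperedge. Take a graph whose only edge is \(e\) from \(n\) to \(m\), with constraints whose errors are \(H_1=\{n\}\) (a length-\(0\) match) and \(H_2=\{n,e,m\}\). The greedy selects \(n\) for \(H_1\), and \(e\) for \(H_2\) because \(w(e)=1<2=w(n)\). Trimming keeps \(n\) (witness \(H_1\)) and drops \(e\). So \(n\in V_g\) lies in \(H_2\), and deleting \(n\) removes \(e\in H_2\). The proposition still holds here, but only because one suitable witness per selected vertex suffices.

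\emph{What is missing} is the statement you mention in your last sentence without proof. Every \(v\) that survives trimming has a hyperedge \(E\) with \(V_g\cap E=\{v\}\) in which \(v\) has minimum weight. This follows from the descending-weight order of the trimming phase:
\begin{enumerate}[label={\arabic*.}]
\item Let \(E\) satisfy \(V_g\cap E=\{v\}\) at the moment \(v\) is examined. This persists, since afterwards only vertices other than \(v\) are removed.
\item By the selection rule, \(E\) contains a vertex \(s\), selected in the first phase, with \(w(s)=\min_{u\in E}w(u)\).
\item Either \(s=v\), or \(s\) was trimmed before \(v\) was examined, in which case \(w(s)\ge w(v)\). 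In both cases \(w(v)=\min_{u\in E}w(u)\).
\item Now suppose \(v\) is a node and \(E\) contains an edge \(e\) incident to \(v\). Then \(v\) has at least one incident edge, so \(w(v)\ge 2>1=w(e)\), contradicting minimality.
\end{enumerate}
Combined with the endpoint observation, this gives \(E\setminus\{v\}\subseteq G_g\subseteq G''\), hence \(E\subseteq G''\). This is exactly the paper's argument. Your heuristics of preferring edges as \(v\), and the analysis of hyperedges that caused \(n'\) to be selected, are unnecessary.
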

A proof is available in \refapp{section:error-repairing-proofs}{Appendix~B.2}.

As for the second greedy algorithm, we consider an \emph{LP-guided greedy algorithm} which combines a solution for the \emph{LP-relaxation} of the ILP with the naive greedy algorithm.
Here the LP-relaxation is the linear program (LP) obtained from the ILP by replacing the conditions \(x_v\in\{0,1\}\) with inequalities \(0\le x_v \le 1\) and allowing the \(x_v\) to assume real values.
It is well-known that LPs can be solved in polynomial time, unlike ILPs.
The algorithm then selects any vertex \(v\) for which \(x_v > 0\) holds as a candidate, and then proceeds as the naive greedy algorithm.

\subsection{Optional Step~2: Label Deletions}\label{section:label-repairs}%
We continue with the description of optional Step~\(2\) in Figure~\ref{figure:pipeline} of our repair pipeline, which allows us to obtain non-topological repairs by deleting labels (in addition to nodes and edges).

Our repair pipeline -- whether the ILP or a greedy algorithm is used -- deletes nodes only if errors contain isolated nodes (aka paths of length \(0\)).
Otherwise, an edge is deleted, since edges have minimal weight \(1\).
This can lead to the presence of (potentially many) isolated nodes in a repair, when all incident edges of a node are deleted.
To avoid this case, we study repairs which are not necessarily topological.
More precisely, we extend our approach to include the removal of labels to obtain a repair, as discussed in Example~\ref{example:repair-intro}.
Instead of deleting (many) incident edges of a node, it can suffice to remove a single label of a node.
Note that it is not always possible to obtain a repair by only removing labels, i.e., if a constraint does not refer to any label.
Thus, it may still be necessary to delete nodes and/or edges, in addition to deleting labels.

The idea is to extend topological errors \(O_\match\), which initially contain the nodes and edges of the property graph occurring in the co-domain of a match \(\match\), by pairs \((o,\ell)\) of objects \(o\), i.e.\ a node or an edge, and labels \(\ell\).
We call these pairs \((o,\ell)\) \emph{tokens}.
The intended meaning is that the error can be repaired by removing the label \(\ell\) from node (or edge) \(o\).
The edges of the (not necessarily topological) conflict hypergraph are then all extended sets -- which we will simply call \emph{errors} in the following.
We note that, in general, there are multiple extensions of a single topological error.
In particular, there are more errors than topological errors.

In the following we discuss how to compute errors given a match.
We start with the simple case of a single constraint with a single path pattern.
Consider once again the constraint from Example~\ref{example:constraints-intro}.
Recall that it consists of one path pattern with path variable \(z\).
We can observe that removing the \(\pglabel{person}\) label from the node \(p_1\), or removing the \(\pglabel{works\_on}\) from the edge \(w_1\) yields a repair.
Indeed, every path matching the pattern of the constraint has to start with a node labelled \(\pglabel{person}\) followed by an edge labelled \(\pglabel{works\_on}\), and \(p_1\) and \(w_1\) are the only objects with these labels.
Thus, \((p_1,\pglabel{person})\) is a token we are looking for.
The same is true for \((w_1,\pglabel{works\_on})\).
We say that these tokens are \emph{essential}.

\paragraph{Automata Construction (Step~2a).}
To compute essential tokens automatically for arbitrary RGPC path patterns, we construct automata,
which we call \emph{RGPC automata} and introduce next, by means of an example.
Consider the path pattern
\[
	\gpcnp[x]{p} \gpcep{w} \gpcnp[u]{t}
	\gpcrep{+}{%
		\gpcep{r}\gpcnp{d}
	}
	\gpcnp[y]{d \(\land\) i}
\]
from Example~\ref{example:constraints-intro}.
The RGPC automaton for this path pattern is illustrated in Figure~\ref{figure:example-automaton}.
\begin{figure}%
	\includegraphics[width=\columnwidth, trim=.25cm 0cm 0cm .1cm, clip]{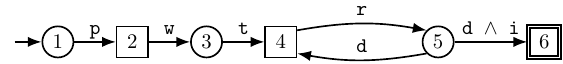}%
	\caption{%
		Automaton for the RGPC pattern from Example~\ref{example:constraints-intro}%
	}\label{figure:example-automaton}%
\end{figure}
Note that the transitions are labelled with label expressions, i.e.\ propositional formula connecting labels -- instead of just (single) labels.
For instance, the transition from state~\(5\) to~\(6\) is labelled with \(\pglabel{d}~\land~\pglabel{i}\), which is the label expression of the node pattern \(\gpcnp[y]{d \(\land\) i}\).
Thus, RGPC automata belong to the family of so called \emph{symbolic} automata \cite[see, e.g.,][]{Cimatti2011}.
A transition can be taken if the label expression is satisfied by the \emph{set} of labels of a node or edge.
For instance, the transition with the formula \texttt{d~\(\land\)~i} actually represents several transitions:  one for each set of labels that contains~\(\pglabel{d}\) and~\(\pglabel{i}\).
Notably, the transition from State~\(5\) to State~\(4\) can also be taken if \(\pglabel{d}\) and \(\pglabel{i}\) are both present.
This allows the automaton to read traces of paths, e.g.\ the trace \(
	\{\pglabel{p}\}
	\{\pglabel{w}\}
	\{\pglabel{t}\}
	\{\pglabel{r}\}
	\{\pglabel{d},\pglabel{i}\}
	\{\pglabel{r}\}
	\{\pglabel{d},\pglabel{i}\}
\) of the path \(\pgpath{p_1,w_1,t_1,r_1,d_1,r_3,d_3}\) from the end of Section~\ref{section:property-graphs}.

Furthermore, unlike classical automata, the RGPC automaton illustrated in Figure~\ref{figure:example-automaton} has two kinds of states: Those drawn as a circle and those drawn as a square, indicating that the next transition reads the label(s) of a node or the label(s) of an edge, respectively.
For instance, the outgoing transition of State~\(1\) originates from the node pattern \(\gpcnp{p}\), and the outgoing transition of State~\(2\) from the edge pattern \(\gpcep{w}\).
In particular, the RGPC automaton reads node and edge labels in alternation, and can thus read the trace of a path.

Finally, note that variables of the path pattern do not play a role for the automaton construction: We use RGPC automata to rescan paths returned during the error retrieval step (Step~1).
We thus already know that such a path is a match, and we are only interested in identifying labels for deletions.

The formal RGPC automata construction is available in \refapp{section:automata-construction}{Appendix~B.1}.%

\begin{algorithm}\small
	\caption{Computing Errors with Essential Tokens for a Single Path}\label{algo:step-3-single-path}
	\SetKw{Yield}{yield}
	\KwIn{Match \(\match\), path pattern \(z = \mathcal{P}\), \(\match(z) = v_0e_1v_1\ldots e_nv_n\)}
	\KwOut{Set of errors \(\mathcal{E}_\mu\) with essential tokens}

	\(\autA \gets\) automaton for \(\mathcal{P}\)\;
	\(O_\match \gets \{v_0,e_1,v_1,\ldots,e_n,v_n\}\) \tcp*{topological error}
	\(\mathcal{E}_\mu \gets \emptyset \) \tcp*{set of errors}
	\ForEach{accepting run \(r\) of \(\mathcal{A}\) on path \(\match(z)\)}%
	{%
		\(O_r \gets \emptyset\)\;
		\ForEach{o \(\in O_\match\) and label \(\ell\)}%
		{
			\If{removing \(\ell\) from \(o\) invalidates \(r\)}%
			{
				\(O_r \gets O_r \cup \{(o, r)\}\) \tcp*{\((o, \ell)\) is essential}
			}
		}
		\(\mathcal{O}_{\match, r} \gets O_{\match} \cup O_r\);
		\(\mathcal{E}_\mu \gets \mathcal{E}_\mu \cup \{\mathcal{O}_{\match, r}\}\)\;
	}
	\Return{\(\mathcal{E}_\mu\)}
\end{algorithm}

\paragraph{Computing Essential Tokens (Step~2b)}
Our algorithm for computing errors for a single path pattern \(\mathcal{P}\) is outlined in Algorithm~\ref{algo:step-3-single-path}.
It uses the automata constructed in Step~2a to identify essential tokens, as follows:
Let \(z\) be the path variable of \(\mathcal{P}\) and \(\autA\) be the automaton constructed for it, and \(\match\) be a match of \(\mathcal{P}\).
Then \(\autA\) accepts the trace of the path \(\match(z)\).
To obtain essential tokens, we consider every accepting run \(r\) of \(\autA\) on the trace of \(\match(z)\): if removing a label \(\ell\) from an object \(o\) on the path \(\match(z)\) invalidates the run, it is \emph{essential for this run}.
For example, the trace \(
\{\pglabel{p}\}
\{\pglabel{w}\}
\{\pglabel{t}\}
\{\pglabel{r}\}
\{\pglabel{d},\pglabel{i}\}
\{\pglabel{r}\}
\{\pglabel{d},\pglabel{i}\}
\) is accepted by the automaton depicted in Figure~\ref{figure:example-automaton}.
Removing the label \(\pglabel{i}\) from the last set in the trace invalidates all accepting runs, because the automaton can no longer take the transition from state~\(5\) to~\(6\) after reading the second \(\{\pglabel{r}\}\) (and if the transition is taken earlier the automaton cannot read the whole trace).

For invalidating the match \(\match\), all accepting runs have to be invalidated.
Thus, for each accepting run \(r\), we extend \(O_\match\) to the error \(O_{\match, r}\) by adding all tokens which are essential for \(r\).
Note that, if the pattern does not contain any labels, there are no essential tokens for any run.
In that case, the original error \(O_\match\) is kept as is.

Suppose now the constraint consists of a graph pattern \(q\) with \(k\) path patterns, with path variables \(z_1,\ldots,z_k\).
Let \(\match\) be a match witnessing that the constraint is not satisfied.
Let further \(\autA_i\) be the automaton for pattern \(i\in\range{k}\),
and \(r_1,\ldots,r_k\) be accepting runs of \(\autA_1,\ldots,\autA_k\) on the traces of the paths \(\match(z_1), \ldots, \match(z_k)\).
While it suffices to invalidate one of these runs, to invalidate \emph{the combination} \(r_1,\ldots,r_k\), we observe that there might be other combinations involving the same runs:
Say we invalidated an accepting run \(r_j\) by deleting some label.
Then the runs \(r_1,\ldots,r_j,\ldots, r_k\) do no longer witness that \(\match\) is a match.
However, there might be another run \(r_j'\) such that \(r_1,\ldots,r_j',\ldots, r_k\) still witness that \(\match\) is a match.

We thus create an error for every combination \(r_1,\ldots,r_k\) of accepting runs.
More precisely, we condense the sets \(O_{\match, {r_1}},\ldots,O_{\match, {r_k}}\) into a single error \(O_{\match, {r_1},\ldots, {r_k}} = O_{\match, {r_1}} \cup \dots \cup O_{\match, {r_k}}\).
The sets \(O_{\match, {r_1},\ldots, {r_k}}\), for each match \(\match\) and all combinations of runs \(r_1,\ldots, r_k\), are then the hyperedges of the \emph{conflict hypergraph}.
This way all combinations have to be invalidated by removing (at least) one object.

Since removing a node or edge implies the removal of all its labels, we adapt the weights accordingly:
For an edge, the weight is the number of its labels plus \(1\), and for a node the sum of the number of its labels, the weights of its incident edges, and \(1\) (for itself).
A token \((o,\ell)\) represents a single label, and thus has weight~\(1\).

We emphasize that Proposition~\ref{result:topological-repair-ind-set} does \emph{not} carry over, since the formulas occurring in an RGPC pattern can contain negations:
Thus, deleting labels can lead to new matches, and hence, to new errors.
We call an RGPC constraint \emph{positive} if \emph{no} label expression in its pattern contains any negation.
We then have the following. %

\begin{proposition}
	A repair with label removals can be computed in non-deterministic exponential time (data complexity), if the given RGPC constraints are positive.
\end{proposition}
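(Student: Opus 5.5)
The plan is to follow the proof of Proposition~\ref{result:topological-repair-ind-set}, with the conflict hypergraph now consisting of the extended errors \(O_{\match,r_1,\ldots,r_k}\), i.e.\ nodes, edges and tokens. Positivity will guarantee that deletions never create new errors.

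\emph{Step 1: bounding the hypergraph.} The error query yields at most exponentially many matches in \(\abs{\graph}\). For a fixed match, each path \(\match(z_i)\) has length at most exponential, since we enumerate the paths returned by the query. The automaton \(\autA_i\) has constant size in data complexity, so there are at most exponentially many accepting runs, and hence exponentially many run combinations per match. Each error contains polynomially many objects and tokens in the size of the path. The whole hypergraph, together with the ILP with the adapted weights, is therefore constructible in exponential time. Solving the ILP nondeterministically (guess an assignment, verify it) gives a minimum weight vertex cover \(V\) in nondeterministic exponential time.

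\emph{Step 2: satisfaction.} Let \(\graph'\) be obtained by deleting the objects in \(V\); a token \((o,\ell)\) means deleting label \(\ell\) from \(o\). The key lemma is monotonicity: for positive constraints, every match \(\match'\) in a subgraph \(\graph'\) of \(\graph\) is also a match in \(\graph\), and every accepting run of \(\autA_i\) on \(\match'(z_i)\) in \(\graph'\) is an accepting run in \(\graph\). This holds because negation-free label expressions are monotone in the label set, and \(\graph'\) has fewer labels, nodes and edges. Filter and condition predicates depend only on properties, which are untouched on surviving objects. So a violating match in \(\graph'\) is a violating match in \(\graph\), and its runs form a combination with a hyperedge \(O\). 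The cover \(V\) meets \(O\): either some node or edge of the match is deleted, or some essential token for some \(r_i\) is deleted, which invalidates \(r_i\) in \(\graph'\). Either way we reach a contradiction.

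\emph{Step 3: maximality (the main obstacle).} Suppose \(\graph'\subsetneq\graph''\subseteq\graph\) and \(\graph''\) satisfies \(\Sigma\). I would try to construct a vertex cover \(V''\) from the difference between \(\graph\) and \(\graph''\), and show \(\weightof{V''}<\weightof{V}\), which contradicts minimality. I would define \(V''\) as the objects of \(V\) that are still deleted in \(\graph''\). A node still deleted in \(\graph''\) stays in \(V''\). A deleted edge or label whose removal is forced in \(\graph''\) by a node deletion is charged to that node.

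To see that \(V''\) is a cover, take a hyperedge \(O\). If \(\graph''\) preserved all of \(O\)'s nodes, edges and essential tokens, the match together with its run combination would survive in \(\graph''\) as a violating match. This uses that runs only need the labels they read, again by positivity, contradicting that \(\graph''\) satisfies \(\Sigma\). Strictness of the weight uses the weights: a node's weight is at least the total weight of what its deletion entails, and an edge's weight is at least its labels plus one.

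The delicate part is the accounting when \(\graph''\) restores a node or edge of \(V\) but not all of its labels. Here \(V''\) must swap the object for some of its tokens. I would first try to show that only labels which are essential tokens in some hyperedge need to remain deleted, so that this swap strictly decreases weight. If that fails, I would weaken the argument: drop restored objects from \(V\) directly and argue the result is still a cover contained in \(V\) with smaller weight.
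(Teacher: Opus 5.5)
\textbf{The gap is in Step~3, and better accounting cannot close it.} Your Steps~1--2 follow the route the paper's remarks suggest. The paper gives no separate proof; it only says that Proposition~\ref{result:topological-repair-ind-set} should carry over once negation is excluded, together with the count of hyperedges and ILP being in \(\NP\). Your monotonicity argument for satisfaction is sound.

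You claim that a hyperedge whose nodes, edges and essential tokens all survive in \(G''\) still yields a violating match. This is false. A token is essential only if deleting that single label kills the run. A positive formula with a disjunction, or parallel transitions produced by the union construction, can be falsified by deleting several labels none of which is essential. Consequently a minimum weight cover need not yield a repair at all.

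Here is a counterexample. Take nodes \(o,p\), an edge \(e\) from \(o\) to \(p\), \(\lambda(o)=\{a,b\}\), \(\lambda(e)=\{c\}\), \(\lambda(p)=\emptyset\), and the positive constraints \((x{:}\,a);\emptyset\cimp\{\texttt{false}\}\), \((x{:}\,b);\emptyset\cimp\{\texttt{false}\}\) and \((x{:}\,a\lor b)\xrightarrow{c}(y);\emptyset\cimp\{\texttt{false}\}\).
The hyperedges are \(\{o,(o,a)\}\), \(\{o,(o,b)\}\) and \(\{o,e,p,(e,c)\}\); neither \((o,a)\) nor \((o,b)\) is essential for the third run.
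With \(\weightof{o}=5\), \(\weightof{p}=3\), \(\weightof{e}=2\), the unique minimum weight cover is \(\{(o,a),(o,b),(e,c)\}\).
Putting \(c\) back on \(e\) still satisfies all three constraints, so the result is not maximal.
Your fallback of dropping the restored element from \(V\) leaves \(\{o,e,p,(e,c)\}\) uncovered.

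There is a second, independent problem. Under the paper's subgraph definition, \(G''\) may reinstate a deleted node without the property its filter needs. Take \((x);\{\propof{x}{k}=1\}\cimp\{\texttt{false}\}\) on an unlabelled node with \(k=1\). The pipeline deletes the node, but reinstating it without \(k\) satisfies the constraint. This also defeats your claim that the match ``survives as a violating match''.

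\textbf{How to rescue the proof.} The statement survives because it only asks for \emph{some} repair. Your Step~2 lemma actually shows more. Restrict to subgraphs that delete only nodes, edges and labels and keep the properties of surviving objects. For positive constraints, satisfying \(\constraintset\) is then closed downward among these subgraphs.

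So after Step~2, re-add deleted objects one at a time (a node, an edge whose endpoints are present, or a label on a present object) whenever \(\constraintset\) stays satisfied. When no single addition is possible, the graph is maximal. Any strictly larger satisfying \(G''\) would contain a one-object extension of it, and that extension satisfies \(\constraintset\) by downward closure.

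Each check is polynomial in data complexity: enumerate the polynomially many node-variable assignments, then solve a regular reachability problem per path pattern. In fact, starting from the empty graph already works, so the cover affects only the quality of the repair, not whether it is one. If property deletions count, as the literal subgraph definition allows, exhaustive search over the exponentially many subgraphs yields a repair in exponential time, even without positivity.

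\textbf{A smaller issue in Step~1.} RGPC has no restrictors, so there can be infinitely many matches with unboundedly long paths. Exponentially long paths would also allow doubly exponentially many runs, so your count does not go through. Instead, bound the number of distinct hyperedges: they are subsets of a polynomial-size set of objects and tokens. You can generate them with a product construction that records the visited objects and the essential tokens collected so far.
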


\subsection{Optional Step~3: Neighbourhood Errors}\label{section:neighbourhood-repairs}
	The optional Step~3 attempts to reduce the runtime by limiting the size of errors, and thus also the number of errors.
	As a trade-off the repair pipeline yields, in general, an approximate repair.

The size of errors does depend on the size of the considered property graph, and the number of errors can be exponential.
We address this problem by introducing neighbourhood errors: In a nutshell, the idea is to repair a property graph by removing objects \say{close} to the endpoints of the paths inducing errors.

For an integer \(k \ge 1\), which can intuitively be understood as a radius around the endpoints of a path,  the \emph{\(k\)-endpoint neighbourhood} of a path \(v_0e_1v_1e_2\ldots e_nv_n\) consists of the objects which have distance at most \(k\) from one of the endpoints \(v_0\) and \(v_n\), that is, the objects \(v_0,e_1,v_1,\ldots,e_k,v_k\) and \(v_{n-k},e_{n+1-k},v_{n+1-k},\ldots e_n,v_n\).

Given an RGPC constraint \(\query; F \cimp C\) with \(\ell\) path variables \(z_1,\ldots,z_\ell\), and a match \(\mu\) for \(\query\) witnessing that the constraint is \emph{not} satisfied, the \emph{topological \(k\)-neighbourhood error} \(O_\match^k\) consists of all nodes and edges \(o\), for which there is a path variable \(z_i\), \(1\le i \le \ell\), such that \(o\) occurs in the \(k\)-endpoint neighbourhood of \(\mu(z_i)\).

When Step~3 is enabled, the pipeline uses (topological) \(k\)-neigh\-bour\-hood errors in place of (topological) errors.
For example, consider once more the constraint from Example~\ref{example:constraints-intro}, the property graph depicted in Figure~\ref{figure:example-pg-intro}, and recall that there are two matches witnessing that the property graph does not satisfy the constraint:
The first match corresponds to the path \(\pgpath{p_1,w_1,t_1,r_1,d_1,r_3,d_3}\) (cf.\ Example~\ref{example:match}), and the second one to the path \(\pgpath{p_1,w_1,t_1,r_1,d_1,r_4,d_3}\).
The topological \(1\)-neighbourhood errors are \(O_1^{1} = \pgerr{p_1,w_1,t_1}\cup\pgerr{d_1,r_3,d_3}\), and \(O_2^{1} = \pgerr{p_1,w_1,t_1} \cup \pgerr{d_1,r_4,d_3}\).
In contrast to the (full) topological errors \(O_1 = \pgerr{p_1,w_1,t_1,r_1,d_1,r_3,d_3}\) and \(O_2 = \pgerr{p_1,w_1,t_1,r_1,d_1,r_4,d_3}\), the edge \(r_1\) is missing from \(O_1^{1}\) and \(O_2^1\).
Thus, when using \(O_1^{1}\) and \(O_2^1\) instead of \(O_1\) and \(O_2\), the repair obtained by deleting \(r_1\) is not found.
On the other hand, the errors are smaller, and in general their size is bounded linearly in \(k\), and not in the size of the property graph.

\newcommand{\plotscale}{.4}%
\section{Experiments}\label{section:experiments}
To showcase that our approach is applicable in practice we conducted several experiments:
We first assess the base pipeline in terms of runtimes and scalability with respect to the number of errors and constraints, and compare the performances and quality of the ILP and both greedy algorithms.
We then study the effects of the optional steps of the repair pipeline on quality and runtime. %
Finally, we conduct a qualitative study aiming at studying the behaviour of the pipeline on a real-world dataset with real-world constraints.
For the evaluation, we use the four property graphs of different sizes listed in Table~\ref{table:static-pgs}.
The top three are real-world property graphs, and the LDBC property graph is generated with the corresponding benchmark.
We implemented our repair pipeline \cite{artifacts}, illustrated in Figure~\ref{figure:pipeline}, in Python~3.13.
For solving (I)LPs, we use HiGHS \cite{DBLP:journals/mpc/HuangfuH18}.
As graph database system, we use Neo4j~5.26 (community edition).
All experiments were run on a virtual machine with~8 physical cores of an
Intel(R) Xeon(R) Silver 4214 CPU \cite{intel-arc-cpu-spec} and 125GiB RAM. %

\begin{table*}
	\caption{Property graphs used for experiments, number \(\lvert\Sigma\rvert\) of constraints, and number of errors}\label{table:static-pgs}
	\small%
	\begingroup%
	\newcommand{\consStatHeader}{\hspace{.9em}{\footnotesize \(\lvert\Sigma\rvert\)} & {\footnotesize \#errors}}%
	\centering%
	\begin{tabular}{l r r r r r r r r r r r r}
			\toprule
			\multirow{2}{*}{Name} & \multirow{2}{*}{\#Nodes} & \multirow{2}{*}{\#Edges} & \multicolumn{2}{c}{\(1\)-way} & \multicolumn{2}{c}{\(2\)-rep} & \multicolumn{2}{c}{\(2\)-way} & \multicolumn{2}{c}{loop} & \multicolumn{2}{c}{\(3\)-split}\\
			&  &  & \consStatHeader & \consStatHeader & \consStatHeader & \consStatHeader & \consStatHeader\\
			\toprule
			ICIJ Property Graph \cite{InvestigativeJournalists2025}  & 2 016 523 & 3 339 267 & 14 & 558799 & 19 & 397108 & 39 & 7553 & 188 & 56614 & 152 & 876933\\
			Italian Legislative Graph \cite{Colombo2024,colombo_2024_11210265} & 411 787 & 1 117 528 & 20 & 166977 & \multicolumn{2}{c}{--} & 53 & 443695 & 61 & 3783 & \multicolumn{2}{c}{--} \\
			Coreutils Code Property Graph \cite{Yamaguchi2014} & 206 506 & 387 284 & 20 & 45728 & \multicolumn{2}{c}{--} & 52 & 3707 & \multicolumn{2}{c}{--} & \multicolumn{2}{c}{--} \\
			\midrule
			LDBC, scaling factor \(0.3\) \cite{DBLP:journals/sigmod/AnglesBLF0ENMKT14,DBLP:journals/corr/abs-2001-02299} & 940 322 & 5 003 201 & 5 & 79064 & 6 & 72756 & \multicolumn{2}{c}{--} & 56 & 423 & 4 & 2734499\\
			\bottomrule
		\end{tabular}
		\endgroup%
\end{table*}

\subsection{Assessing the Base Pipeline}\label{section:performance-quality}
In this section, we assess the performance and quality of the \emph{base pipeline}, that is our repair pipeline without any optional steps.
For this purpose, we generated constraints with pattern shapes as outlined in Figure~\ref{figure:pattern-shapes} for the property graphs in Table~\ref{table:static-pgs}.
These pattern shapes cover most shapes that have been observed in real-life query logs \cite{Bonifati2020} while also exhibiting recursion.
Notably, the \(1\)-way shape alone covers \(74.61\%\) of the valid path patterns observed \cite{Bonifati2020}, and the loop shapes include triangles, since the orange dashed shape can take the form of a pattern which matches paths of length~\(2\).
An instance of a \(1\)-way pattern with~\(3\) edge and~\(4\) node patterns is \(z = \gpcrep{+}{\gpcnp{a}\gpcep{c}\gpcnp{b}\gpcep{e}\gpcnp{d}\gpcep{f}\gpcnp{h}}\).
A constraint then consists of a pattern combined with an empty filter and the condition \(\{\texttt{false}\}\).
This maximizes the number of errors for our scaling experiments.
Moreover, to increase the number of constraints (and errors), we added, to each graph, an additional \(10\%\) of edges -- labelled with a special label -- randomly between nodes, which were already connected by an edge in the original graph, to preserve the topology.
We emphasize that there is no correlation between the additional edges and errors: they can partake in an error or not, like other edges.
	Their sole purpose is to \emph{increase} the number of paths, and hence errors.
The number of constraints and the total number of errors, for each shape and property graph, are listed in Table~\ref{table:static-pgs}.
	Note that no ground truth is available.
	Instead, the given graphs are regarded as dirty, and our repairs are clean by definition.
	The repairs obtained by the ILP algorithm serve as best possible repairs.

We compare the performance and quality of the two greedy and ILP algorithms.
We use runtimes\footnote{We do not measure error detection via queries, which heavily depends on the database system and is not the focus of our paper.} as performance, and the number of deletions as quality measure -- a smaller number of deletions means a higher quality since it corresponds to less information loss.
Note that measures like the \(F_1\)-score for quality are not meaningful: since our holistic approach always repairs all errors, there are no false negatives, and by definition a repair cannot contain false positives.
Hence, the \(F_1\)-score is always \(1\) (or \(0\) in case of a timeout).%

\begin{figure}
	\includegraphics[width=\linewidth]{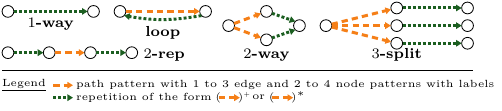}
	\caption{Shapes of RGPC patterns used in Sections~\ref{section:performance-quality} and~\ref{section:ablation}}\label{figure:pattern-shapes}
\end{figure}

\newcommand{\includePlot}[3]{%
	\centering%
	\includegraphics[scale=\plotscale]{#2-#3.pdf}
	\caption{#1}\label{figure:#2-#3}
}

\begin{figure*}
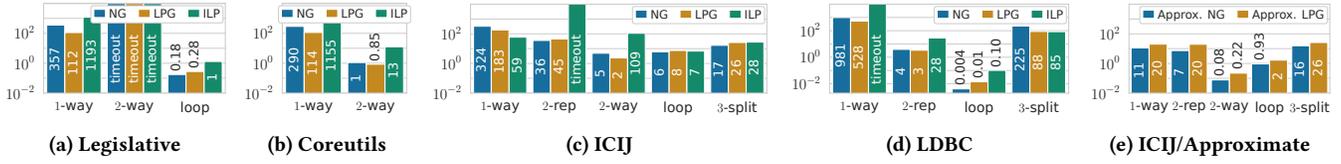

	\centering%
	\begin{subfigure}{.17\textwidth}
		\includePlot{Legislative}{legislative}{classwise-runtime}
	\end{subfigure}
	\hfill%
	\begin{subfigure}{.133\textwidth}
		\includePlot{Coreutils}{coreutils}{classwise-runtime}
	\end{subfigure}
	\hfill%
	\begin{subfigure}{.26\textwidth}
		\includePlot{ICIJ}{icij}{classwise-runtime}
	\end{subfigure}
	\hfill%
	\begin{subfigure}{.22\textwidth}
		\includePlot{LDBC}{ldbc-sf0.3}{classwise-runtime}
	\end{subfigure}
	\hfill%
	\begin{subfigure}{.185\textwidth}
		\includePlot{ICIJ/Approximate}{icij}{classwise-runtime-approx}
	\end{subfigure}
	\hfill%
	\caption{%
		Runtimes in seconds of the naive greedy (NG), LP-guided greedy (LPG), and the ILP algorithms%
	}
\end{figure*}

\begin{figure*}
	\centering%
	\begin{subfigure}{.17\linewidth}
		\includePlot{Legislative}{legislative}{classwise-edge-deletions}
	\end{subfigure}
	\hfill
	\begin{subfigure}{.133\linewidth}
		\includePlot{Coreutils}{coreutils}{classwise-edge-deletions}
	\end{subfigure}
	\hfill
	\begin{subfigure}{.26\linewidth}
		\includePlot{ICIJ}{icij}{classwise-edge-deletions}
	\end{subfigure}
	\hfill
	\begin{subfigure}{.22\linewidth}
		\includePlot{LDBC}{ldbc-sf0.3}{classwise-edge-deletions}
	\end{subfigure}
	\hfill
	\begin{subfigure}{.185\textwidth}
		\includePlot{ICIJ/Approximate}{icij}{classwise-edge-deletions-approx}
	\end{subfigure}
	\hfill
	\caption{%
		Number of edge deletions proposed by the naive greedy (NG), LP-guided greedy (LPG), and the ILP algorithms%
	}
\end{figure*}

\begin{figure*}
	\begin{subfigure}{.27\textwidth}
		\centering%
		\includegraphics[scale=\plotscale]{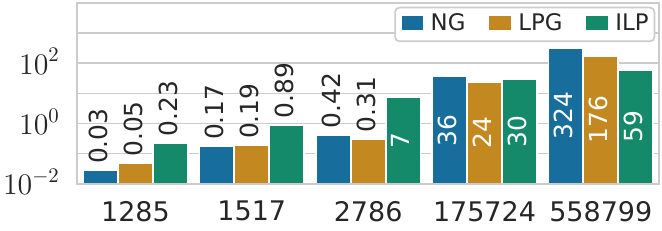}
		\caption{ICIJ/Runtimes/\(1\)-way}\label{figure:icij-err-c-runtime}
	\end{subfigure}
	\hfill
	\begin{subfigure}{.22\textwidth}
		\centering%
		\includegraphics[scale=\plotscale]{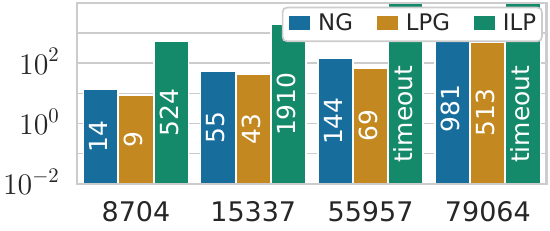}
		\caption{LDBC/Runtimes/\(1\)-way}\label{figure:ldbc-err-c-runtime}
	\end{subfigure}
	\hfill
	\begin{subfigure}{.27\textwidth}
		\centering%
		\includegraphics[scale=\plotscale]{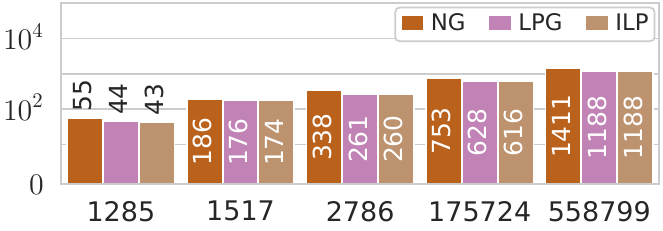}
		\caption{ICIJ/Edge Deletions/\(1\)-way}\label{figure:icij-err-c-deletions}
	\end{subfigure}
	\hfill
	\begin{subfigure}{.22\textwidth}
		\centering%
		\includegraphics[scale=\plotscale]{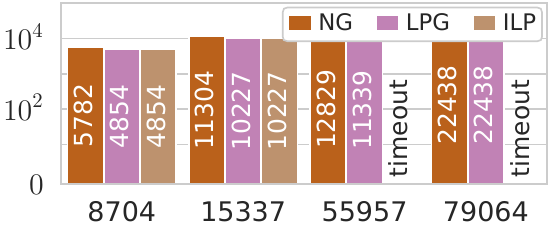}
		\caption{LDBC/Edge Deletions/\(1\)-way}\label{figure:ldbc-err-c-deletions}
	\end{subfigure}
	\hfill
	\caption{%
			Results of the scaling experiments for the naive greedy (NG), LP-guided greedy (LPG), and the ILP algorithms%
	}%
\end{figure*}

	Figures~\ref{figure:legislative-classwise-runtime} to~\ref{figure:ldbc-sf0.3-classwise-runtime} show the runtime of the naive greedy (NG), LP-guided greedy (LPG), and the ILP algorithms.
Note that we used a log-scaled y-axis to obtain readable plots.
For example, Figure~\ref{figure:legislative-classwise-runtime} shows the performance of the base pipeline when run on the legislative property graph with the sets of \(1\)-way, \(2\)-way, and loop constraints outlined in Table~\ref{table:static-pgs}.
The ILP algorithm takes in general more time than the naive greedy algorithm, and as a trade-off, it deletes less objects\footnote{Due to the nature of our constraints, the pipeline always opts for deleting edges without any optional feature enabled.}, cf.\ Figures~\ref{figure:legislative-classwise-edge-deletions} to~\ref{figure:ldbc-sf0.3-classwise-edge-deletions} (which also have a log-scaled y-axis, but a different range than the runtime plots).
The LP-guided greedy achieves the same quality as the ILP algorithm in almost all cases (cf.\ Figures~\ref{figure:legislative-classwise-edge-deletions} to~\ref{figure:ldbc-sf0.3-classwise-edge-deletions}).
At the same time it competes with the naive greedy algorithm performance-wise.
In many cases it even performs better.
Inspecting the solutions of the LP-solver revealed that it converges towards integral solutions in these cases.
Overall, the experiments show that the LP-guided greedy algorithm yields higher-quality repairs with about \(35\%\) fewer deletions than the naive greedy, matching the quality of repairs yielded by the ILP algorithm.
It also offers a runtime advantage of over \(97\%\) (Figure~\ref{figure:icij-classwise-runtime}, \(2\)-way) compared to the ILP strategy, and beats the naive greedy algorithm in a case where the ILP one timed out (Figure~\ref{figure:ldbc-sf0.3-classwise-runtime}, \(1\)-way).
These observations in the general case are however not always true as
demonstrated by Figure~\ref{figure:icij-classwise-runtime} (ICIJ): the ILP algorithm
outperforms both greedy algorithms for \(1\)-way constraints, which exhibit a high number of errors with paths containing many cycles.

	\paragraph*{Approximate Repairs} We now investigate approximate repairs and deepen the comparison between the naive and the LP-guided greedy algorithms by disabling the trimming phase in the pipeline.
	The results for the ICIJ graph are shown in Figures~\ref{figure:icij-classwise-runtime-approx} and~\ref{figure:icij-classwise-edge-deletions-approx}.
	We observe that the selection phase of the naive greedy performs better than the one of the LP-guided greedy, and compared with the full runtimes in Figure~\ref{figure:icij-classwise-runtime}, the trimming phase is responsible for the majority of the runtime for both.
	In particular, the trimming phase of the naive greedy is responsible for making it perform worse overall.
	Correlating the number of proposed edge deletions indicates that this is because, for the naive greedy, more edge deletions are trimmed.
	When comparatively many edge deletions have to be trimmed for the LP-guided greedy, it can be outperformed by the naive greedy algorithm (e.g.\ ICIJ, \(2\)-rep).
	Lastly, the selection phase of the LP-guided greedy algorithm can yield a good approximation or even a proper repair in the majority of cases, with a \(89\%\) reduction in runtime (\(1\)-way).
	Similar results for the other datasets are provided in \refapp{appendix:experiments}{Appendix~C}.

\paragraph*{Scalability} As for the scalability of our pipeline with respect to the numbers of constraints and errors, Figures~\ref{figure:icij-classwise-runtime} and~\ref{figure:legislative-classwise-runtime} attest that our pipeline can handle up to \(188\) loop constraints for the ICIJ graph, while it runs into a timeout for the smaller legislative graph with \(53\) \(2\)-way constraints.
Regarding the size of errors, the maximal size we observed were \(73\) objects (Legislative, loop). %
Figure~\ref{figure:ldbc-sf0.3-classwise-runtime} shows that over \(2.7\) million errors due to \(3\)-split constraint can be repaired comparatively fast, while \(79064\) errors due to \(1\)-way constraints are challenging.
Comparing the results for \(1\)-way constraints for the ICIJ and LDBC graphs, we observe that our pipeline performs better for the former, even though there are fewer constraints (of the same shape) and fewer errors for the smaller LDBC graph (cf.\ Table~\ref{table:static-pgs}, \(1\)-way column).
To investigate this case further, we ran the experiments for the ICIJ and LDBC graphs with subsets of \(1\)-way constraints, and thus fewer errors.
The resulting runtimes are shown in Figures~\ref{figure:icij-err-c-runtime} and~\ref{figure:ldbc-err-c-runtime}, and the number of deletions in Figures~\ref{figure:icij-err-c-deletions} and~\ref{figure:ldbc-err-c-deletions}.
The \(y\)-axes in these plots are labelled with the numbers of errors.
For the LDBC graph,  compared to the number of errors, a high number of edges are deleted -- up to \(67\%\) for the LP-guided greedy and ILP algorithms, and \(73\%\) for naive greedy algorithm -- while only very few deletions are necessary to repair the ICIJ graph.
This indicates that our repair pipeline performs well, if the repair can be achieved with a low number of deletions.
Constraints and number of errors play a relatively negligible role.

\begin{figure*}
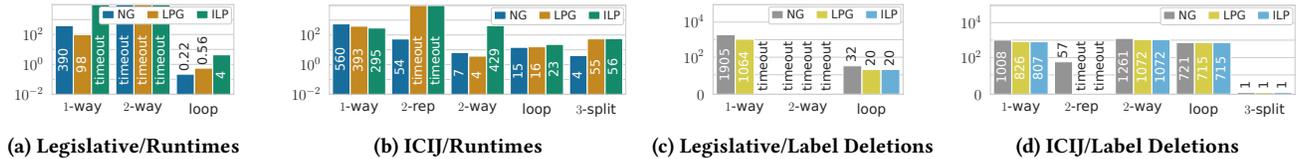

	\centering%
	\hfill%
	\begin{subfigure}{.21\linewidth}
		\includePlot{Legislative/Runtimes}{legislative}{classwise-label-runtime}
	\end{subfigure}
	\hfill
	\begin{subfigure}{.28\linewidth}
		\includePlot{ICIJ/Runtimes}{icij}{classwise-label-runtime}
	\end{subfigure}
	\hfill
	\begin{subfigure}{.21\linewidth}
		\includePlot{Legislative/Label Deletions}{legislative}{classwise-label-deletions}
	\end{subfigure}
	\hfill
	\begin{subfigure}{.28\linewidth}
		\includePlot{ICIJ/Label Deletions}{icij}{classwise-label-deletions}
	\end{subfigure}
	\hfill
	\caption{%
			Runtimes (in seconds) with Step~2 enabled and number of label deletions for the legislative and ICIJ graphs
	}
	\label{figure:classwise-label-deletions}
\end{figure*}

\subsection{Minimizing Deletions of Important Objects}\label{section:custom-weights}
	By default our pipeline assigns uniform weights to edges, and node
weights are determined by degree. To improve robustness, we allow the
pipeline to incorporate varying \emph{custom weights} \(w_c(o) > 0\)
for nodes and edges \(o\in \nodeset \cup \edgeset\), stored as
property values in the property graph, with the aim of reducing the
deletion of objects deemed \say{important}.
Internally, the weight \(w(e)\) of an edge is then \(w_c(e)\), and the
weight \(w(n)\) of a node is \(w_c(n)\) plus the sum of all \(w(e)\), for all
incident edges \(e\).  The objective of the pipeline is then shifted
from minimizing the number of deletions to minimizing the reduction of
the total custom weight across all nodes and edges.

To assess this feature, we used PageRank scores as weights for
nodes and edges\footnote{Page rank for edges are computed using the
line graph: Every edge becomes a node, and there is an edge
\((e_1,e_2)\) if there is a path \(v_0e_1v_1e_2v_2\) in the original
graph.}.
Without custom weights, the total PageRank value decreases by
\(0.09\%\) (\(1\)-way constraints) when applying the LP-guided greedy
algorithm on the ICIJ graph.  With custom weights, the decrease is
limited to \(0.07\%\) (\(2\)-way constraints), corresponding to a
\(24\%\) reduction in weight loss, indicating that fewer
high-weight edges are deleted.  The number of deletions is shown in
Figure~\ref{figure:icij-classwise-edge-deletions-custom-weight}, and
compared with Figure~\ref{figure:icij-classwise-edge-deletions}, showing an
increase ranging from \(1\%\) (\(2\)-rep constraints) to \(43\%\)
(\(2\)-way constraints).
The runtimes are shown in
Figure~\ref{figure:icij-classwise-runtime-custom-weight}. Compared with
the base case in Figure~\ref{figure:icij-classwise-runtime}, custom
weights can have a positive or negative impact on runtime.
More details are provided in \refapp{appendix:experiments}{Appendix~C}.%

\begin{figure*}
\centering%
\hfill
\begin{subfigure}{.2\textwidth}
	\centering%
	\includegraphics[scale=\plotscale]{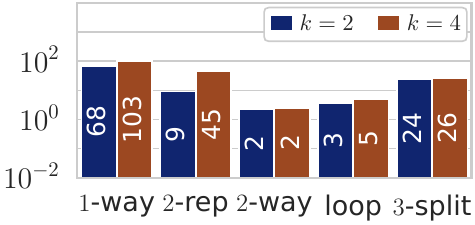}
	\caption{Step~3/Runtimes}
	\label{figure:icij-classwise-nbh-runtime-lpr}
\end{subfigure}
\hfill%
\begin{subfigure}{.2\textwidth}
	\centering%
	\includegraphics[scale=\plotscale]{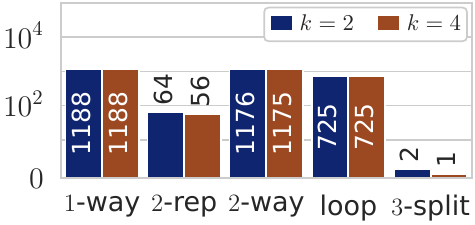}
	\caption{Step~3/Edge Deletions}
	\label{figure:icij-classwise-nbh-deletions-lpr}
\end{subfigure}
\hfill%
\begin{subfigure}{.28\textwidth}
	\includePlot{Custom Weights/Runtimes}{icij}{classwise-runtime-custom-weight}
\end{subfigure}
\hfill%
\begin{subfigure}{.28\textwidth}
	\includePlot{Custom Weights/Edge Deletions}{icij}{classwise-edge-deletions-custom-weight}
\end{subfigure}
\hfill
\caption{Results of the LP-guided greedy algorithm for neighbourhood (Step~3) and custom weight repairs for ICIJ}\label{figure:icij-classwise-nbh-lpr}
\end{figure*}

\begingroup%
\begin{figure}
	\centering%
	\hfill
	\begin{subfigure}{.42\columnwidth}
		\includegraphics[scale=\plotscale]{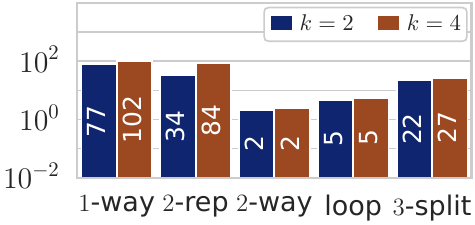}
		\caption{ICIJ/Runtimes}
		\label{figure:icij-classwise-sample-runtime-lpr}
	\end{subfigure}%
	\hfill
	\begin{subfigure}{.42\columnwidth}
		\includegraphics[scale=\plotscale]{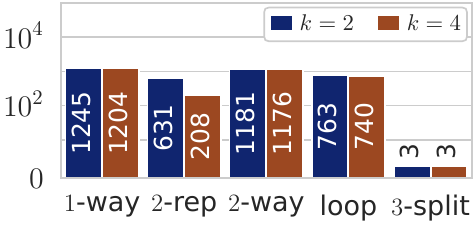}
		\caption{ICIJ/Edge Deletions}
		\label{figure:icij-classwise-sample-deletions-lpr}
	\end{subfigure}%
	\hspace{.1\columnwidth}
	\hfill%
	\caption{LP-guided greedy algorithm on sampled errors}%
\label{figure:icij-classwise-sample-lpr}%
\end{figure}%
\endgroup%
\subsection{Label Removals and Neighbourhood Repairs}\label{section:ablation}
We now evaluate the optional steps of our pipeline, and compare them to the base pipeline established in Section~\ref{section:performance-quality}.

To show the effects of Step~\(2\), we present the runtimes and number of label deletions for the legislative and the ICIJ graph in Figure~\ref{figure:classwise-label-deletions}.
	Recall that extending errors with essential tokens results in more, and larger errors.
	Consequently, the (I)LPs are also larger.
	Furthermore, there tend to be more equally weighted repairs: instead of deleting an edge, there is now the choice between deleting any label from the edge, or one of its endpoints.
	The runtimes increase considerably -- by up to \(500\%\) (Figure~\ref{figure:icij-classwise-runtime} vs.\ Figure~\ref{figure:icij-classwise-label-runtime}, \(1\)-way).
	The ILP algorithm even timed out in one more case (Figure~\ref{figure:legislative-classwise-label-runtime}, \(1\)-way).

However, comparing with Figures~\ref{figure:legislative-classwise-edge-deletions} and~\ref{figure:icij-classwise-edge-deletions}, we observe a reduction of up to \(50\%\) of object deletions for the naive greedy algorithm (Legislative, \(1\)-way), of up to \(47\%\) for the ILP algorithm (Legislative, loop), and of up to \(59\%\) for the LP-guided greedy (Legislative, \(1\)-way, the ILP algorithm times out in this case, cf.\ Figure~\ref{figure:legislative-classwise-label-deletions}).

	Notably, even the naive greedy algorithms can yield a repair with less label deletions than the ILP with (only) edge deletions (Legislative, \(1\)-way).
	Note that the greedy algorithms can sometimes even be faster if label deletions are allowed which has to be attributed to the structure of the errors and the (I)LP (e.g.\ Figure~\ref{figure:legislative-classwise-label-runtime}, \(1\)-way).

A similar trend can be observed for the other datasets; for space reasons they are provided in \refapp{appendix:experiments}{Appendix~C}.

With Step~\(3\) we observe a runtime reduction of up to~\(62\%\) compared to the base pipeline while maintaining almost the same quality, i.e.\ number of deletions.
We present the runtimes and number of proposed edge deletions for ICIJ and neighbourhood \(k = 2\) as well as \(k = 4\) in Figures~\ref{figure:icij-classwise-nbh-runtime-lpr} and~\ref{figure:icij-classwise-nbh-deletions-lpr}, for the LP-guided greedy algorithm.
Compared to the base case shown in Figure~\ref{figure:icij-classwise-runtime}, the performance is better in cases where the base pipeline takes a significant amount of time, i.e.\ more than a few seconds.

The numbers of deletions are quite similar for \(k = 2\) and \(k = 4\), and more importantly, the results for the base pipeline as in Figure~\ref{figure:icij-classwise-edge-deletions}, except for the experiment for \(2\)-rep constraints.
That is, in all cases where matches consisting of \emph{shortest paths} are (almost) completely covered by the \(k\)-endpoint neighbourhood.

	To assess whether picking the neighbourhoods of path endpoints is a suitable strategy for reducing the size of errors, we performed the same experiments with random samples of errors: that is, we sampled \(2k\) edges (and their endpoints) from each error.
	The result for the LP-guided greedy algorithm on the ICIJ graph is shown in Figure~\ref{figure:icij-classwise-sample-lpr}.
	Compared with the result for neighbourhood errors shown in Figure~\ref{figure:icij-classwise-nbh-lpr}, the runtimes do not differ significantly, but the number of deletions can explode if paths are long (\(9.86\)x for \(2\)-rep).
	This suggests that intersections of (topological) errors are indeed more likely to happen near path endpoints, and taking a random sample from a long path is less likely to contain these intersections.%

Similar plots to the ones for Steps~\(2\) and~\(3\) for the other datasets are available in \refapp{appendix:experiments}{Appendix~C}.

\subsection{A Qualitative Study}\label{section:qualitative-study}
	\begingroup%
	\newcommand{\constraintDescription}[1]{\(\mathbf\hookrightarrow\) \textcolor{purple!80!black}{#1}}
	\begin{table*}
		\caption{Real-world like RGPC constraints for the ICIJ property graph}%
		\label{table:icij-qualitative-study-constraints}%
		\footnotesize%
		\begin{tabular}{l c r r}
			\toprule
			ID & RGPC constraint \(\query; F \cimp C\) & \#errors & \#deletions\\
			\midrule
			\addlinespace
			\multirow{2}{*}{\(\gamma_1\)} & \(\gpcnp[x]{\pglabel{Entity}\(\;\lor\;\)\pglabel{Officer}} \gpcep{\pglabel{same\_name\_as}} \gpcnp[y]{\pglabel{Entity}\(\;\lor\;\)\pglabel{Officer}}; \emptyset \cimp \{\propof{x}{name} = \propof{y}{name}\}\) & \multirow{2}{*}{\(18000\)} & \(18000\) edges\\
			&\constraintDescription{All entities and officers in a \pglabel{same\_name\_as} relationship have indeed the same name.}& & \(0\) nodes\\
			\addlinespace
			\multirow{2}{*}{\(\gamma_2\)} & \(\gpcnp[x]{\pglabel{Entity}}; \{\propof{x}{inactivation\_date} \le \textsf{2025-07-01}\} \cimp \{\propof{x}{status} \neq \text{\say{Active}}\}\) & \multirow{2}{*}{\(46\)} & \(177\) edges\\
			&\constraintDescription{All entities with an inactivation date in the past do not have the status \say{Active}.}& & \(46\) nodes\\
			\addlinespace
			\multirow{2}{*}{\(\gamma_3\)} & \(\gpcnp[x]{\pglabel{Officer}} \gpcep{\pglabel{sole\_director\_of}}\gpcnp[y]{\pglabel{Entity}},
			\gpcnp[z]{\pglabel{Officer}}\gpcep{\pglabel{sole\_director\_of}}(y); \emptyset \cimp \{x = z\}\) & \multirow{2}{*}{\(44\)} & \(22\) edges\\
			&\constraintDescription{An entity cannot have more than one sole director.}&& \(0\) nodes\\
			\addlinespace
			\multirow{2}{*}{\(\gamma_4\)} & \hspace{-.5em}\(\gpcnp[x]{\pglabel{Officer}} \gpcep{\pglabel{president\_of}}\gpcnp[y]{\pglabel{Entity}} \gpcrep{+}{
				\gpcnp{\pglabel{Entity}\(~\lor~\)\pglabel{Officer}}\gpcep{\pglabel{officer\_of}\(~\lor~\)\pglabel{same\_name\_as}}
			}\gpcnp[z]{\pglabel{Entity}}, (x) \gpcep{\pglabel{officer\_of}}(z); \emptyset \cimp \{y = z\}\)\hspace{-2.75em} & \multirow{2}{*}{\(19\)} & \(10\) edges\\
			&\constraintDescription{No president is indirectly in control of an entity, e.g.\ through a chain of shell companies, they are directly an officer of.}& & \(0\) nodes\\
			\bottomrule
		\end{tabular}
	\end{table*}%
	\endgroup%
	To showcase that our repair pipeline is viable in practice, we conducted a qualitative study on the real-world ICIJ property graph, which is built and used by the International Consortium of Investigative Journalists.
	Since this property graph was assembled from several data sources, it contains some real-world inconsistencies, detected using the RGPC constraints in Table~\ref{table:icij-qualitative-study-constraints}.

	The constraints are concerned with \emph{entities}, which model companies and organisations, and \emph{officers}, which are persons involved with entities.
	We note that entities can also be in an \pglabel{officer\_of} relationship with other entities, although neither of them is labelled as an officer.
	Constraint \(\gamma_1\) states that entities and/or officers in a \pglabel{same\_name\_as} relationship, have indeed the same name.
	Constraint \(\gamma_2\) is a typical denial constraint forbidding an entity from having an inactivation date, if it is still active.
	Constraint \(\gamma_3\) models an \emph{exclusivity} condition for entities \cite{Angles2021}: no entity can have two \emph{sole} directors.
	Finally, \(\gamma_4\) ensures that a president of an entity cannot gain an advantage by indirectly influencing another entity they are associated with, through their presidency.
	The number of errors for each constraint is given in Column~\(3\) of Table~\ref{table:icij-qualitative-study-constraints}.

	We first discuss the repairs yielded by the base pipeline with the ILP algorithm.
	The numbers of deletions are given in Column~\(4\) of Table~\ref{table:icij-qualitative-study-constraints}.
	For \(\gamma_1\), \pglabel{same\_name\_as} relationships are deleted. This preserves the original names stored in the properties (unifying names could lead to false accusations in this scenario).
	Since errors for this constraint cannot share edges, one deletion per error is expected.
		Constraint~\(\gamma_2\) is an example where the pipeline is forced to delete nodes, since each  error consists of exactly one node.
		Thus, the nodes are deleted, and consequently so are all their \(177\) incident edges.
	For \(\gamma_3\) one of the two relationships in each error is deleted, which shows that symmetries are resolved properly (the errors do not share any objects in this case).
	The repair for \(\gamma_4\) shows that the number of deletions can be significantly less than the number of errors -- our pipeline exploits that errors have edges in common.

		We investigate robustness w.r.t.\ graph topology changes with \(\gamma_4\), since it encompasses a non-trivial path pattern.
		Assuming that sub-patterns describing paths of \pglabel{same\_name\_as}- and \pglabel{president\_of}-labelled edges are important, we double the weight of these edges.
		The repair pipeline then proposes the deletion of \(14\) instead of \(10\) edges but preserves all of the aforementioned edges.
		It also preserves at least \(168\) of \(203\) matches of the first pattern of~\(\gamma_4\).
		Without the custom weights, it preserves only between \(96\) and \(117\) matches of this pattern.

	When enabling label deletions, the graph can be repaired with \(6323\) label deletions, which is only about \(35\%\) of the \(18255\) node and edge deletions without it.
	\(1394\) labels are removed from edges, effectively deleting the concrete relationship but preserving the knowledge that there is some (potential) relationship.

	Using the naive greedy algorithm instead of the ILP algorithm resulted in up to \(2\) more deletions for~\(\gamma_4\) (depending on the order in which the database returned errors), the same amount of deletions for the other constraints.
	The LP-guided greedy yielded the same number of deletions.
	With the ILP algorithm taking under \(4\) seconds for the set of all constraints, there is no advantage of using either greedy algorithm or neighbourhood errors in this scenario.
\section{Related Work}\label{section:related-work}
Error repairing for PG-Constraints has not been addressed in prior work besides the assessment that PG-Constraints can be rewritten into GQL queries for querying errors \cite{Angles2021}.

Error repairing for less expressive graph data models, such as labelled graphs, focuses on chase-based approaches \cite{Fan2023, Fan2019b,Kwashie2019,Cheng2018,Cheng2022,Shimomura2020,Shimomura2022}, which iteratively add properties, or even edges to repair a graph.
Thus, this repair mode is orthogonal to our delete-based approach, and consequently not comparable.
In particular, insertion-only repair modes are not suitable for repairing violations of denial constraints in our examples and experimental settings \cite{Fan2008}.

A different approach for repairing graphs is with respect to neighbourhood constraints, that is, constraints which restrict the labels of a node depending on the labels of its neighbours \cite{DBLP:journals/vldb/SongLCYC17}.
Their repair mode involves changing labels, and, if that is not enough, changing the neighbourhood of a node, which is again incompatible with our repair mode. They introduce a greedy variant of their algorithm to strike a balance between runtime and repair quality.

User-centric repairs for property graphs have been studied \cite{DBLP:journals/pacmmod/PacheraBM25}, albeit w.r.t.\ constraints for less expressive graph data models.
Their repair model focuses on label and property changes, but also allows the deletion of edges as fall-back.
Letting users repair errors consisting of variable-length paths requires carefully crafted human-machine interfaces, which is beyond the scope of this work.

The idea of using vertex covers of conflict hypergraphs originates from theoretical work for repairing relational databases \cite{DBLP:journals/iandc/ChomickiM05,Staworko2012}, but these repairs do not involve dependencies between objects and arbitrary long paths.
	A notable consequence is that, unlike in the relation setting, not every minimal vertex cover corresponds to a repair, which we addressed by using minimum weighted vertex covers.
	We note that modelling dependencies explicitly (like foreign-key constraints) instead of using weights, yields an unreasonable amount of (artificial) errors in the case of property graphs, since they would involve all nodes, edges, and labels.
For the vast literature on error repairing for other database models and repair modes, including the relational model, we refer to surveys \cite{Chu2016, Elmagarmid2007}.

	We note that the vertex cover problem for hypergraphs is even W[2]-hard \cite{DBLP:series/txtcs/FlumG06}, which means that it is very unlikely that there is an FPT algorithm.
	A common strategy for solving ILPs are LP-rounding algorithms \cite[e.g.,][]{DBLP:journals/tcs/OualiFS14}, which solve the LP variant of the ILP and then use \emph{randomized rounding} to obtain an integral solution.
	Our LP-guided greedy algorithm uses a trivial rounding step instead, which is arguably sufficient, as shown by our experiments in Section~\ref{section:experiments}.

Theoretical foundations for RGPC patterns have been investigated, in particular CRPQs with path variables \cite{Barcelo2012,Figueira2022}, which even allow for \say{regular} comparisons of paths. %
Notably, unlike RGPC patterns, CRPQs only match edge labels (i.e.\ no node labels), and do not support label expressions.
The same differences apply to our RGPC automata model and automata constructed for CRPQs \cite{Barcelo2012}.

Lastly, we defined RGPC patterns as a syntactical fragment of GPC patterns to abstract core features of the pattern matching sublanguage of GQL and SQL/PGQ \cite{Francis2023a}.
A useful feature in practice and supported by these languages are patterns which allow for traversing edges in reverse direction.
We outline how our pipeline can be extended with this feature and other differences in \refapp{section:reverse-traversal}{Appendix~A.3}.
It is known that enumerating matches of GPC patterns is in polynomial space in the size of the property graph but not in polynomial space in the size of the pattern \cite{Francis2023a}; making the enumeration of all errors in large graphs intractable, and sets boundaries for repair procedures relying on enumerating all errors.
Finally, PG-Constraints can capture dependencies based on regular path queries for which problems related to graph repair like certain query answering are undecidable \cite[see, e.g.][]{DBLP:conf/icalp/BeeriV81,Francis2017, Salvati2024}.

\section{Conclusion and Future Work}\label{section:conclusion}
In this paper, we addressed error repairing for property graphs under
PG-Constraints. We designed a repair pipeline for RGPC constraints
with user-selectable trade-offs and demonstrated its effectiveness
through extensive experiments.

As future work, we plan to study the practical limits of error
repairing under PG-Constraints. Additional features could be
incorporated such as advanced GQL pattern
matching~\cite{Deutsch2022,Figueira2023}. The repair model may also be
extended to allow property deletions, though this is challenging since
deletions can introduce new errors. Other repair modes could combine
our deletion-based approach with insertion techniques from the state
of the art. Finally, given the large number of errors recursive
constraints may generate, and inspired by work on incremental error detection
in the relational setting~\cite{Kaminsky2024}, we aim to explore how
errors can be efficiently prevented from being inserted in the first
place.
\begin{acks}
	The authors were supported by ANR-21-CE48-0015 VeriGraph.
	A.~Bonifati is also funded by IUF Endowed Chair.
\end{acks}

\pagebreak
\bibliographystyle{ACM-Reference-Format}
\bibliography{main}

\clearpage
\appendix

\section*{Appendix}
This appendix consists of three sections.
In Appendix~{appendix:rgpc} we provide additional material on RGPC patterns and constraints: We compare RGPC patterns with CRPQs and full GPC in more detail than in Section~\ref{section:related-work}, provide more details on how RGPC constraints can be written as PG-Constraints, provide the formal semantics of predicates, and present a possible extension with reverse traversals of edges.
Appendix~{appendix:repairing} provides more details about RGPC automata, the missing proofs for Section~\ref{section:cleaning}, as well as an alternative version of the ILP algorithm.
Finally, in Appendix~\ref{appendix:experiments}, the interested reader can find additional experimental results together with the raw measurements for all experiments and trade-off percentages.

\section{RGPC and RGPC Constraints}\label{appendix:rgpc}
In Appendices~\ref{section:comparison-crpqs} and~\ref{section:comparison-gpc} we compare RGPC with CRPQs and full GPC, respectively.
And in Appendix~\ref{section:definition-literal-semantics} we provide a formal definition for predicates in RGPC constraints.
In Appendix~\ref{section:reverse-traversal} we explain how RGPC can be extended with reverse traversals.

\subsection{Relationship to CRPQs}\label{section:comparison-crpqs}
For establishing theoretical foundations, it can be helpful to understand RGPC patterns as extensions of (full) \emph{conjunctive regular path queries (CRPQ)} \cite[see e.g.][]{Barcelo2012}.
A full CRPQ consists of a set of triples \((x,r,y)\) where \(x\) and \(y\) are node variables, and \(r\) is a regular expression over edge labels.
A match \(\match\) of a triple \((x,r,y)\) in a (property) graph is a mapping that maps \(x, y\) to nodes \(n, m\) of the graph for which there is a path between~\(n\) and~\(m\) such that the \emph{edge labels} of this path form a word in the language of the regular expression \(r\).

For example, \((x, a(bb)^*, y)\) matches all pairs of nodes \(v_0, v_n\) for which there is an odd-length path \(v_0e_1v_1\ldots e_nv_n\) such that \(e_1\) is labelled \(a\) and \(e_2,\ldots e_n\) are labelled \(b\).
It can be easily translated into an RGPC path pattern, namely \((x)\gpcep{a}\gpcrep{*}{\gpcep{b}\gpcep{b}}(y)\).
Other triples \((x, r, y)\) can be translated similarly; and a CRPQ consisting of multiple triples can thus be translated into an RGPC graph pattern.
We note that the resulting RGPC patterns \emph{do not} contain any node labels, since the regular expressions \(r\) can only refer to edge labels.
Furthermore, in an edge pattern \(\gpcep{\(\varphi\)}\), the label expression \(\varphi\) is always a single label (and does not contain any conjunction, disjunction or negation).
Node variables only occur at the beginning and at the end of a path pattern.

\subsection{Formal Semantics of Predicates}\label{section:definition-literal-semantics}

As stated in Section~\ref{section:constraints}, the semantics of predicates that can occur in the filter and condition of an RGPC constraint is as usual, with the addition that all properties occurring in a predicate must be present.
For the sake of completeness, we provide a formal definition below.
Recall that \(\oplus\) ranges over elements from \(\{=,\neq,\le,\ge\}\).
A match \(\match\) for an RGPC pattern \(\query\) in a property graph \(\graph = \pgtuple\) satisfies a predicate \(\ell\) if
\begin{itemize}
	\item \(\ell\) is of the form \(x_i = y_j\) and \(\match(x_i) = \match(y_j)\);
	\item \(\ell\) is of the form \(x_i\neq y_j\) and \(\match(x_i) \neq \match(y_j)\);
	\item \(\ell\) is of the form \(\propof{x_i}{propA} \oplus \propof{y_j}{propB}\),  \(\pgpropmap(\match(x_i), \prop{propA})\) and \(\pgpropmap(\match(y_j), \prop{propB})\) are both defined, and  \[\pgpropmap(\match(x_i), \prop{propA}) \oplus \pgpropmap(\match(y_j), \prop{propB})\] holds;
	\item \(\ell\) is of the form \(\propof{x_i}{propA} \oplus c\), \(\pgpropmap(\match(x_i), \prop{propA})\) is defined, and \[\pgpropmap(\match(x_i), \prop{propA}) \oplus c\] holds.
\end{itemize}

\subsection{Adding Reverse Traversal}\label{section:reverse-traversal}
RGPC patterns (and our repair pipeline) can be extended by reverse traversals, i.e.\ the ability to traverse edges in reverse direction.
For this purpose, the definition of paths has to be extended as well, to allow for edges in reverse directions.
\begin{definition}[Path]
	Let \(\graph = \pgtuple\) be a property graph.
	A \emph{path} in \(\graph\) is a non-empty alternating sequence
	\(
	v_0e_1v_1\ldots e_nv_n
	\)
	where \(v_0,\ldots,v_n\in\nodeset\) are nodes and \(e_1,\ldots,e_n\in\edgeset\) are edges such that \(\pgrelmap(e_i) = (v_{i-1},v_i)\) \emph{or \(\pgrelmap(e_i) = (v_i,v_{i-1})\)} holds for every \(i\in\range{n}\).
\end{definition}

The definition of RGPC can now be augmented with \emph{reverse edge patterns} \(\gpceprev{\(\varphi\)}\).
Unlike (forward) edge patterns \(\gpcep{\(\varphi\)}\), reverse edge patterns require an edge to be traversed in reverse direction in a path to be matched.
For example, \(\gpceprev{a} \cup \gpcep{c}\gpcep{c}\) states that there is an edge labelled \pglabel{a} in reverse direction, \emph{or} a path of length two on which both edges are labelled \pglabel{c}.

\subsection{RGPC vs.\ Full GPC}\label{section:comparison-gpc}
In Section~\ref{section:constraints} we defined RGPC patterns as a subset of GPC patterns.
GPC as defined in the literature \cite{Francis2023a} has several more features.
Notably, it also supports edge variables, and variables to occur below union and repetition operators.
This effectively yields two more types of variables, namely optional and group variables, and the semantics become more involved --
in the literature a type system is used to determine whether a pattern is valid and define the semantics \cite[Section~4]{Francis2023a}.
This is not required for our fragment.
Another feature are conditions to match property values.
Since property values are covered by the other components of our constraints, namely \(P\) and \(C\), we do not require this feature.
GPC also allows for traversing edges in the reverse direction or undirected edges.
Lastly, path patterns are equipped with restrictors like, for instance, \say{simple} and \say{shortest}, which restrict matches to simple paths (no repeated nodes) and shortest paths.
The purpose of restrictors is to ensure that there are (only) finitely many matches.
This is important if patterns are used to define queries, but constraints do not have a return value.
We note that restrictors have also been studied for CRPQs \cite{Figueira2023}.

Lastly, we allow one to specify label constraints using propositional formulas \(\varphi\).
This is not supported by the original definition \cite{Francis2023a}.
Although conjunction and disjunction of labels can be expressed using concatenation and union, expressing the absence of labels (using negation) is, strictly speaking, not possible in the original definition of GPC.

\subsection{RGPC Constraints vs.\ PG-Constraints}
As discussed in the introduction, RGPC constraints form a subset of PG-Constraints, which were proposed as an extension of concrete graph query languages, in particular GQL, with constraints \cite{Angles2021,Angles2023}.

In Section~\ref{section:constraints}, we briefly discussed that the constraint from Example~\ref{example:constraints-intro} is equivalent to the PG-Constraint shown in Figure~\ref{figure:example-pg-intro}, which uses the  \lstinline|MANDATORY| keyword.
Here we provide some more details.

In general, a \emph{PG-Constraint} stipulating mandatory elements is written as
\begin{lstlisting}[numbers=none]
	FOR |\(\tuple{x}\)| |\(p(\tuple{x})\)| MANDATORY |\(\tuple{y}\)| |\(q(\tuple{x}, \tuple{y})\)|.
\end{lstlisting}
Here \(p(\tuple{x})\) and \(q(\tuple{x},\tuple{y})\) are queries called the \emph{scope} and \emph{descriptor query}, respectively.
They refer to variables \(\tuple{x} = (x_1,\ldots,x_n)\) and \(\tuple{y} = (y_1,\ldots,y_m)\),
that are assigned to nodes, paths, or properties.

\begin{example}\label{example:pg-constraints-intro}
	The PG-Constraint shown in Figure~\ref{figure:example-pg-intro}, which corresponds to the RGPC constraint from Example~\ref{example:constraints-intro}, uses GQL-like scope and descriptor queries.
	The \lstinline|MATCH| clause and the first \lstinline|FILTER| clause form the scope query with free variables \lstinline|x| and \lstinline|y|.
	The \lstinline|MATCH| clause is essentially an ASCII art translation of the RGPC graph pattern in Example~\ref{example:constraints-intro}.
	The first \lstinline|FILTER| clause asserts that the project has started by comparing the start (date) with the current time.

	The \lstinline|MANDATORY| clause then states that, for every possible match of the \lstinline|MATCH|-clause's pattern, it is \say{mandatory} that the nodes assigned to \lstinline|x| and \lstinline|y| both have an \pgkey{access\_level} property (more precisely, the records of both nodes are defined for the key \pgkey{access\_level}).

	Finally, it is also \say{mandatory} that the descriptor query, which consists of the second \lstinline|FILTER|-clause in this example, is satisfied.
\end{example}
The translation in Example~\ref{example:pg-constraints-intro} can be generalized to arbitrary constraints \(\query; F \cimp C\): Every RGPC pattern \(\query\) can be expressed in GQL \cite{Francis2023a} and \(C\) and \(F\) can be written as \lstinline|FILTER| clauses.
\section{Additional Material for Section~4}\label{appendix:repairing}
In Appendix~\ref{section:automata-construction} we provide the formal definition and construction of our automata model for RGPC patterns informally explained in Appendix~\ref{section:label-repairs}.
Proofs for Section~\ref{section:cleaning:repairing} are presented in Appendix~{section:error-repairing-proofs}.
In Appendix~\ref{section:explicit-ilps} we briefly discuss an alternative (which performs worse) to the integer linear programs presented in Section~\ref{section:repair-pipeline}.

\subsection{Automata for RGPC Patterns}\label{section:automata-construction}
Step~2 of our repair pipeline relies on an automata construction, which we detail in the following.
We translate RGPC path patterns \(\mathcal{P}\) \emph{without node variables} into automata \(\autA\) which accept precisely those words which are traces of paths matched by \(\mathcal{P}\).
For comprehensibility and readability, we assume in the following that \(\mathcal{P}\) does not contain any reverse traversals as introduced in Appendix~\ref{section:reverse-traversal}, i.e.\ (sub-)patterns of the form \(\gpceprev{\(\varphi\)}\).
We will address reverse traversals towards the end of this section.

We start by defining our automata model, which follows the definition of classical symbolic automata, except that there are two sets \(\statesetA_N\) and \(\statesetA_E\) of states.
A state from \(\statesetA_N\) indicates that the next transitions reads labels of a node, and a state from \(\statesetA_E\) that the next transition reads labels of an edge.
We write \(\mathcal{B}(S)\) for the set of propositional formulas over a set \(S\) of propositions.

\begin{definition}[RGPC Automaton]
	A \emph{(finite) RGPC automaton} \(\autA = (\statesetA_N,\statesetA_E,\mathbf{L},\istateA,\transitionRel,\astateset)\) consists of
	\begin{itemize}
		\item two finite, disjoint sets \(\statesetA_N\), \(\statesetA_E\) of \emph{states},
		\item a finite set \(\mathbf{L}\) of labels,
		\item an \emph{initial state} \(\istateA\in\statesetA_N\),
		\item a finite \emph{transition relation} \[
			\transitionRel\subseteq \left(\statesetA_N\times \mathcal{B}(\mathbf{L})\times \statesetA_E\right) \cup \left(\statesetA_E\times \mathcal{B}(\mathbf{L})\times \statesetA_N\right),
		\]
		and
		\item a set \(\astateset\subseteq\statesetA_E\) of \emph{accepting states}.
	\end{itemize}
\end{definition}
An RGPC automaton reads words over the alphabet \(\Gamma = 2^\mathbf{L}\).
A \emph{run} of \(\autA\) on a sequence of label sets \(L_1\ldots L_n\in \Gamma^*\) is a sequence \(\stateA_0\stateA_1\ldots\stateA_n\) of states starting with the initial state \(\istateA\) such that, for each \(i\in\range{n}\), there is a formula \(\varphi_i\) such that \((\stateA_{i-1},\varphi_i,\stateA_{i})\in\transitionRel\), and \(L_i\models \varphi_i\).
Here \(L_i\models \varphi_i\) holds, if the formula \(\varphi_i\) evaluates to true when interpreting all labels in \(L_i\) as variables set to true, and all other labels as variables set to false.
A run \(\stateA_0\stateA_1\ldots\stateA_n\) is \emph{accepting} if \(\stateA_n\in\astateset\).
An RGPC automaton \emph{accepts} a sequence of label sets \(L_1\ldots L_n\) if there is an accepting run on that sequence, it \emph{accepts a path} \(p = v_0e_1v_1\ldots e_nv_n\) of a property graph if it accepts the trace \(\lambda(p) = \lambda(v_0)\lambda(e_1)\lambda(v_1)\ldots \lambda(e_n)\lambda(v_n)\) of \(p\).

\emph{An automaton for an RGPC path pattern} \(z = \mathcal{P}\) is an RGPC automaton \(\autA\) such that, for every property graph \(\graph\), \(\autA\) accepts a path \(p\) in \(\graph\) if and only if there is a match \(\match\) with \(\match(z) = p\), i.e.\ \(p\) is matched by \(z = \mathcal{P}\).

\begin{example}\label{example:old-automata-construction-example}
	Consider the path pattern
	\[
	\gpcnp{a} \gpcep{b} \gpcnp{c}\gpcrep{*}{%
		\gpcnp{d}\gpcep{b \(\lor\) e} \gpcnp{f \(\land~\lnot\)a}
	}.
	\]
	The automaton for this path pattern is illustrated in Figure~\ref{figure:old-example-automaton}.

	First, note that the transitions are labelled with propositional formula instead of (single) labels of the property graph.
	For instance, the transition with the formula \texttt{f \(\land~\lnot\)a} actually represents several transitions: namely, one for each set of labels that contains the label~\(\pglabel{f}\) but not the label~\(\pglabel{a}\).
	In general, computing (and writing) them explicitly would result in an exponential blow-up of the automaton.

	For instance, the outgoing transition of state~\(1\) originates from the node pattern \(\gpcnp{a}\), and the outgoing transition of state \(2\) originates from the edge pattern \(\gpcep{b}\).
\end{example}

\begin{figure}%
	\includegraphics[width=\columnwidth, trim=.25cm .1cm 0cm .1cm, clip]{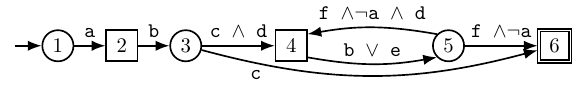}%
	\caption{%
		Automaton for the RGPC pattern from Example~\ref{example:old-automata-construction-example}.%
	}\label{figure:old-example-automaton}%
\end{figure}

We are now prepared to translate an RGPC pattern \(\mathcal{P}\) into an RGPC automaton \(\autA\) for \(\mathcal{P}\).
For this purpose, we adapt the standard construction for translating regular expressions into automata:
Recall that the automaton reads labels from nodes and edges in alternation, and that consecutive node patterns, without edge patterns in between, match the same node (labels).
We observe that RGPC path patterns follow an inductive structure: The base cases are node patterns \(\gpcnp{\(\varphi\)}\) and edge patterns \(\gpcep{\(\psi\)}\), and composed patterns are built by combining path patterns using the union, repetition, and concatenation operators.
We can thus define \(\autA\) by induction over the structure of an RGPC path pattern \(\mathcal{P}\).
During the construction, we maintain the following invariant: \(\autA\) has exactly one accepting state \(\astateA\), and the accepting state \(\astateA\) is \emph{not} the initial state \(\istateA\).
Furthermore, \(\istateA\) has only outgoing transitions and \(\astateA\) has only incoming transitions.
Lastly, let us emphasize that our definition of symbolic automata does \emph{not} permit \(\varepsilon\)-transitions.

Let \(\mathbf{L}\) be the set of labels that occur in the pattern \(\mathcal{P}\).
The base cases can be translated as follows.
\begin{itemize}
	\item \(\mathcal{P} = \gpcnp{\(\varphi\)}\).
	A node pattern asserts the existence of a node whose labels satisfy \(\varphi\).
	We thus set
	\[\autA = (\{\istateA\}, \{\astateA\}, \mathbf{L}, \istateA, \{(\istateA,\varphi,\astateA)\}, \{\astateA\}).\]
	\item \(\mathcal{P} =~\gpcep{\(\varphi\)}\).
	Next to asserting the existence of an edge, \(\mathcal{P}\) also asserts the existence of two nodes, namely the endpoints of the edge.
	Therefore, we define
	\begin{multline*}
		\autA = (\{\istateA,\stateA_t\}, \{\stateA_s,\astateA\}, S, \istateA,\\ \{(\istateA,\top,\stateA_s),(\stateA_s,\varphi,\stateA_t),(\stateA_t,\top,\astateA)\}, \{\astateA\}).
	\end{multline*}
	Here \(\top\) denotes the tautology which is satisfied by every label set.
\end{itemize}

Towards the construction for composed path patterns \(\mathcal{P}\), let \(\autA^1 = (\statesetA^1_N,\statesetA^1_E,\mathbf{L},\istateA^1,\transitionRel^1,\{\astateA^1\})\) and  \(\autA^2 = (\statesetA^2_N,\statesetA^2_E, \mathbf{L},\istateA^2,\transitionRel^2,\{\astateA^2\})\) be the automata constructed for RGPC path patterns \(\mathcal{P}_1\) and \(\mathcal{P}_2\), respectively.
Without loss of generality, we assume that the state sets \(\statesetA^1_N,\statesetA^1_E,\statesetA^2_N,\statesetA^2_E\) are pairwise disjoint.
We construct an automaton \(\autA = (\statesetA_N,\statesetA_E,\mathbf{L},\istateA,\transitionRel,\{\astateA\})\) for \(\mathcal{P}\) composed of \(\mathcal{P}_1\) and \(\mathcal{P}_2\).

\begin{itemize}
	\item \(\mathcal{P} = \mathcal{P}_1\mathcal{P}_2\).
	Recall that the last node matched by \(\mathcal{P}_1\) corresponds to the first node matched by \(\mathcal{P}_2\), following the definition of a concatenation of two paths.
	Therefore, we \say{merge} the incoming transitions of \(\astateA^1\) with the outgoing transitions of \(\istateA^2\).
	\begin{multline*}
		\statesetA_N = \statesetA^1_N \cup \statesetA^2_N, \quad \statesetA_E = \statesetA^1_E \cup \statesetA^2_E, \quad \istateA = \istateA^1, \quad \astateA^2 = \astateA^2,\\
		\transitionRel = \transitionRel^1 \cup \transitionRel^2 \cup \{ (\stateA^1,\varphi\land\psi,\stateA^2) \mid (\stateA^1,\varphi,\astateA^1)\in\transitionRel^1, (\istateA^2,\psi,\stateA^2)\in\transitionRel^2\}
	\end{multline*}

	\item \(\mathcal{P} = \mathcal{P}_1 \cup \mathcal{P}_2\).
	Since initial states have only outgoing and accepting states only incoming transitions, these transitions can simple be copied for a fresh initial state \(\istateA\) and a fresh accepting state \(\astateA\).
	\begin{multline*}
		\statesetA_N = \statesetA^1_N \cup \statesetA^2_N \cup \{\istateA\}\\
		\statesetA_E = \statesetA^1_E \cup \statesetA^2_E \cup \{\astateA\},\\
		\transitionRel = \transitionRel^1 \cup \transitionRel^2 \cup \{(\istateA,\varphi,\stateA) \mid (\istateA^1,\varphi,\stateA)\in\transitionRel^1 \text{ or } (\istateA^2,\varphi,\stateA)\in \transitionRel^2\}\\
		\cup \{(\stateA,\varphi,\astateA) \mid (\stateA,\varphi,\astateA^1)\in\transitionRel^1 \text{ or } (\stateA,\varphi,\astateA^2)\in \transitionRel^2\}
		\\
		\cup \{(\istateA,\varphi,\astateA) \mid (\istateA^1,\varphi,\astateA^1)\in\transitionRel^1 \text{ or } (\istateA^2,\varphi,\astateA^2)\in \transitionRel^2\}
	\end{multline*}

	\item \(\mathcal{P} = \mathcal{P}_1^*\).
	The state sets are \(\statesetA_N = \statesetA^1_N \cup \{\istateA\}\) and \(\statesetA_E = \statesetA^1_E \cup \{\astateA\}\) where \(\istateA\) and \(\astateA\) are the fresh initial and accepting states, respectively.
	The transition relation is \(\transitionRel = \transitionRel^1 \cup \transitionRel_\varepsilon \cup \transitionRel_{\ge 1} \cup \transitionRel_{\ell}\) where \(\transitionRel_{\varepsilon}\), \(\transitionRel_{\ge 1}\), and \(\transitionRel_{\ell}\) are defined as follows.

	The case of \(\mathcal{P}\) matching an empty path, i.e.\ the case of zero repetitions, is covered by \(\transitionRel_\varepsilon\).
	Recall that paths of length zero consists of one node (with an arbitrary label).
	Thus, we set \(\transitionRel_\varepsilon = \{(\istateA,\top,\astateA)\}\).

	For the other cases the transitions of \(\istateA^1\) and \(\astateA^1\) are copied to \(\istateA\) and \(\astateA\).
	\begin{multline*}
		\transitionRel_{\ge 1} =
		\{ (\istateA,\varphi,\stateA^1) \mid (\istateA^1,\varphi,\stateA^1)\in\transitionRel^1\}\\ \cup
		\{ (\stateA^1,\varphi,\astateA) \mid (\stateA^1,\varphi,\astateA^1)\in\transitionRel^1\}
	\end{multline*}
	Finally, we allow the automaton to \say{loop}.
	Similar to our construction for the concatenation, transitions have to be merged.
	\[
	\transitionRel_{\ell} =
	\{ (\stateA^1_s,\varphi\land\psi,\stateA^1_t) \mid (\stateA^1_s,\varphi,\astateA^1)\in\transitionRel^1, (\istateA^1,\psi,\stateA^1_t)\in\transitionRel^1\}
	\]
\end{itemize}

\paragraph{Adding Reverse Traversals}
	For handling reverse traversals, i.e.\ patterns of the form \(\gpceprev{\(\psi\)}\), we will understand the direction of an edge as an additional label.
	More precisely, we consider two special labels \textsf{fw} and \textsf{bw} for forward and reverse traversals, respectively, which do not occur in the pattern \(\mathcal{P}\) (or the underlying property graph).
	For a pattern \(\gpcep{\(\varphi\)}\), we then construct the transition \((\stateA_s, \textsf{fw} \land \varphi, \stateA_t)\) instead of \((\stateA_s,\varphi, \stateA_t)\).
	The automaton for a pattern \(\gpceprev{\(\psi\)}\) can then be constructed analogously, yielding in the transition \((\stateA_s, \textsf{bw} \land \varphi, \stateA_t)\).
	Observe that label expressions of transitions for edge patterns are never combined in the overall construction (only label expressions for node labels are).
	Consequently, all outgoing transitions of states in \(\statesetA_E\), i.e.\ states that indicate that the next transitions reads labels of an edge, have either label expressions of the form \(\textsf{fw} \land \varphi\) or of the form \(\textsf{bw} \land \varphi\), where \(\varphi\) and \(\psi\) contain neither \textsf{fw} nor \textsf{bw}.
	Outgoing transitions of states in \(\statesetA_N\) do not contain the special labels \textsf{fw} and \textsf{bw} at all, since they read only node labels.

	For adapting the semantics, we define the \emph{extended trace} of a path \(P = v_0e_1v_1\ldots e_nv_n\) as \[\lambda^*(P) = \lambda(v_0)\lambda^*(v_0,e_1,v_1)\lambda(v_1)\ldots \lambda^*(v_{n-1},e_n,v_n)\lambda(v_n),\] where \(\lambda(v_i)\) is, as before, the set of labels of node \(v_i\), and \[\lambda^*(v_{i-1}, e_i, v_i) =
		\begin{cases}
			\lambda(e_i) \cup \{\textsf{fw}\}, & \text{ if }\rho(e_i) = (v_{i-1}, v_i),\\
			\lambda(e_i) \cup \{\textsf{bw}\}, & \text{ if }\rho(e_i) = (v_{i}, v_{i-1})
		\end{cases}
	\]
	extends the set \(\lambda(e_i)\) of labels of an edge \(e_i\) by either \textsf{fw} or \textsf{bw}, depending on whether \(e_i\) is traversed in a forward or reverse fashion in the path \(P\).

	An RGPC automaton (with reverse traversals) then accepts a path \(P = v_0e_1v_1\ldots e_nv_n\) if it accepts the extended trace \(\lambda^*(P)\) of that path.

\subsection{Error Repairing}\label{section:error-repairing-proofs}
This section is dedicated to the proofs of Propositions~\ref{result:topological-repair-ind-set} and~\ref{result:topological-repair-greedy}.

\begin{proof}[Proof of Proposition~\ref{result:topological-repair-ind-set}]
	Let \(\Sigma\) be a set of RGPC constraints, \(\graph\) be a property graph, and \(V\) be a minimum weight vertex cover of the (topological) conflict hypergraph of \(\graph\) w.r.t.\ \(\Sigma\).
	Furthermore, let \(\graph'\) be the subgraph of \(\graph\) obtained by removing all nodes and edges in \(V\), as well as all edges incident to the nodes in \(V\).
	We prove that \(\graph'\) is a repair by showing that conditions~a) and~b) from Definition~\ref{definition:repair} hold by contradiction.

	For the sake of contradiction assume that Condition~a) does not hold, that is \(\graph'\) violates a constraint \(\query; F \cimp C\) from \(\Sigma\).
	Then there is a match \(\match\) of \(\query\) in \(\graph'\).
	Let \(O_\match\) be the topological error induced by \(\match\).
	Since all nodes and edges of \(\graph'\) also appear in \(\graph'\) with the same labels and properties -- \(\graph'\) is a topological subgraph -- \(O_\match\) is also a topological error of \(\graph\).
	But then it is also a hyperedge of the conflict hypergraph of \(\graph\), and since all its elements are still present in \(\graph'\), its intersection with \(V\) is empty.
	This is a contradiction to \(V\) being a vertex cover.
	We conclude that Condition~a) holds.

	For proving that Condition~b) holds, we assume, again for the sake of contradiction, that there is a subgraph \(\graph''\) of \(\graph\) such that \(\graph'\) is a proper subgraph of \(\graph''\) and \(\graph''\) satisfies all constraints from~\(\Sigma\).
	Then there is a node or edge \(o\) in \(\graph''\) that is not present in \(\graph'\).
	We make a case distinction.

	In case \(o\in V\), consider all topological errors \(O_1,\ldots,O_m\) of \(\graph\) containing object \(o\).
	Since \(\graph''\) satisfies all constraints, there are objects \(o_1\in O_1,\ldots,o_m\in O_m\), which are distinct from \(o\), and not present in \(\graph''\).
	But since \(\graph'\) is a subgraph of \(\graph''\), the objects \(o_1,\ldots,o_m\) are also not present in \(\graph'\).
	Furthermore, we can assume that every \(o_j\) is in \(V\): If not, then \(o_j\) is an edge which was removed because one of its endpoints, say a node \(n\), was removed.
	But then \(O_j\) also contains \(n\), since errors originate from paths.
	Moreover, \(n\) cannot be \(o\) because the weight of \(n\) is higher than the sum of the weights of its incident edges, and hence higher than the sum of weights of all \(o_j\) not in \(V\).
	Consequently, \(n\) being in \(V\) but some edges \(o_j\) not being in \(V\) would contradict \(V\) having minimum weight.
	Therefore, we can indeed assume that \(o_j\) is \(n\), and hence in \(V\).
	In addition, since all \(o_j\) are in \(V\), we can conclude that \(V\setminus \{o\}\) is a vertex cover.
	And since it has a lesser weight than \(V\) this contradicts the fact that \(V\) is a minimal vertex cover.

	It remains to consider the case \(o\notin V\).
	Then \(o\) is an edge that was removed because one of its endpoints \(n\) was removed due to being in~\(V\).
	Since \(\graph''\) contains \(o\), it also has to contain its endpoint \(n\).
	Therefore, applying the first case for \(n\in V\) yields a contradiction.

	Overall, we can conclude that Condition~b) holds.
\end{proof}

\begin{proof}[Proof of Proposition~\ref{result:topological-repair-greedy}]
	We proceed similarly as in the proof of Proposition~\ref{result:topological-repair-ind-set}.
	Let \(\Sigma\) be a set of RGPC constraints, \(\graph\) be a property graph, and \(V_g\) be a minimum vertex cover of the (topological) conflict hypergraph of \(\graph\) w.r.t.\ \(\Sigma\), computed by the greedy algorithm.
	Furthermore, let \(\graph'\) be the subgraph of \(\graph\) obtained by removing all nodes and edges from \(V_g\), as well as all edges incident to nodes in \(V_g\).
	We prove that \(\graph'\) is a repair by showing that conditions~a) and~b) from Definition~\ref{definition:repair} hold.
	The proof of Condition~a) being satisfied is exactly the same as in the proof for Proposition~\ref{result:topological-repair-ind-set}, since it only relies on \(V_g\) being a vertex cover.

	To prove that Condition~b) holds, we assume, for the sake of contradiction, that there is a subgraph \(\graph''\) of \(\graph\) such that \(\graph'\) is a proper subgraph of \(\graph''\) and \(\graph''\) satisfies all constraints from~\(\Sigma\).
	Then there is a node or edge \(o\) in \(\graph''\) that is not present in \(\graph'\).

	In case \(o\in V_g\), consider all topological errors \(O_1,\ldots,O_m\) of \(\graph\) with \(V_g \cap O_i = \{o\}\).
	Note that there is at least one such \(O_i\), because otherwise \(o\) would have been removed from \(V_g\) in the second phase of the greedy algorithm.
	Furthermore, for at least one of these errors \(O_j\) we have that \(o\) is the object with the least weight among all objects in \(O_j\).
	If this would not be the case, the greedy algorithm would have picked elements \(o_1\in O_1,\ldots,o_m\in O_m\) with a smaller weight than \(o\) in the first phase and consequently removed \(o\) from \(V_g\) in the second phase prior to removing \(o_1,\ldots,o_m\), since \(o\) has a greater weight.
	We fix one of these errors \(O_j\) for which \(o\) has minimal weight among the objects in \(O_j\).
	Since \(V_g \cap O_j = \{o\}\), all objects \(O_j\setminus \{o\}\) are in \(\graph'\), and hence also in \(\graph''\) since \(\graph'\) is a subgraph of \(\graph''\).
	Note that this is particularly true if \(o\) is a node: Since \(o\) has minimal weight, \(O_j\) cannot contain an edge in that case.
	But since \(\graph''\) also contains \(o\), it contains all objects of \(O_j\).
	This is a contradiction because \(O_j\) originates from a match \(\match\) of the pattern \(\query\) of a constraint \(\query; F \cimp C\) in \(\Sigma\) that passes \(F\) but does not satisfy \(C\).
	Thus, \(\graph''\) violates the constraint.

	It remains to consider the case \(o\notin V_g\).
	Then \(o\) is an edge that was removed because one of its endpoints \(n\) was removed due to being in~\(V_g\).
	Since \(\graph''\) contains \(o\), it also has to contain its endpoint \(n\).
	Therefore, applying the first case for \(n\in V_g\) yields a contradiction.

	Overall, we can conclude that Condition~b) holds.
\end{proof}

\subsection{ILPs with Explicit Dependencies}\label{section:explicit-ilps}
In this section, we discuss an alternative to the ILPs presented in Section~\ref{section:cleaning:repairing}, and why we did \emph{not} choose it.
Thanks to the use of weights in the ILPs presented in Section~\ref{section:cleaning:repairing}, it is ensured that nodes are only deleted if necessary, and deleting (a subset of its) incident edges takes precedence.
One potential issue with this approach is, however, that the weights are \emph{static} (i.e.\ hardcoded in the ILP):
For example, consider a constraint that forbids the presence of two different nodes with the same value for the \pgkey{name} property.
Suppose there are two nodes \(n_1,n_2\) with two and five incident edges, respectively, violating this constraint.
Since \(n_1\) has weight \(3\) and \(n_2\) weight \(6\), an ILP solver will always mark \(n_1\) for deletion.
This is, in particular, true even if, due to another constraint, the ILP also marked all incident edges of \(n_2\) for deletion (and no incident edge of \(n_1\)).
But in this case, the \say{effective weight} of \(n_2\) is only \(1\), and thus \(n_2\) should be marked for deletion instead of \(n_1\).
This can be addressed by constructing and solving an alternative ILP, in which dependencies are encoded explicitly.
That is, if \(n_1\) has the two incident edges \(e_1\) and \(e_2\) in the property graph, then we add the constraint \(2x_{n_1} \le x_{e_1} + x_{e_2}\) to the ILP.
Thanks to this constraint, the variables for \(x_{e_1}\) and \(x_{e_2}\) for the edges have to be assigned value \(1\) in a solution whenever the variable \(x_{n_1}\) for \(n_1\) is assigned value \(1\) (recall that value \(1\) means that the object is deleted).
Weights are then no longer necessary and the ILP takes the following form, where \(E_n\) is the set of incident edges of a node \(n\in\nodeset\) of the property graph.\footnote{Strictly speaking, the elements of each \(E_n\) with \(n\in\hnodeset\) have to be added to the set \(\hnodeset\) of vertices of the conflict hypergraph.}
\begin{align*}
	\textnormal{minimize } & \quad\sum_{v\in \hnodeset} x_v\\
	\textnormal{subject to }& \quad\sum_{v\in O} x_v \ge 1 \quad \textnormal{ for all } O\in\hedgeset\\
	\textnormal{and } & \quad \abs{E_n}\cdot x_n \le \sum_{e\in E_n} x_e \quad \textnormal{ for all } n\in \nodeset \cap \hnodeset\\
	\textnormal{and } & \quad  0 \le x_v \le 1 \quad \textnormal{ for all } v\in\hnodeset
\end{align*}
This variant trades a larger ILP for a \emph{potentially} better repair.
However, in our experiments, it took significantly longer to solve the ILP with explicit dependencies in all cases, while the repairs obtained were the same as for the ILP with weights.

\section{Additional Material for Experiments}\label{appendix:experiments}
In Appendix~\ref{section:additional-plots} we present plots for combinations of optional steps/datasets/constraints not shown in Section~\ref{section:experiments}.
Appendix~\ref{section:raw-numbers} contains the raw numbers used for the plots, as well as the runtime increase and deletion reduction percentages, in particular those reported in the paper.

\subsection{Additional Experimental Results}\label{section:additional-plots}
We provide additional plots akin to those in Section~\ref{section:experiments}, for the remaining combinations of optional steps/datasets/constraints.

\paragraph{Scaling}
In Section~\ref{section:performance-quality} we provide plots for showcasing how the base pipeline scales on the ICIJ and LDBC graphs with \(1\)-way constraints.
Figure~\ref{figure:scaling-1way-missing} shows similar plots for the legislative and Coreutils datasets.
As observed in the paper for the ICIJ and LDBC datasets, the number of deletions is comparatively small, and the pipeline scales well.

\begin{figure*}
	\centering%
	\begingroup%
	\renewcommand{\plotscale}{.33}%
	\begin{subfigure}{.245\linewidth}
		\centering%
		\includegraphics[scale=\plotscale]{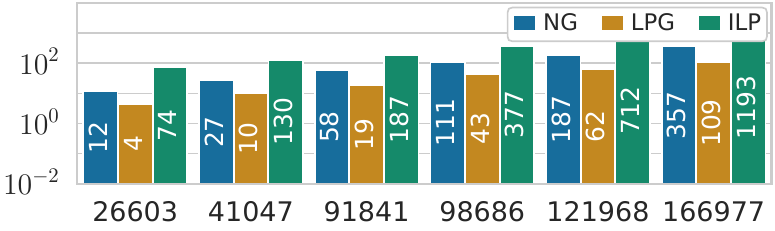}
		\caption{Legislative: Runtime}\label{figure:legislative-err-c-runtime}
	\end{subfigure}
	\hfill
	\begin{subfigure}{.245\linewidth}
		\centering%
		\includegraphics[scale=\plotscale]{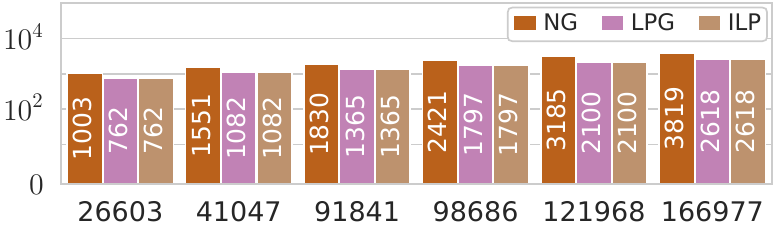}
		\caption{Legislative: Edge deletions}\label{figure:legislative-err-c-deletions}
	\end{subfigure}
\hfill
\begin{subfigure}{.245\linewidth}
	\centering%
	\includegraphics[scale=\plotscale]{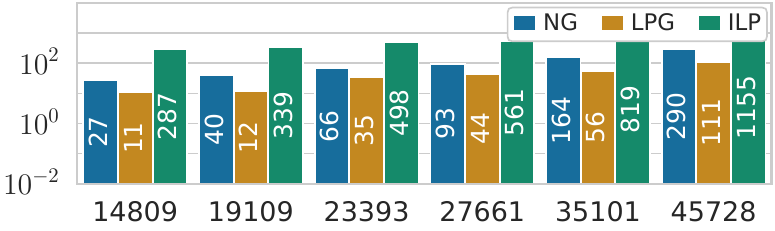}
	\caption{Coreutils: Runtime}\label{figure:coreutils-err-c-runtime}
\end{subfigure}
\hfill
\begin{subfigure}{.245\linewidth}
	\centering%
	\includegraphics[scale=\plotscale]{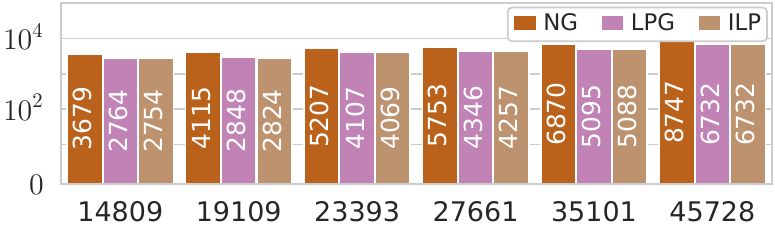}
	\caption{Coreutils: Edge deletions}\label{figure:coreutils-err-c-deletions}
\end{subfigure}
\endgroup%
\caption{Runtime and number of edge deletions for LDBC and the Coreutils graph for subsets of \(1\)-way constraints}%
\label{figure:coreutils-err-c}\label{figure:legislative-err-c}\label{figure:scaling-1way-missing}
\end{figure*}

We also performed scaling experiments for \(1\)-way constraints with Step~\(2\) enabled.
Figures~\ref{figure:runtimes-err-c-label} and~\ref{figure:deletions-err-c-label} show the runtime and number of label deletions, if Step~2 is enabled.
They assert that, as for the base pipeline, our pipeline performs well when the number of required deletions is small if Step~\(2\) is enabled, although the runtime is consistently higher than for the base pipeline, confirming our observations in Section~\ref{section:ablation}.

\paragraph{Approximate Repairs}
In Section~\ref{section:performance-quality} we investigated the approximate repairs yielded by the selection phase of the naive and LP-guided greedy algorithms.
For the ICIJ graph we discussed that the selection phase of the naive greedy performs generally better than the one for the LP-guided greedy algorithm, but overall the naive greedy algorithm performs worse, since it has to trim more objects in the trimming phase.
The same applies to the other datasets as shown in Figures~\ref{figure:classwise-runtimes-approx} and~\ref{figure:classwise-deletions-approx}, and in comparison with Figures~\ref{figure:legislative-classwise-runtime}, \ref{figure:coreutils-classwise-runtime}, \ref{figure:icij-classwise-runtime}, and~\ref{figure:ldbc-sf0.3-classwise-runtime}, which show total runtime, and Figures~\ref{figure:legislative-classwise-edge-deletions}, \ref{figure:coreutils-classwise-edge-deletions}, \ref{figure:icij-classwise-edge-deletions}, and~\ref{figure:ldbc-sf0.3-classwise-edge-deletions}, which show the edge deletions, in the paper for the full algorithms computing repairs.
As for the ICIJ dataset in the paper, we can also observe that the LP-guided greedy algorithms often achieves optimal quality (and thus actually computes a repair), while the naive greedy does not.
However, when the naive greedy comes close to an optimal solution in the selection phase, it can outperform the LP-guided greedy when computing a repair (and not only an approximation).

\begin{table*}
		\caption{Total edge weight for repairs proposed by the LP-guided greedy algorithm when the pipeline does or does not use the custom weights. Here the weight of an edge is its PageRank in the line graph. For readability all weights are rounded.}\label{table:custom-weights-page-rank}
	\begin{tabular}{l c c c c c c r}
		\toprule%
		&&& \multicolumn{2}{c}{\textbf{Repair w/o custom weights}} & \multicolumn{2}{c}{\textbf{Repair with custom weights}} & \\
		Dataset & Shape & Original Weight & Repair Weight & Weight Loss & Repair Weight & Weight Loss & Loss Reduction \\
		\midrule
		\multirow{5}{*}{ICIJ} & \(1\)-way & \multirow{5}{*}{579593.37} & 579060.61 & 0.0919\% & 579188.60 & 0.0698\% & 24\% \\
		& \(2\)-way & & 579233.32 & 0.0621\% & 579261.99 & 0.0572\% & 7\% \\
		& \(2\)-rep & & 579548.82 & 0.0077\% & 579569.60 & 0.0041\% & 46\% \\
		& loop & & 579143.61 & 0.0776\% & 579219.41 & 0.0645\% & 16\% \\
		& \(3\)-split & & 579590.94 & 0.0004\% & 579590.94 & 0.0004\% & 0\% \\
		\midrule
		\multirow{2}{*}{Coreutils} & \(1\)-way & \multirow{2}{*}{112369.10} & 110873.74 & 1.3308\% & 110920.13 & 1.2895\% & 3\% \\
		& \(2\)-way &  & 112203.69 & 0.1472\% & 112225.11 & 0.1281\% & 13\% \\

		\bottomrule
	\end{tabular}
\end{table*}

\paragraph{Minimizing Deletions of Important Objects}
In Section~\ref{section:custom-weights} we assessed the robustness of our pipeline with respect to custom weights.
More precisely, we used the PageRank scores for nodes and edges as custom weights for repairing the ICIJ graph, and discussed that the weight loss of a repair can be reduced by \(24\%\) when using custom weights.
This indicates that more important edges, i.e.\ edges with a high PageRank score, are preserved.
Table~\ref{table:custom-weights-page-rank} shows the total weights, weight losses, and loss reductions for all sets of constraints.
The loss reduction is computed as
\[
	100 \cdot \left(
		1 - \frac{w_\texttt{orig} - w_\texttt{cw-repair}}%
					{w_\texttt{orig} - w_{\texttt{repair}}}
	\right)
\]
where \(w_\texttt{orig}\) is the total weight of the input graph, \(w_\texttt{cw-repair}\) is the total weight of a repair when custom weights are provided, and \(w_{\texttt{repair}}\) is the total weight of a repair when \emph{no} custom weights are provided.
For \(2\)-rep constraints the loss reduction is even \(46\%\), albeit the weight loss is lower for repairs without custom weights than for the case of \(1\)-way constraints discussed in the paper.

Table~\ref{table:custom-weights-page-rank} also shows the experimental results for the Coreutils graph, and
Figures~\ref{figure:coreutils-classwise-runtime-custom-weights} and~\ref{figure:coreutils-classwise-edge-deletions-custom-weights} show the runtime and edge deletions for the Coreutils graph when PageRank scores are provided as custom weights.
Compared with Figures~\ref{figure:coreutils-classwise-runtime} and~\ref{figure:coreutils-classwise-edge-deletions} in the paper they confirm our findings for the ICIJ graph.

We used the Graph Data Science Library of Neo4j to compute the PageRank scores.
Since PageRank scores are, as usual, computed for nodes, we computed the PageRank scores for edges using line graphs, in which the role of nodes and edges is, intuitively, swapped: The \emph{line graph} \(\graph_L = (V_L, E_L)\) of a graph \(\graph = (V, E)\) is the graph with \(V_L = E\), and a pair \((e_1,e_2)\in V_L\times V_L\) is in \(E_L\) if and only if there are \(v_0,v_1,v_2 \in V\) such that \(v_0e_1v_1e_2v_2\) is a path in \(\graph\).
For the legislative and LDBC graph computing the line graph for computing edge weights was not possible with our hardware using Neo4j.

\begin{figure}
	\centering%
	\begin{subfigure}{.45\linewidth}
		\centering%
		\includegraphics[scale=\plotscale]{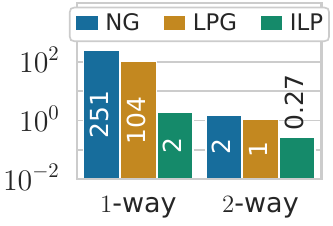}
		\caption{Runtime}\label{figure:coreutils-classwise-runtime-custom-weights}
	\end{subfigure}
	\hfill
	\begin{subfigure}{.45\linewidth}
		\centering%
		\includegraphics[scale=\plotscale]{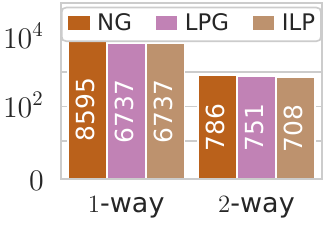}
		\caption{Edge Deletions}\label{figure:coreutils-classwise-edge-deletions-custom-weights}
	\end{subfigure}
	\hfill
	\caption{Runtimes (in seconds) and deletions for the Coreutils graph when using PageRank scores as custom weights}%
\label{figure:classwise-runtimes-custom-weights}
\end{figure}

\paragraph{Step~2/Label Deletions}
Figure~\ref{figure:classwise-label-deletions-appendix} shows the runtimes and number of edge deletions of our repair pipeline with Step~\(2\) enabled for the Coreutils and LDBC datasets and the sets of constraints outlined in Table~\ref{table:static-pgs}.
Compared to the corresponding plots for the base pipeline in Section~\ref{section:qualitative-study}, they provide the same insight as the analogous plots for the legislative and ICIJ graphs discussed in Section~\ref{section:ablation}: With Step~2 enabled the repair pipeline yields higher-quality repairs (measured in the number of deletions), at the expense of a higher runtime.

\paragraph{Step~3/Neighbourhood Errors}
Figures~\ref{figure:legislative-classwise-nbh}, \ref{figure:coreutils-classwise-nbh}, \ref{figure:coreutils-classwise-nbh}, and~\ref{figure:ldbc-sf0.3-classwise-nbh} show the plots for the neighbourhood error repairs (Step~\(3\)), except for the combination of ICIJ graph and the LP-guided greedy algorithm, for which they are shown in Section~\ref{section:ablation}.
They are supporting our insights discussed in Section~\ref{section:ablation}: If \(k\) is chosen well, a runtime improvement can be achieved in almost all cases, while retaining a high quality.
For \(2\)-rep constraints (which are only available for the ICIJ and LDBC graphs), \(k = 4\) yields a quality close (or identical) to the base pipeline, while for the other constraints \(k = 2\) already suffices.
This effect can be explained by the shape of our constraints: The minimal length of a path within a match of the \(2\)-rep constraints is~\(9\), while it is at most~\(6\) for the other constraints.
For matches where all paths are of length at most~\(6\) the complete match is within the \(4\)-neighbourhood, and the \(2\)-neighbourhood covers most of it.
Furthermore, since our constraints exhibit repetitions, it is likely that larger errors are resolved by repairing smaller ones, in particular those containing paths of length at most~\(6\).
For \(2\)-rep constraints, the \(4\)-neighbourhood covers matches with minimal-length paths almost completely but the \(2\)-neighbourhood does not, explaining why for \(k = 4\) the result is very close to the base case, and worse for \(k = 2\) in this case.
This suggests that a good choice for \(k\) is \(k = \frac{\ell_\text{min}}{2}\) where \(\ell_\text{min}\) is the minimal length of a path matched by a path pattern of the constraint.

	Another noteworthy observation is that the naive greedy algorithm can yield a better quality with Step~3 enabled than without.
	For example, this happens for \(1\)-way constraints and \(k = 4\) for the legislative graph (Figure~\ref{figure:legislative-classwise-nbh-deletions-greedy} vs.\ Figure~\ref{figure:legislative-classwise-edge-deletions} in the paper).
	The underlying reason is that the naive greedy selects \emph{any} object from an error for removal: Truncating an error can thus change which object is selected and can lead to a higher-quality repair (but also to lower-quality repairs in others).
	For the LP-guided greedy and ILP strategies we cannot observe this effect, since these strategies can better avoid bad selections thank to the (I)LP solutions.

Finally, Figure~\ref{figure:classwise-sample-runtimes} shows the results when running the pipeline with random samples of \(2k\) edges from each topological error, except for the ICIJ graph, for which they are shown in Section~\ref{section:ablation} in the paper.
They confirm our observation discussed in the paper for the ICIJ graph: taking random samples can results in significantly more deletions.
We conclude that picking neighbourhoods around path endpoints is a more appropriate strategy.

\begin{figure}
	\centering%
	\begin{subfigure}{.32\linewidth}
		\centering%
		\includegraphics[scale=\plotscale]{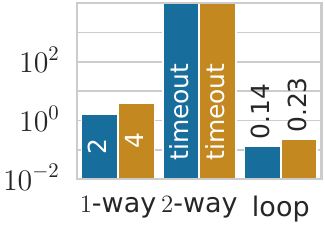}
		\caption{Legislative}\label{figure:legislative-classwise-runtime-approx}
	\end{subfigure}
	\hfill
	\begin{subfigure}{.29\linewidth}
		\centering%
		\includegraphics[scale=\plotscale]{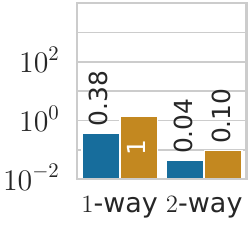}
		\caption{Coreutils}\label{figure:coreutils-classwise-runtime-approx}
	\end{subfigure}
	\hfill
	\begin{subfigure}{.35\linewidth}
		\centering%
		\includegraphics[scale=\plotscale]{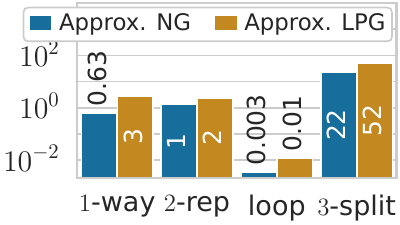}
		\caption{LDBC}\label{figure:ldbc-classwise-runtime-approx}
	\end{subfigure}
	\caption{Runtimes in seconds of the selection phase of the naive greedy (NG) and LP-guided greedy (LPG)}%
\label{figure:classwise-runtimes-approx}
\end{figure}

\begin{figure}
	\centering%
	\begin{subfigure}{.32\linewidth}
		\centering%
		\includegraphics[scale=\plotscale]{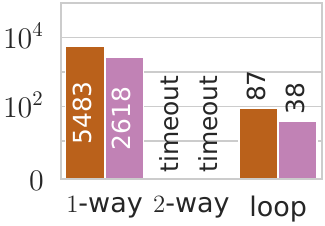}
		\caption{Legislative}\label{figure:legislative-classwise-deletions-approx}
	\end{subfigure}
	\hfill
	\begin{subfigure}{.29\linewidth}
		\centering%
		\includegraphics[scale=\plotscale]{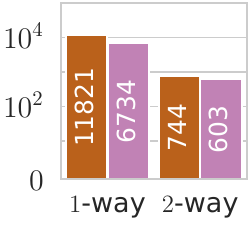}
		\caption{Coreutils}\label{figure:coreutils-classwise-deletions-approx}
	\end{subfigure}
	\hfill
	\begin{subfigure}{.35\linewidth}
		\centering%
		\includegraphics[scale=\plotscale]{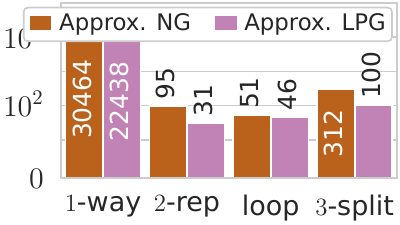}
		\caption{LDBC}\label{figure:ldbc-classwise-deletions-approx}
	\end{subfigure}
	\caption{Number of edge deletions of the selection phase of the naive greedy (NG) and LP-guided greedy (LPG)}%
\label{figure:classwise-deletions-approx}
\end{figure}

\begin{figure*}
	\centering%
	\begingroup%
	\renewcommand{\plotscale}{.36}%
	\begin{subfigure}{.27\linewidth}
		\centering%
		\includegraphics[scale=\plotscale]{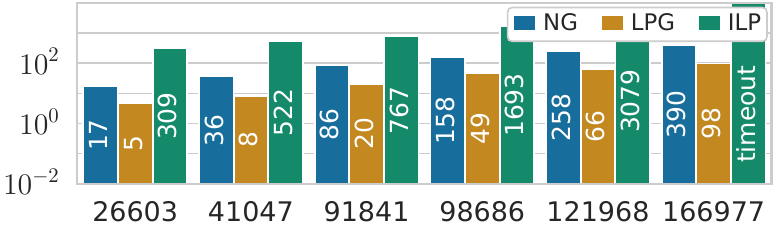}
		\caption{Legislative}\label{figure:legislative-err-c-label-runtime}
	\end{subfigure}
	\hfill
	\begin{subfigure}{.27\linewidth}
		\centering%
		\includegraphics[scale=\plotscale]{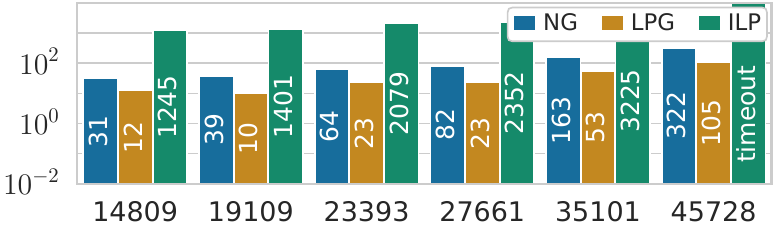}
		\caption{Coreutils}\label{figure:coreutils-err-c-label-runtime}
	\end{subfigure}
	\hfill
	\begin{subfigure}{.23\linewidth}
		\centering%
		\includegraphics[scale=\plotscale]{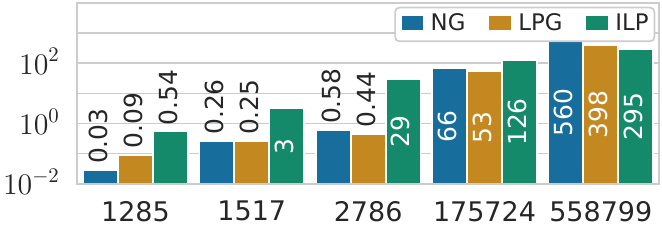}
		\caption{ICIJ}\label{figure:icij-err-c-label-runtime}
	\end{subfigure}
	\hfill
	\begin{subfigure}{.19\linewidth}
		\centering%
		\includegraphics[scale=\plotscale]{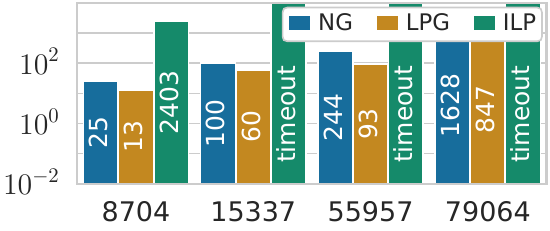}
		\caption{LDBC}\label{figure:ldbc-err-c-label-runtime}
	\end{subfigure}
	\endgroup%
	\caption{Runtimes of the repair pipeline with Step~\(2\) (label deletions) enabled for subsets of \(1\)-way constraints}%
	\label{figure:runtimes-err-c-label}
\end{figure*}

\begin{figure*}
	\centering%
	\begingroup%
	\renewcommand{\plotscale}{.36}%
	\begin{subfigure}{.27\linewidth}
		\centering%
		\includegraphics[scale=\plotscale]{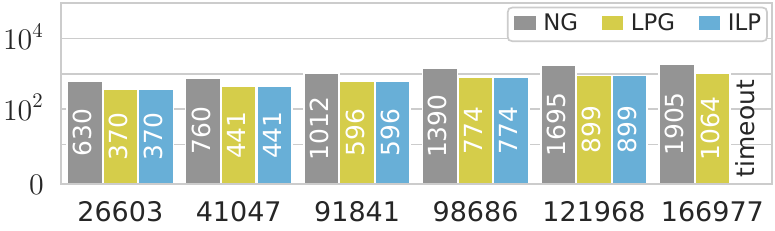}
		\caption{Legislative}\label{figure:legislative-err-c-label-deletions}
	\end{subfigure}
	\hfill
	\begin{subfigure}{.27\linewidth}
		\centering%
		\includegraphics[scale=\plotscale]{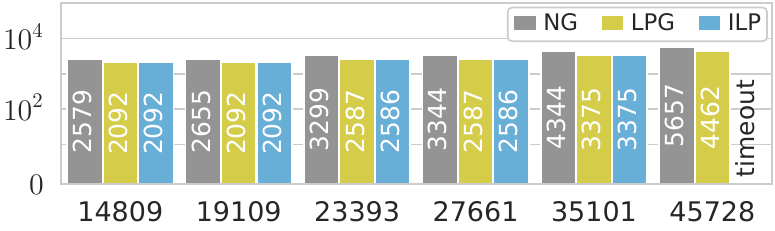}
		\caption{Coreutils}\label{figure:coreutils-err-c-label-deletions}
	\end{subfigure}
	\hfill
	\begin{subfigure}{.23\linewidth}
		\centering%
		\includegraphics[scale=\plotscale]{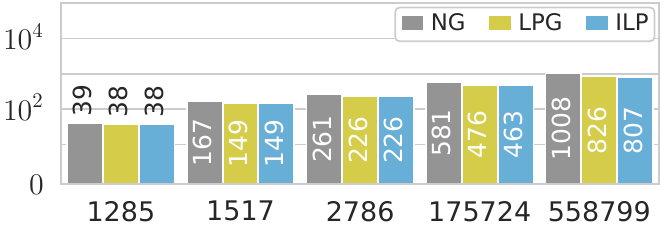}
		\caption{ICIJ}\label{figure:icij-err-c-label-deletions}
	\end{subfigure}
	\hfill
	\begin{subfigure}{.19\linewidth}
		\centering%
		\includegraphics[scale=\plotscale]{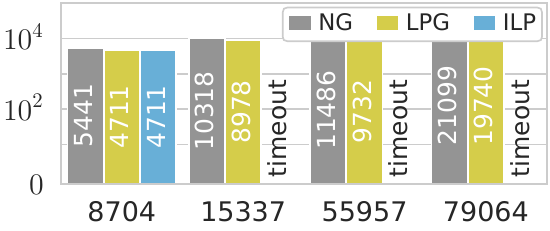}
		\caption{LDBC}\label{figure:ldbc-err-c-label-deletions}
	\end{subfigure}
	\endgroup%
	\caption{Label deletions proposed by the repair pipeline with Step~\(2\) (label deletions) enabled for subsets of \(1\)-way constraints}%
\label{figure:deletions-err-c-label}
\end{figure*}

\begin{figure*}
	\centering%
	\begin{subfigure}{.245\linewidth}
		\includePlot{Coreutils: Runtimes}{coreutils}{classwise-label-runtime}
	\end{subfigure}
	\hfill
	\begin{subfigure}{.245\linewidth}
		\includePlot{LDBC: Runtimes}{ldbc-sf0.3}{classwise-label-runtime}
	\end{subfigure}
	\hfill
	\begin{subfigure}{.245\linewidth}
		\includePlot{Coreutils: Label deletions}{coreutils}{classwise-label-deletions}
	\end{subfigure}
	\hfill
	\begin{subfigure}{.245\linewidth}
		\includePlot{LDBC: Label deletions}{ldbc-sf0.3}{classwise-label-deletions}
	\end{subfigure}
	\caption{Runtime with Step~2 enabled and number of label deletions for the Coreutils graph and LDBC}%
\label{figure:classwise-label-deletions-appendix}
\end{figure*}

\begin{figure*}
	\centering%
	\begin{subfigure}{.16\linewidth}
		\centering%
		\includegraphics[scale=\plotscale]{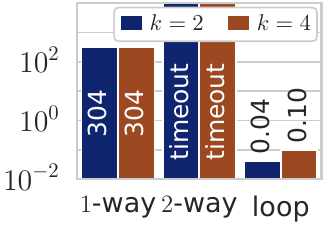}
		\caption{NG: Runtime}
		\label{figure:legislative-classwise-nbh-runtime-greedy}
	\end{subfigure}
	\hfill%
	\begin{subfigure}{.16\linewidth}
		\centering%
		\includegraphics[scale=\plotscale]{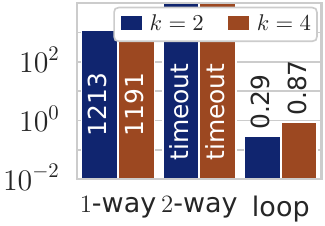}
		\caption{ILP: Runtime}
		\label{figure:legislative-classwise-nbh-runtime-ilp}
	\end{subfigure}
	\hfill%
	\begin{subfigure}{.16\linewidth}
		\centering%
		\includegraphics[scale=\plotscale]{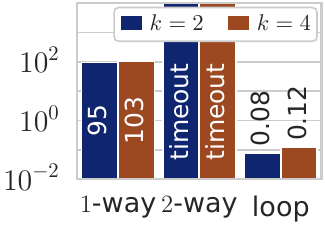}
		\caption{LPG: Runtime}
		\label{figure:legislative-classwise-nbh-runtime-lpr}
	\end{subfigure}
	\hfill%
	\begin{subfigure}{.16\linewidth}
		\centering%
		\includegraphics[scale=\plotscale]{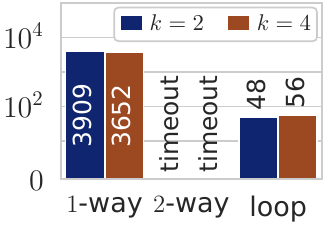}
		\caption{NG: Edge deletions}
		\label{figure:legislative-classwise-nbh-deletions-greedy}
	\end{subfigure}
	\hfill%
	\begin{subfigure}{.16\linewidth}
		\centering%
		\includegraphics[scale=\plotscale]{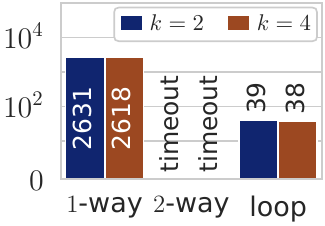}
		\caption{ILP: Edge deletions}
		\label{figure:legislative-classwise-nbh-deletions-ilp}
	\end{subfigure}
	\hfill%
	\begin{subfigure}{.16\linewidth}
		\centering%
		\includegraphics[scale=\plotscale]{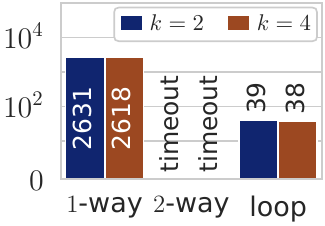}
		\caption{LPG: Edge deletions}
		\label{figure:legislative-classwise-nbh-edge-deletions-lpr}
	\end{subfigure}
	\hfill%
	\caption{Runtime (in seconds) and edge deletions for neighbourhood errors for the legislative graph}\label{figure:legislative-classwise-nbh}
\end{figure*}

\begin{figure*}
	\centering%
	\begin{subfigure}{.16\linewidth}
		\centering%
		\includegraphics[scale=\plotscale]{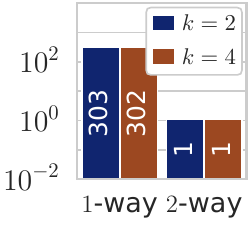}
		\caption{NG: Runtime}
		\label{figure:coreutils-classwise-nbh-runtime-greedy}
	\end{subfigure}
	\hfill%
	\begin{subfigure}{.16\linewidth}
		\centering%
		\includegraphics[scale=\plotscale]{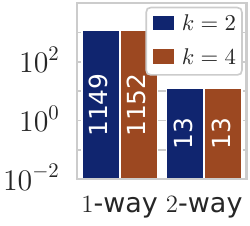}
		\caption{ILP: Runtime}
		\label{figure:coreutils-classwise-nbh-runtime-ilp}
	\end{subfigure}
	\hfill%
	\begin{subfigure}{.16\linewidth}
		\centering%
		\includegraphics[scale=\plotscale]{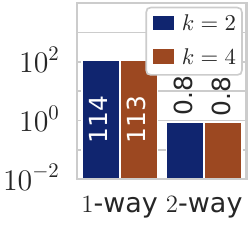}
		\caption{LPG: Runtime}
		\label{figure:coreutils-classwise-nbh-runtime-lpr}
	\end{subfigure}
	\hfill%
	\begin{subfigure}{.16\linewidth}
		\centering%
		\includegraphics[scale=\plotscale]{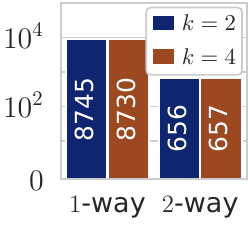}
		\caption{NG: Edge deletions}
		\label{figure:coreutils-classwise-nbh-deletions-greedy}
	\end{subfigure}
	\hfill%
	\begin{subfigure}{.16\linewidth}
		\centering%
		\includegraphics[scale=\plotscale]{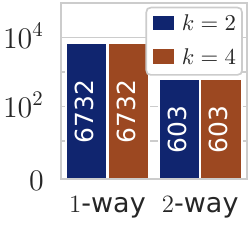}
		\caption{ILP: Edge deletions}
		\label{figure:coreutils-classwise-nbh-deletions-ilp}
	\end{subfigure}
	\hfill%
	\begin{subfigure}{.16\linewidth}
		\centering%
		\includegraphics[scale=\plotscale]{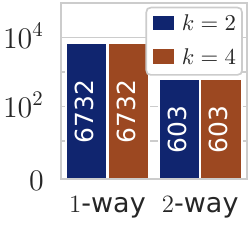}
		\caption{LPG: Edge deletions}
		\label{figure:coreutils-classwise-nbh-edge-deketions-lpr}
	\end{subfigure}
	\hfill%
	\caption{Runtime (in seconds) and edge deletions for neighbourhood errors for the Coreutils graph}\label{figure:coreutils-classwise-nbh}
\end{figure*}

\begin{figure*}
	\centering%
	\begin{subfigure}{.2\linewidth}
		\centering%
		\includegraphics[scale=\plotscale]{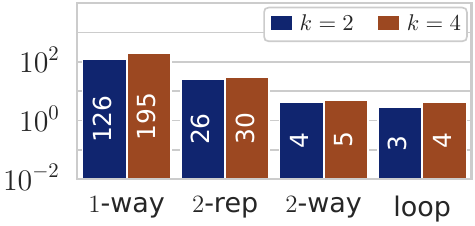}
		\caption{NG: Runtime}
		\label{figure:icij-classwise-nbh-runtime-greedy}
	\end{subfigure}
	\hfill%
	\begin{subfigure}{.2\linewidth}
		\centering%
		\includegraphics[scale=\plotscale]{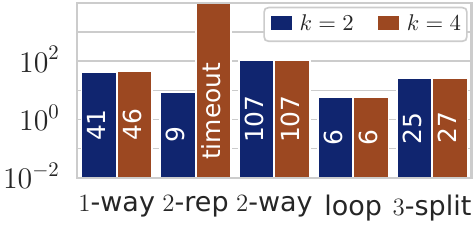}
		\caption{ILP: Runtime}
		\label{figure:icij-classwise-nbh-runtime-ilp}
	\end{subfigure}
	\hfill%
	\begin{subfigure}{.2\linewidth}
		\centering%
		\includegraphics[scale=\plotscale]{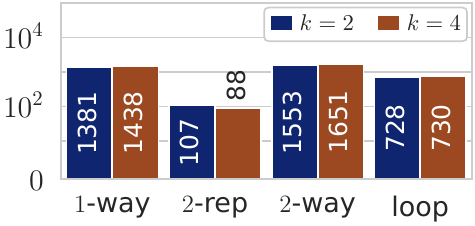}
		\caption{NG: Edge deletions}
		\label{figure:icij-classwise-nbh-deletions-greedy}
	\end{subfigure}
	\hfill%
	\begin{subfigure}{.2\linewidth}
		\centering%
		\includegraphics[scale=\plotscale]{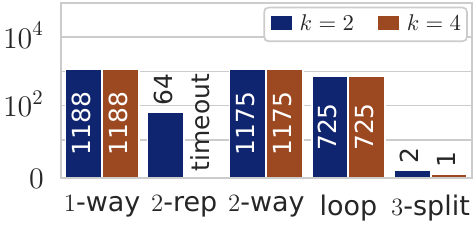}
		\caption{ILP: Edge deletions}
		\label{figure:icij-classwise-nbh-deletions-ilp}
	\end{subfigure}
	\hfill%
	\caption{Runtime (in seconds) and edge deletions for neighbourhood errors for the ICIJ graph}\label{figure:icij-classwise-nbh}
\end{figure*}

\begin{figure*}
	\centering%
	\begin{subfigure}{.16\linewidth}
		\centering%
		\includegraphics[scale=\plotscale]{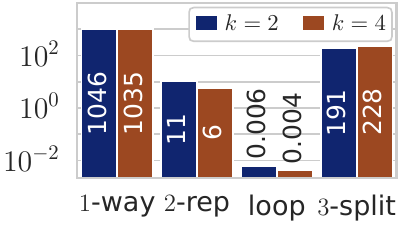}
		\caption{NG: Runtime}
		\label{figure:ldbc-sf0.3-classwise-nbh-runtime-greedy}
	\end{subfigure}
	\hfill%
	\begin{subfigure}{.16\linewidth}
		\centering%
		\includegraphics[scale=\plotscale]{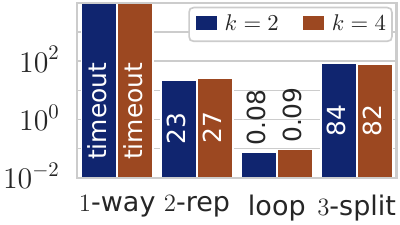}
		\caption{ILP: Runtime}
		\label{figure:ldbc-sf0.3-classwise-nbh-runtime-ilp}
	\end{subfigure}
	\hfill%
	\begin{subfigure}{.16\linewidth}
		\centering%
		\includegraphics[scale=\plotscale]{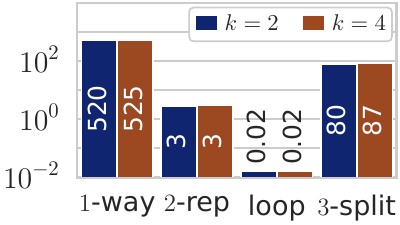}
		\caption{LPG: Runtime}
		\label{figure:ldbc-classwise-nbh-runtime-lpr}
	\end{subfigure}
	\hfill%
	\begin{subfigure}{.16\linewidth}
		\centering%
		\includegraphics[scale=\plotscale]{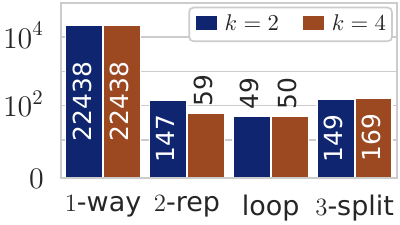}
		\caption{NG: Edge deletions}
		\label{figure:ldbc-sf0.3-classwise-nbh-deletions-greedy}
	\end{subfigure}
	\hfill%
	\begin{subfigure}{.16\linewidth}
		\centering%
		\includegraphics[scale=\plotscale]{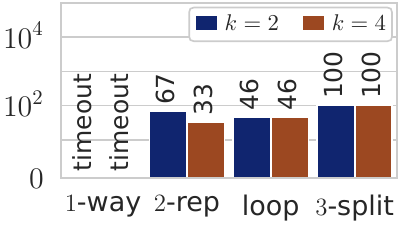}
		\caption{ILP: Edge deletions}
		\label{figure:ldbc-sf0.3-classwise-nbh-deletions-ilp}
	\end{subfigure}
	\hfill%
	\begin{subfigure}{.16\linewidth}
		\centering%
		\includegraphics[scale=\plotscale]{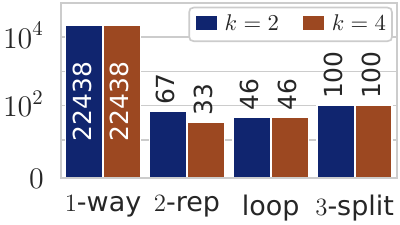}
		\caption{LPG: Edge deletions}
		\label{figure:ldbc-classwise-nbh-edge-deletions-lpr}
	\end{subfigure}
	\hfill%
	\caption{Runtime (in seconds) and edge deletions for neighbourhood errors for the LDBC graph}\label{figure:ldbc-sf0.3-classwise-nbh}
\end{figure*}

\begin{figure*}
	\centering%
	\begin{subfigure}{.164\linewidth}
		\centering%
		\includegraphics[scale=\plotscale]{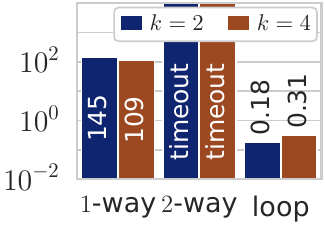}
		\caption{Legislative: Runtime}
		\label{figure:legislative-classwise-sample-runtime-lpr}
	\end{subfigure}
	\hfill%
	\begin{subfigure}{.17\linewidth}
		\centering%
		\includegraphics[scale=\plotscale]{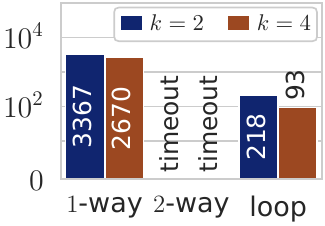}
		\caption{Legislative: Deletions}
		\label{figure:legislative-classwise-sample-edge-deletions-lpr}
	\end{subfigure}
	\hfill%
	\begin{subfigure}{.153\linewidth}
		\centering%
		\includegraphics[scale=\plotscale]{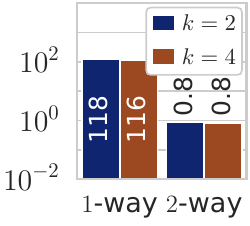}
		\caption{Coreutils: Runtime}
		\label{figure:coreutils-classwise-sample-runtime-lpr}
	\end{subfigure}
	\hfill%
	\begin{subfigure}{.159\linewidth}
		\centering%
		\includegraphics[scale=\plotscale]{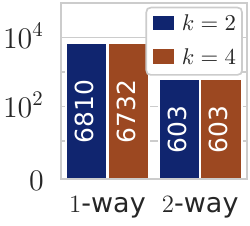}
		\caption{Coreutils: Deletions}
		\label{figure:coreutils-classwise-sample-edge-deketions-lpr}
	\end{subfigure}
	\hfill%
	\begin{subfigure}{.16\linewidth}
		\centering%
		\includegraphics[scale=\plotscale]{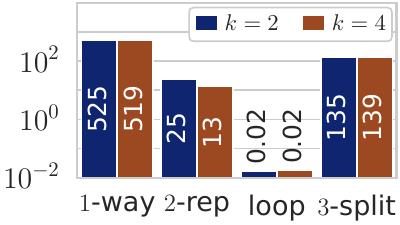}
		\caption{LDBC: Runtime}
		\label{figure:ldbc-classwise-sample-runtime-lpr}
	\end{subfigure}
	\hfill%
	\begin{subfigure}{.16\linewidth}
		\centering%
		\includegraphics[scale=\plotscale]{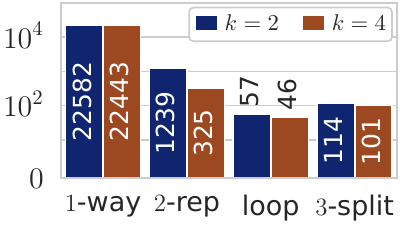}
		\caption{LDBC: Deletions}
		\label{figure:ldbc-classwise-sample-edge-deletions-lpr}
	\end{subfigure}
	\caption{Runtimes (in seconds) and edge deletions for sample errors of the LP-guided greedy algorithm}\label{figure:classwise-sample-runtimes}
\end{figure*}

\subsection{Raw Measurements and Trade-Off Percentages}\label{section:raw-numbers}
Table~\ref{table:runtimes-classwise} shows the runtimes recorded  for the datasets and sets of constraints listed in Table~\ref{table:static-pgs} for the base pipeline and with Step~2 (label deletions) enabled in the fourth and sixth column, respectively.
All numbers are averages of four runs.
Column \(5\) contains the runtime reduction, in percentage, achieved for the base pipeline by the naive greedy and the LP-guided greedy algorithms relative to the ILP algorithm.
To be precise, the runtime reduction is
\[100\cdot\frac{r_{\texttt{ilp}} - r_{\texttt{g}}}{r_{\texttt{ilp}}},\]
where \(r_{\texttt{ilp}}\) is the runtime of the ILP algorithm, and \(r_{\texttt{g}}\) is the runtime of the naive or LP-guided greedy algorithm.
For instance, the first three rows state that, for \(1\)-way constraints on the Coreutils graph, the runtime of the naive greedy algorithm is \(74.86\%\) less than the reference value \(1154.86s\) for the ILP algorithm, and the LP-guided greedy algorithm even achieves a runtime reduction of \(90.13\%\).

Similarly, the last column shows the runtime \emph{increase}
\[100\cdot\frac{r_\texttt{label}}{r_\texttt{base}},\]
when enabling Step~2.
Here \(r_\texttt{base}\) is the runtime of the naive greedy, LP-guided greedy, or ILP algorithm for the base pipeline, and \(r_\texttt{label}\) the runtime with Step~2 enabled.
For example, the first row states that the naive greedy algorithm takes \(290.32s\) for the base pipeline, and with Step~2 enabled, it takes \(320.46s\), which is \(110.38\%\) of the reference value \(290.32s\).
Extrema, and in particular numbers reported in the main paper, are printed in bold face in a (blue) frame.
Tables~\ref{table:runtimes-err-coreutils-legislative} and~\ref{table:runtimes-err-icij-ldbc} show the corresponding numbers for the scaling plots, and Table~\ref{table:approx-runtimes} for the selection phase of the greedy algorithms to compute approximate repairs, as discussed in Section~\ref{section:performance-quality} and Appendix~\ref{section:additional-plots}.
For the experiments with custom weights in Section~\ref{section:custom-weights} and Appendix~\ref{section:additional-plots}, they are shown in Table~\ref{table:custom-weight-runtimes}.

Table~\ref{table:deletions-classwise} show the recorded numbers of deletions for the base pipeline and with Step~2 enabled.
For example, the first three rows of Table~\ref{table:deletions-classwise} show that the naive greedy algorithm yields \(8747\) edge deletions, and the LP-guided greedy  and ILP algorithm both yield \(6732\) edge deletions for the base pipeline.
That is, the LP-guided greedy and ILP strategies yield \(23.04\%\) less deletions, which is reported in the fifth column, and hence a higher quality.
In general, the reduction in deletions is calculated as
\[100\cdot\frac{d_{\texttt{ng}} - d_{\texttt{lp}}}{d_{\texttt{ng}}},\]
where \(d_{\texttt{ng}}\) is the number of edge deletions yielded by the naive greedy algorithm, and \(d_{\texttt{lp}}\) the number of deletions yielded by the LP-guided greedy or ILP strategy.
The last column shows the reduction (or quality gain) in the number of deletions with Step~2 enabled, calculated by
\[
	100\cdot\frac{d_{\texttt{base}} - d_{\texttt{label}}}{d_{\texttt{base}}},
\]
where \(d_{\texttt{base}}\) is the number of edge deletions yielded by the base pipeline, and \(d_{\texttt{label}}\) is the number of label deletions yielded with Step~2 enabled.
For example, the first row states that, with Step~2 enabled, the naive greedy algorithm opts for deleting about \(5655\) labels (column \(6\)), which is \(35.35\%\) less than the reference value of \(8747\) edge deletions in column \(4\).
As for the runtime table, all values are averaged over four runs.
Since the greedy algorithms can pick a different number of deletions in each run, the average is not necessarily an integer value.
Tables~\ref{table:deletions-err-coreutils-legislative} and~\ref{table:deletions-err-icij-ldbc} shows the number of deletions for the scaling experiments and Table~\ref{table:approx-deletions} for the approximate repairs discussed in Section~\ref{section:performance-quality} and Appendix~\ref{section:additional-plots}.
For the experiments with custom weights in Section~\ref{section:custom-weights} and Appendix~\ref{section:additional-plots}, they are shown in Table~\ref{table:custom-weight-runtimes}.

Table~\ref{table:neighbourhood-classwise} shows the runtimes and runtime reductions over the base pipeline for Step~3 (neighbourhood error repairs); analogously to Table~\ref{table:runtimes-classwise} explained above.
The runtime reductions are computed with the formula
\[
	100\cdot\frac{r_{\texttt{base}} - r_{\texttt{nbh}}}{r_{\texttt{base}}},
\]
where \(r_{\texttt{nbh}}\) and \(r_{\texttt{base}}\) are the runtimes with and without Step~3 enabled, respectively.
The runtime (reductions) for the experiments in Section~\ref{section:ablation} with sampled errors are shown in Table~\ref{table:sample-classwise}.

Finally, Table~\ref{table:neighbourhood-classwise-deletions} shows the deletions with and without Step~3 enabled.
The percentage is computed as
\[
100\cdot\frac{d_{\texttt{nbh}}}{d_{\texttt{base}}},
\]
where \(d_{\texttt{nbh}}\) and \(d_{\texttt{base}}\) are the numbers of deletions with Step~3 enabled and without Step~3, respectively.
The numbers of edge deletions for the experiments in Section~\ref{section:ablation} with sampled errors are shown in Table~\ref{table:sample-classwise-deletions}.

\begin{table*}
	\caption{Raw runtimes and runtime comparisons for the datasets and sets of constraints listed in Table~\ref{table:static-pgs}}\label{table:runtimes-classwise}
\begin{tabular}{c c l l l l l l}
	\toprule
	&&&&\multicolumn{2}{c}{\bfseries Base Pipeline}&\multicolumn{2}{c}{\bfseries Repair with label deletions (Step~2)}\\
	Dataset & Shape &\#Errors&Algorithm&Runtime&Runtime Reduction&Runtime&Runtime Increase\\
	\toprule
	\multirow{6}{*}{Coreutils}
	&$1$-way&45728&Naive Greedy&290.32&74.86&320.46&110.38\\
	&$1$-way&45728&LP-Guided Greedy&113.95&90.13&109.13&95.78\\
	&$1$-way&45728&ILP&1154.86&&&\\
	&$2$-way&3707&Naive Greedy&1.11&91.19&2.11&190.02\\
	&$2$-way&3707&LP-Guided Greedy&0.85&93.24&1.56&182.73\\
	&$2$-way&3707&ILP&12.59&&49.92&396.47\\
	\midrule
	\multirow{9}{*}{Legislative}
	&$1$-way&166977&Naive Greedy&356.64&70.10&389.81&109.30\\
	&$1$-way&166977&LP-Guided Greedy&111.94&90.62&97.90&87.46\\
	&$1$-way&166977&ILP&1192.84&&&\\
	&$2$-way&419439&Naive Greedy&&&&\\
	&$2$-way&443695&LP-Guided Greedy&&&&\\
	&$2$-way&419439&ILP&&&&\\
	&loop&3783&Naive Greedy&0.18&86.57&0.22&126.09\\
	&loop&3783&LP-Guided Greedy&0.28&78.61&0.56&200.18\\
	&loop&3783&ILP&1.31&&4.37&334.05\\
	\midrule
	\multirow{15}{*}{ICIJ}
	&$1$-way&558799&Naive Greedy&324.12&\highlightcell{-445.65}&559.79&172.71\\
	&$1$-way&558799&LP-Guided Greedy&183.48&-208.89&393.31&\highlightcell{214.36}\\
	&$1$-way&558799&ILP&59.40&&294.69&\highlightcell{496.11}\\
	&$2$-rep&397108&Naive Greedy&35.68&&53.62&150.28\\
	&$2$-rep&397108&LP-Guided Greedy&45.27&&&\\
	&$2$-rep&397108&ILP&&&&\\
	&$2$-way&7553&Naive Greedy&4.89&\highlightcell{95.51}&6.52&133.22\\
	&$2$-way&7553&LP-Guided Greedy&2.34&\highlightcell{97.85}&3.75&159.83\\
	&$2$-way&7553&ILP&108.90&&429.15&394.07\\
	&loop&56614&Naive Greedy&6.14&10.13&14.69&239.29\\
	&loop&56614&LP-Guided Greedy&7.51&-9.97&16.04&213.45\\
	&loop&56614&ILP&6.83&&23.21&339.75\\
	&$3$-split&876933&Naive Greedy&3.03&89.02&4.11&135.57\\
	&$3$-split&876933&LP-Guided Greedy&25.98&5.91&54.69&210.54\\
	&$3$-split&876933&ILP&27.61&&56.13&203.29\\

	\midrule
	\multirow{12}{*}{LDBC}
	&$1$-way&79064&Naive Greedy&980.82&&1627.58&165.94\\
	&$1$-way&79064&LP-Guided Greedy&527.74&&846.50&160.40\\
	&$1$-way&79064&ILP&&&&\\
	&$2$-rep&72756&Naive Greedy&3.90&86.29&9.99&\highlightcell{256.01}\\
	&$2$-rep&72756&LP-Guided Greedy&3.36&88.18&6.35&188.79\\
	&$2$-rep&72756&ILP&28.46&&109.26&383.86\\
	&loop&423&Naive Greedy&0.0041&\highlightcell{95.97}&0.01&182.72\\
	&loop&423&LP-Guided Greedy&0.01&86.28&0.02&162.33\\
	&loop&423&ILP&0.10&&0.35&347.20\\
	&$3$-split&1358499&Naive Greedy&225.46&-165.99&477.74&211.90\\
	&$3$-split&1358499&LP-Guided Greedy&87.64&-3.40&173.72&198.22\\
	&$3$-split&1358499&ILP&84.76&&143.25&169.00\\
	\bottomrule
\end{tabular} \end{table*}

\begin{table*}
	\caption{Raw number of deletions and comparisons for the datasets and sets of constraints listed in Table~\ref{table:static-pgs}}\label{table:deletions-classwise}
\begin{tabular}{c c l l l l l l}
	\toprule
	&&&&\multicolumn{2}{c}{\bfseries Base Pipeline}&\multicolumn{2}{c}{\bfseries Repair with label deletions (Step~2)}\\
	Dataset & Shape &\#Errors&Algorithm&\#Edge Deletions&Reduction in Deletions&\#Label Deletions&Delete Reduction\\
	\toprule
	\multirow{6}{*}{\rotatebox{90}{Coreutils}}
	&$1$-way&45728&Naive Greedy&8747.00&&5655.29&35.35\\
	&$1$-way&45728&LP-Guided Greedy&6732.00&23.04&4462.00&33.72\\
	&$1$-way&45728&ILP&6732.00&23.04&&\\
	&$2$-way&3707&Naive Greedy&656.00&&626.33&4.52\\
	&$2$-way&3707&LP-Guided Greedy&603.00&8.08&593.00&1.66\\
	&$2$-way&3707&ILP&603.00&8.08&593.00&1.66\\
	\midrule
	\multirow{9}{*}{\rotatebox{90}{Legislative}}
	&$1$-way&166977&Naive Greedy&3818.67&&1905.33&\highlightcell{50.10}\\
	&$1$-way&166977&LP-Guided Greedy&2618.00&31.44&1064.33&\highlightcell{59.35}\\
	&$1$-way&166977&ILP&2618.00&31.44&&\\
	&$2$-way&419439&Naive Greedy&&&&\\
	&$2$-way&443695&LP-Guided Greedy&&&&\\
	&$2$-way&419439&ILP&&&&\\
	&loop&3783&Naive Greedy&49.67&&32.33&34.90\\
	&loop&3783&LP-Guided Greedy&38.00&23.49&20.00&47.37\\
	&loop&3783&ILP&38.00&23.49&20.00&\highlightcell{47.37}\\
	\midrule
	\multirow{15}{*}{ICIJ}
	&$1$-way&558799&Naive Greedy&1411.00&&1008.33&28.54\\
	&$1$-way&558799&LP-Guided Greedy&1188.33&15.78&826.33&30.46\\
	&$1$-way&558799&ILP&1188.00&15.80&807.00&32.07\\
	&$2$-rep&397108&Naive Greedy&79.67&&57.00&28.45\\
	&$2$-rep&397108&LP-Guided Greedy&50.67&\highlightcell{36.40}&&\\
	&$2$-rep&397108&ILP&&&&\\
	&$2$-way&7553&Naive Greedy&1588.33&&1261.00&20.61\\
	&$2$-way&7553&LP-Guided Greedy&1176.33&25.94&1072.00&8.87\\
	&$2$-way&7553&ILP&1175.00&26.023&1072.00&8.77\\
	&loop&56614&Naive Greedy&727.33&&720.67&0.92\\
	&loop&56614&LP-Guided Greedy&725.00&0.32&715.33&1.33\\
	&loop&56614&ILP&725.00&0.32&715.00&1.38\\
	&$3$-split&876933&Naive Greedy&2.00&&1.00&50.00\\
	&$3$-split&876933&LP-Guided Greedy&1.00&50.0&1.00&0.00\\
	&$3$-split&876933&ILP&1.00&\highlightcell{50.0}&1.00&0.00\\
	\midrule
	\multirow{12}{*}{LDBC}
	&$1$-way&79064&Naive Greedy&22438.00&&21099.00&5.97\\
	&$1$-way&79064&LP-Guided Greedy&22438.00&&19740.00&12.02\\
	&$1$-way&79064&ILP&&&&\\
	&$2$-rep&72756&Naive Greedy&42.33&&51.33&-21.26\\
	&$2$-rep&72756&LP-Guided Greedy&31.00&26.77&30.00&3.23\\
	&$2$-rep&72756&ILP&31.00&26.77&30.00&3.23\\
	&loop&423&Naive Greedy&48.67&&48.00&1.37\\
	&loop&423&LP-Guided Greedy&46.00&5.48&44.00&4.35\\
	&loop&423&ILP&46.00&5.48&44.00&4.35\\
	&$3$-split&1358499&NG&155.00&&142.33&8.17\\
	&$3$-split&1358499&LPG&100.00&\highlightcell{35.48}&93.00&7.00\\
	&$3$-split&1358499&ILP&100.00&\highlightcell{35.48}&93.00&7.00\\
	\bottomrule
\end{tabular} \end{table*}

\begin{table*}
	\caption{Raw runtimes and runtime comparisons for Step~3 (neighbourhood errors)}\label{table:neighbourhood-classwise}
\begin{tabular}{c c l l l l l l l}
	\toprule
	&&&&{\bfseries Base Pipeline}&\multicolumn{2}{c}{\(\mathbf{k = 2}\)}&\multicolumn{2}{c}{\(\mathbf{k = 4}\)}\\
	Dataset & Shape &\#Errors&Algorithm&Runtime&Runtime&Runtime Reduction&Runtime&Runtime Reduction\\
	\toprule
	\multirow{6}{*}{\rotatebox{90}{Coreutils}}
	&$1$-way&45728&Naive Greedy&290.32&302.98&-4.36&301.72&-3.93\\
	&$1$-way&45728&LP-Guided Greedy&113.95&114.42&-0.42&113.00&0.83\\
	&$1$-way&45728&ILP&1154.86&1148.77&0.53&1152.01&0.25\\
	&$2$-way&3707&Naive Greedy&1.11&1.13&-1.77&1.12&-0.61\\
	&$2$-way&3707&LP-Guided Greedy&0.85&0.87&-1.6445&0.84&1.01\\
	&$2$-way&3707&ILP&12.59&12.54&0.41&12.76&-1.32\\
	\midrule
	\multirow{9}{*}{\rotatebox{90}{Legislative}}
	&$1$-way&166977&Naive Greedy&356.64&304.48&14.63&303.79&14.82\\
	&$1$-way&166977&LP-Guided Greedy&111.94&95.14&15.00&102.70&8.25\\
	&$1$-way&166977&ILP&1192.84&1213.44&-1.73&1190.86&0.17\\
	&$2$-way&443695&Naive Greedy&&&&&\\
	&$2$-way&443695&LP-Guided Greedy&&&&&\\
	&$2$-way&443695&ILP&&&&&\\
	&loop&3783&Naive Greedy&0.18&0.04&75.99&0.10&43.61\\
	&loop&3783&LP-Guided Greedy&0.28&0.08&71.90&0.12&57.24\\
	&loop&3783&ILP&1.31&0.29&\highlightcell{78.04}&0.87&33.86\\
	\midrule
	\multirow{15}{*}{ICIJ}
	&$1$-way&558799&Naive Greedy&324.12&126.14&61.08&194.55&39.98\\
	&$1$-way&558799&LP-Guided Greedy&183.48&68.21&\highlightcell{62.83}&103.09&43.81\\
	&$1$-way&558799&ILP&59.40&41.21&30.62&46.21&22.21\\
	&$2$-rep&397108&Naive Greedy&35.68&26.33&26.20&30.24&15.26\\
	&$2$-rep&397108&LP-Guided Greedy&45.27&9.19&79.69&44.98&0.63\\
	&$2$-rep&397108&ILP&&8.78&&&\\
	&$2$-way&7553&Naive Greedy&4.89&4.33&11.53&5.07&-3.60\\
	&$2$-way&7553&LP-Guided Greedy&2.34&2.27&3.35&2.47&-5.12\\
	&$2$-way&7553&ILP&108.90&106.91&1.83&107.20&1.57\\
	&loop&56614&Naive Greedy&6.14&2.90&52.84&4.23&31.08\\
	&loop&56614&LP-Guided Greedy&7.51&3.50&53.42&5.01&33.36\\
	&loop&56614&ILP&6.83&5.66&17.21&5.87&14.03\\
	&$3$-split&876933&Naive Greedy&3.03&4.46&-47.12&2.01&33.65\\
	&$3$-split&876933&LP-Guided Greedy&25.98&23.84&8.24&25.79&0.74\\
	&$3$-split&876933&ILP&27.61&25.35&8.19&26.83&2.84\\
	\midrule
	\multirow{12}{*}{LDBC}
	&$1$-way&79064&Naive Greedy&980.82&1045.71&-6.62&1034.62&-5.48\\
	&$1$-way&79064&LP-Guided Greedy&527.74&519.94&1.48&524.52&0.61\\
	&$1$-way&79064&ILP&&&&&\\
	&$2$-rep&72756&Naive Greedy&3.90&10.76&-175.78&5.58&-42.98\\
	&$2$-rep&72756&ILP&28.46&22.60&20.61&26.62&6.47\\
	&loop&423&Naive Greedy&0.0041&0.006&-46.73&0.0041&1.02\\
	&loop&423&LP-Guided Greedy&0.01&0.06&-13.40&0.02&-14.46\\
	&loop&423&ILP&0.10&0.08&24.36&0.09&8.13\\
	&$3$-split&1358499&Naive Greedy&225.46&191.02&15.2738&227.6318&-0.9631\\
	&$3$-split&1358499&LP-Guided Greedy&87.64&79.87&8.86&87.07&0.65\\
	&$3$-split&1358499&ILP&84.76&83.97&0.93&82.45&2.73\\
	\bottomrule
\end{tabular} \end{table*}

\begin{table*}
	\caption{Raw edge deletions and deletion comparisons for Step~3 (neighbourhood errors)}\label{table:neighbourhood-classwise-deletions}
\begin{tabular}{c c l l l l l l l}
	\toprule
	&&&&{\bfseries Base Pipeline}&\multicolumn{2}{c}{\(\mathbf{k = 2}\)}&\multicolumn{2}{c}{\(\mathbf{k = 4}\)}\\
	Dataset & Shape &\#Errors&Algorithm&Deletions&Deletions&Percentage&Deletions&Percentage\\
	\toprule
	\multirow{6}{*}{Coreutils}
	&$1$-way&45728&Naive Greedy&8747.00&8744.67&99.97&8730.00&99.81\\
	&$1$-way&45728&LP-Guided Greedy&6732.00&6732.00&100.00&6732.00&100.00\\
	&$1$-way&45728&ILP&6732.00&6732.00&100.00&6732.00&100.00\\
	&$2$-way&3707&Naive Greedy&656.00&656.00&100.00&656.67&100.10\\
	&$2$-way&3707&LP-Guided Greedy&603.00&603.00&100.00&603.00&100.00\\
	&$2$-way&3707&ILP&603.00&603.00&100.00&603.00&100.00\\
	\midrule
	\multirow{9}{*}{Legislative}
	&$1$-way&166977&Naive Greedy&3818.67&3909.33&102.37&3652.00&95.64\\
	&$1$-way&166977&LP-Guided Greedy&2618.00&2631.00&100.50&2618.00&100.00\\
	&$1$-way&166977&ILP&2618.00&2631.00&100.50&2618.00&100.00\\
	&$2$-way&443695&Naive Greedy&&&&&\\
	&$2$-way&443695&LP-Guided Greedy&&&&&\\
	&$2$-way&443695&ILP&&&&&\\
	&loop&3783&Naive Greedy&49.67&48.33&97.32&56.00&112.75\\
	&loop&3783&LP-Guided Greedy&38.00&39.00&102.63&38.00&100.00\\
	&loop&3783&ILP&38.00&39.00&102.63&38.00&100.00\\
	\midrule
	\multirow{15}{*}{ICIJ}
	&$1$-way&558799&Naive Greedy&1411.00&1381.33&97.90&1438.33&101.94\\
	&$1$-way&558799&LP-Guided Greedy&1188.33&1188.33&100.00&1188.33&100.00\\
	&$1$-way&558799&ILP&1188.00&1188.00&100.00&1188.00&100.00\\
	&$2$-rep&397108&Naive Greedy&79.67&107.33&134.73&88.33&110.88\\
	&$2$-rep&397108&LP-Guided Greedy&50.67&64.00&126.32&56.00&110.53\\
	&$2$-rep&397108&ILP&&64.00&&&\\
	&$2$-way&7553&Naive Greedy&1588.33&1552.67&97.75&1650.67&103.92\\
	&$2$-way&7553&LP-Guided Greedy&1176.33&1176.33&100.00&1175.33&99.91\\
	&$2$-way&7553&ILP&1175.00&1175.00&100.00&1175.00&100.00\\
	&loop&56614&Naive Greedy&727.33&727.67&100.05&730.33&100.41\\
	&loop&56614&LP-Guided Greedy&725.00&725.00&100.00&725.00&100.00\\
	&loop&56614&ILP&725.00&725.00&100.00&725.00&100.00\\
	&$3$-split&876933&Naive Greedy&2.00&5.00&250.00&3.33&166.67\\
	&$3$-split&876933&LP-Guided Greedy&1.00&2.33&233.33&1.00&100.00\\
	&$3$-split&876933&ILP&1.00&2.00&200.00&1.00&100.00\\
	\midrule
	\multirow{12}{*}{LDBC}
	&$1$-way&79064&Naive Greedy&22438.00&22438.00&100.00&22438.00&100.00\\
	&$1$-way&79064&LP-Guided Greedy&22438.00&22438.00&100.00&22438.00&100.00\\
	&$1$-way&79064&ILP&&&&&\\
	&$2$-rep&72756&Naive Greedy&42.33&147.33&348.03&59.33&140.16\\
	&$2$-rep&72756&LP-Guided Greedy&31.00&67.00&216.13&33.00&106.45\\
	&$2$-rep&72756&ILP&31.00&67.00&216.13&33.00&106.45\\
	&loop&423&Naive Greedy&48.67&49.00&100.68&50.33&103.42\\
	&loop&423&LP-Guided Greedy&46.00&46.00&100.00&46.00&100.00\\
	&loop&423&ILP&46.00&46.00&100.00&46.00&100.00\\
	&$3$-split&1358499&Naive Greedy&155.00&149.33&96.34&169.33&109.25\\
	&$3$-split&1358499&LP-Guided Greedy&100.00&100.00&100.00&100.00&100.00\\
	&$3$-split&1358499&ILP&100.00&100.00&100.00&100.00&100.00\\
	\bottomrule
\end{tabular} \end{table*}

\begin{table*}
	\caption{Raw runtimes and runtime comparisons for repairing with sampled errors}\label{table:sample-classwise}
\begin{tabular}{c c l l l l l l l}
	\toprule
	&&&&{\bfseries Base Pipeline}&\multicolumn{2}{c}{\(\mathbf{k = 2}\)}&\multicolumn{2}{c}{\(\mathbf{k = 4}\)}\\
	Dataset & Shape &\#Errors&Algorithm&Runtime&Runtime&Runtime Reduction&Runtime&Runtime Reduction\\
	\toprule
	\multirow{2}{*}{Coreutils}
	&$1$-way&45728&LP-Guided Greedy&113.9463&117.9029&-3.4724&116.1778&-1.9584\\
	&$2$-way&3707&LP-Guided Greedy&0.8512&0.8423&1.0449&0.8353&1.8685\\
	\midrule
	\multirow{3}{*}{Legislative}
	&$1$-way&166977&LP-Guided Greedy&111.9393&144.8334&-29.3857&109.1641&2.4792\\
	&$2$-way&419439&LP-Guided Greedy&&&&&\\
	&loop&3783&LP-Guided Greedy&0.2798&0.1843&34.1506&0.3096&-10.6573\\
	\midrule
	\multirow{5}{*}{ICIJ}
	&$1$-way&558799&LP-Guided Greedy&183.4825&76.8191&58.1328&102.3670&44.2089\\
	&$2$-rep&397108&LP-Guided Greedy&45.2682&34.3127&24.2014&83.7004&-84.8988\\
	&$2$-way&7553&LP-Guided Greedy&2.3449&2.1603&7.8712&2.3681&-0.9890\\
	&loop&56614&LP-Guided Greedy&7.5130&4.7326&37.0083&5.4140&27.9375\\
	&$3$-split&876933&LP-Guided Greedy&25.9784&21.9824&15.3820&26.6876&-2.7303\\
	\midrule
	\multirow{3}{*}{LDBC}
	&$1$-way&79064&LP-Guided Greedy&527.7366&524.6183&0.5909&518.9552&1.6640\\
	&$2$-rep&72756&LP-Guided Greedy&3.3646&24.9187&-640.6203&13.4611&-300.0839\\
	&loop&423&LP-Guided Greedy&0.0140&0.0167&-19.1819&0.0177&-25.9705\\
	\bottomrule
\end{tabular} \end{table*}

\begin{table*}
	\caption{Raw edge deletions and deletion comparisons for repairing with sampled errors}\label{table:sample-classwise-deletions}
\begin{tabular}{c c l l l l l l l}
	\toprule
	&&&&{\bfseries Base Pipeline}&\multicolumn{2}{c}{\(\mathbf{k = 2}\)}&\multicolumn{2}{c}{\(\mathbf{k = 4}\)}\\
	Dataset & Shape &\#Errors&Algorithm&Deletions&Deletions&Percentage&Deletions&Percentage\\
	\toprule
	\multirow{2}{*}{Coreutils}
	&$1$-way&45728&LP-Guided Greedy&6732.00&6810.33&101.16&6732.00&100.00\\
	&$2$-way&3707&LP-Guided Greedy&603.00&603.00&100.00&603.00&100.00\\
	\midrule
	\multirow{3}{*}{Legislative}
	&$1$-way&166977&LP-Guided Greedy&2618.00&3367.33&128.62&2670.00&101.99\\
	&$2$-way&419439&LP-Guided Greedy&&&&&\\
	&loop&3783&LP-Guided Greedy&38.00&218.00&573.68&92.67&243.86\\
	\midrule
	\multirow{5}{*}{ICIJ}
	&$1$-way&558799&LP-Guided Greedy&1188.33&1245.33&104.80&1204.33&101.35\\
	&$2$-rep&397108&LP-Guided Greedy&50.67&630.67&1244.74&207.67&409.87\\
	&$2$-way&7553&LP-Guided Greedy&1176.33&1180.67&100.37&1175.67&99.94\\
	&loop&56614&LP-Guided Greedy&725.00&762.67&105.20&740.33&102.11\\
	&$3$-split&876933&LP-Guided Greedy&1.00&3.00&300.00&3.00&300.00\\
	\midrule
	\multirow{3}{*}{LDBC}
	&$1$-way&79064&LP-Guided Greedy&22438.00&22582.00&100.64&22443.00&100.02\\
	&loop&423&LP-Guided Greedy&46.00&57.00&123.91&46.00&100.00\\
	&$2$-rep&72756&LP-Guided Greedy&31.00&1239.33&3997.85&325.00&1048.39\\
	\bottomrule
\end{tabular} \end{table*}

\begin{table*}
	\caption{Raw runtimes and runtime comparisons for the scaling experiments for the Coreutils and legislative graphs (\(1\)-way constraints)}\label{table:runtimes-err-coreutils-legislative}
\begin{tabular}{c l l l l l l}
\toprule
\multirow{2}{*}{\parbox{7em}{Dataset and\newline Constraint Shape}}&&&\multicolumn{2}{c}{\bfseries Base Pipeline}&\multicolumn{2}{c}{\bfseries Repair with label deletions (Step~2)}\\
&\#Errors&Algorithm&Runtime&Runtime Reduction&Runtime&Runtime Increase\\
\toprule
\multirow{18}{*}{\parbox{2cm}{Coreutils\newline \(1\)-way}}&14809&Naive Greedy&26.78&90.68&30.98&115.69\\
&14809&LP-Guided Greedy&10.78&96.24&12.37&114.76\\
&14809&ILP&287.15&&1245.41&433.72\\
&19109&Naive Greedy&40.08&88.18&38.56&96.20\\
&19109&LP-Guided Greedy&12.19&96.41&10.48&86.01\\
&19109&ILP&339.057&&1401.39&413.32\\
&23393&Naive Greedy&66.43&86.65&64.43&97.00\\
&23393&LP-Guided Greedy&35.27&92.91&22.94&65.05\\
&23393&ILP&497.72&&2078.88&417.68\\
&27661&Naive Greedy&93.34&83.36&81.91&87.76\\
&27661&LP-Guided Greedy&43.98&92.16&22.92&52.12\\
&27661&ILP&561.02&&2352.06&419.25\\
&35101&Naive Greedy&163.55&80.04&163.26&99.82\\
&35101&LP-Guided Greedy&56.32&93.13&52.72&93.61\\
&35101&ILP&819.24&&3224.69&393.62\\
&45728&Naive Greedy&290.32&74.86&322.12&110.95\\
&45728&LP-Guided Greedy&110.73&90.41&105.31&95.11\\
&45728&ILP&1154.86&&timeout&\\
\midrule\multirow{18}{*}{\parbox{2cm}{Legislative\newline
\(1\)-way}}&26603&Naive Greedy&11.52&84.53&17.28&150.00\\
&26603&LP-Guided Greedy&4.31&94.21&4.75&110.18\\
&26603&ILP&74.48&&309.06&414.95\\
&41047&Naive Greedy&26.81&79.30&35.93&134.03\\
&41047&LP-Guided Greedy&9.81&92.43&8.29&84.55\\
&41047&ILP&129.51&&521.96&403.02\\
&91841&Naive Greedy&57.60&69.23&85.64&148.67\\
&91841&LP-Guided Greedy&18.84&89.94&20.37&108.12\\
&91841&ILP&187.20&&766.74&409.57\\
&98686&Naive Greedy&110.54&70.67&158.43&143.33\\
&98686&LP-Guided Greedy&43.41&88.48&48.60&111.97\\
&98686&ILP&376.91&&1692.88&449.15\\
&121968&Naive Greedy&187.30&73.68&258.30&137.91\\
&121968&LP-Guided Greedy&61.78&91.32&65.68&106.31\\
&121968&ILP&711.57&&3079.11&432.72\\
&166977&Naive Greedy&356.64&70.10&389.81&109.30\\
&166977&LP-Guided Greedy&109.13&90.85&98.02&89.82\\
&166977&ILP&1192.84&&timeout&\\
\bottomrule
\end{tabular} \end{table*}

\begin{table*}
	\caption{Raw runtimes and runtime comparisons for the scaling experiments for the ICIJ and LDBC graphs (\(1\)-way constraints)}\label{table:runtimes-err-icij-ldbc}
\begin{tabular}{c l l l l l l}
\toprule
\multirow{2}{*}{\parbox{7em}{Dataset and\newline Constraint Shape}}&&&\multicolumn{2}{c}{\bfseries Base Pipeline}&\multicolumn{2}{c}{\bfseries Repair with label deletions (Step~2)}\\
&\#Errors&Algorithm&Runtime&Runtime Reduction&Runtime&Runtime Increase\\
\toprule
\multirow{18}{*}{\parbox{2cm}{ICIJ\newline
\(1\)-way}}&1285&Naive Greedy&0.027&87.97&0.03&100.03\\
&1285&LP-Guided Greedy&0.05&78.91&0.09&190.18\\
&1285&ILP&0.23&&0.54&235.78\\
&1517&Naive Greedy&0.17&80.79&0.26&148.87\\
&1517&LP-Guided Greedy&0.19&79.20&0.25&135.68\\
&1517&ILP&0.89&&3.14&351.59\\
&2786&Naive Greedy&0.42&94.35&0.58&138.17\\
&2786&LP-Guided Greedy&0.31&95.89&0.44&143.73\\
&2786&ILP&7.49&&28.53&380.64\\
&175724&Naive Greedy&36.19&-22.40&66.26&183.08\\
&175724&LP-Guided Greedy&24.13&18.41&52.79&218.80\\
&175724&ILP&29.57&&125.52&424.50\\
&558799&Naive Greedy&324.12&-445.65&559.79&172.71\\
&558799&LP-Guided Greedy&176.13&-196.51&397.95&225.95\\
&558799&ILP&59.40&&294.69&\highlightcell{496.11}\\
&2052352&Naive Greedy&timeout&&timeout&\\
&2052352&LP-Guided Greedy&timeout&&timeout&\\
&2052352&ILP&timeout&&timeout&\\
\midrule\multirow{12}{*}{\parbox{2cm}{LDBC\newline
\(1\)-way}}&8704&Naive Greedy&14.2085507934292&\highlightcell{97.29}&25.46&179.22\\
&8704&LP-Guided Greedy&8.61&98.36&12.77&148.31\\
&8704&ILP&524.46&&2402.83&458.15\\
&15337&Naive Greedy&54.99&97.12&99.98&181.80\\
&15337&LP-Guided Greedy&42.52&97.77&60.18&141.54\\
&15337&ILP&1909.62&&timeout&\\
&55957&Naive Greedy&143.65&&244.35&170.10\\
&55957&LP-Guided Greedy&69.12&&92.72&134.14\\
&55957&ILP&timeout&&timeout&\\
&79064&Naive Greedy&980.82&&1627.58&165.94\\
&79064&LP-Guided Greedy&513.14&&847.47&165.15\\
&79064&ILP&timeout&&timeout&\\
\bottomrule
\end{tabular} \end{table*}

\begin{table*}
	\caption{Raw number of deletions and comparisons for the scaling experiments of the Coreutils and legislative graphs (\(1\)-way constraints)}\label{table:deletions-err-coreutils-legislative}
\begin{tabular}{c l l l l l l}
	\toprule
	\multirow{2}{*}{\parbox{7em}{Dataset and\newline Constraint Shape}}&&&\multicolumn{2}{c}{\bfseries Base Pipeline}&\multicolumn{2}{c}{\bfseries Repair with label deletions (Step~2)}\\
&\#Errors&Algorithm&\#Edge Deletions&Reduction in Deletions&\#Label Deletions&Delete Reduction\\
\toprule
\multirow{18}{*}{\parbox{2cm}{Coreutils\newline \(1\)-way}}&14809&Naive Greedy&3679&&2579&29.90\\
&14809&LP-Guided Greedy&2763.67&24.87&2092.00&24.30\\
&14809&ILP&2754&25.14&2092&24.04\\
&19109&Naive Greedy&4115&&2655&35.49\\
&19109&LP-Guided Greedy&2848.33&30.79&2092.00&26.55\\
&19109&ILP&2824&31.38&2092&25.92\\
&23393&Naive Greedy&5207&&3299&36.65\\
&23393&LP-Guided Greedy&4107.33&21.12&2586.67&37.02\\
&23393&ILP&4069&21.86&2586&36.45\\
&27661&Naive Greedy&5753&&3344&41.87\\
&27661&LP-Guided Greedy&4346.00&24.46&2586.67&40.48\\
&27661&ILP&4257&26.00&2586&39.25\\
&35101&Naive Greedy&6870&&4344&36.77\\
&35101&LP-Guided Greedy&5095.33&25.84&3375.00&33.76\\
&35101&ILP&5088&25.94&3375&33.67\\
&45728&Naive Greedy&8747&&5657&35.33\\
&45728&LP-Guided Greedy&6732.00&23.04&4462.00&33.72\\
&45728&ILP&6732&23.04&timeout&\\
\midrule\multirow{18}{*}{\parbox{2cm}{Legislative\newline
		\(1\)-way}}&26603&Naive Greedy&1003&&630&37.21\\
&26603&LP-Guided Greedy&762.00&24.05&370.00&51.44\\
&26603&ILP&762&24.05&370&51.44\\
&41047&Naive Greedy&1551&&760&\highlightcell{50.98}\\
&41047&LP-Guided Greedy&1082.00&30.24&441.00&59.24\\
&41047&ILP&1082&30.24&441&\highlightcell{59.24}\\
&91841&Naive Greedy&1830&&1012&44.70\\
&91841&LP-Guided Greedy&1365.00&25.41&596.00&56.34\\
&91841&ILP&1365&25.41&596&56.34\\
&98686&Naive Greedy&2421&&1390&42.59\\
&98686&LP-Guided Greedy&1797.00&25.76&774.00&56.93\\
&98686&ILP&1797&25.76&774&56.93\\
&121968&Naive Greedy&3185&&1695&46.78\\
&121968&LP-Guided Greedy&2100.00&34.06&899.00&57.19\\
&121968&ILP&2100&34.06&899&57.19\\
&166977&Naive Greedy&3819&&1905&50.10\\
&166977&LP-Guided Greedy&2618.00&31.44&1064.00&59.36\\
&166977&ILP&2618&31.44&timeout&\\
\bottomrule
\end{tabular} \end{table*}

\begin{table*}
	\caption{Raw number of deletions and comparisons for the scaling experiments of the ICIJ and LDBC graphs (\(1\)-way constraints)}\label{table:deletions-err-icij-ldbc}
\begin{tabular}{c l l l l l l}
	\toprule
	\multirow{2}{*}{\parbox{7em}{Dataset and\newline Constraint Shape}}&&&\multicolumn{2}{c}{\bfseries Base Pipeline}&\multicolumn{2}{c}{\bfseries Repair with label deletions (Step~2)}\\
&\#Errors&Algorithm&\#Edge Deletions&Reduction in Deletions&\#Label Deletions&Delete Reduction\\
\toprule
\multirow{18}{*}{\parbox{2cm}{ICIJ\newline
		\(1\)-way}}&1285&Naive Greedy&55&&39&28.05\\
&1285&LP-Guided Greedy&44.33&18.90&38.00&14.29\\
&1285&ILP&43&21.34&38&11.63\\
&1517&Naive Greedy&186&&167&10.38\\
&1517&LP-Guided Greedy&176.00&5.55&149.33&15.15\\
&1517&ILP&174&6.62&149&14.37\\
&2786&Naive Greedy&338&&261&22.88\\
&2786&LP-Guided Greedy&261.33&22.68&226.33&13.39\\
&2786&ILP&260&23.08&226&13.08\\
&175724&Naive Greedy&753&&581&22.92\\
&175724&LP-Guided Greedy&627.67&16.68&476.00&24.16\\
&175724&ILP&616&18.23&463&24.84\\
&558799&Naive Greedy&1411&&1008&28.54\\
&558799&LP-Guided Greedy&1188.00&15.80&826.00&30.47\\
&558799&ILP&1188&15.80&807&32.07\\
&2052352&Naive Greedy&timeout&&timeout&\\
&2052352&LP-Guided Greedy&timeout&&timeout&\\
&2052352&ILP&timeout&&timeout&\\
\midrule\multirow{8}{*}{\parbox{2cm}{LDBC\newline
		\(1\)-way}}&8704&Naive Greedy&5782&&5441&5.89\\
&8704&LP-Guided Greedy&4854.00&16.045&4711.00&2.95\\
&8704&ILP&4854&16.04&4711&2.95\\
&15337&Naive Greedy&11304&&10318&8.72\\
&15337&LP-Guided Greedy&10227.00&9.52&8978.00&12.21\\
&15337&ILP&10227&9.52&timeout&\\
&55957&Naive Greedy&12829&&11486&10.47\\
&55957&LP-Guided Greedy&11339.00&11.62&9732.00&14.17\\
&55957&ILP&timeout&&timeout&\\
&79064&Naive Greedy&22438&&21099&5.97\\
&79064&LP-Guided Greedy&22438.00&&19740.00&12.02\\
&79064&ILP&timeout&&timeout&\\
\bottomrule
\end{tabular} \end{table*}

\begin{table*}
	\caption{Runtimes for computing repairs with custom weights (PageRank scores)}\label{table:custom-weight-runtimes}
	\begin{tabular}{c r r l r r r r }
		\toprule
		Dataset&Shape&\#Errors&Algorithm&Runtime & Runtime Increase & Edge Deletions & Deletion Reduction\\
		\toprule
		\multirow{15}{*}{ICIJ}&$1$-way&558799&Naive Greedy&439.04&-1655.31&2643.00&\\
		&$1$-way&558799&LP-Guided Greedy&185.26&-640.69&1266.00&52.10\\
		&$1$-way&558799&ILP&25.01&&1264.00&52.18\\
		&$2$-rep&397108&Naive Greedy&212.25&&153.00&\\
		&$2$-rep&397108&LP-Guided Greedy&40.14&&73.00&52.29\\
		&$2$-rep&397108&ILP&&&&\\
		&$2$-way&7553&Naive Greedy&4.30&-693.78&1912.00&\\
		&$2$-way&7553&LP-Guided Greedy&2.43&-349.08&1189.67&37.78\\
		&$2$-way&7553&ILP&0.54&&1188.00&37.87\\
		&loop&56614&Naive Greedy&7.32&-312.05&732.00&\\
		&loop&56614&LP-Guided Greedy&7.63&-329.29&726.00&0.82\\
		&loop&56614&ILP&1.78&&726.00&0.82\\
		&$3$-split&876933&Naive Greedy&14.79&45.80&2.00&\\
		&$3$-split&876933&LP-Guided Greedy&26.10&4.39&1.00&50.0\\
		&$3$-split&876933&ILP&27.29&&1.00&50.0\\
		\midrule
		\multirow{6}{*}{Coreutils}&$1$-way&45728&Naive Greedy&250.91&-13373.86&8595.00&\\
		&$1$-way&45728&LP-Guided Greedy&103.88&-5478.28&6737.00&21.62\\
		&$1$-way&45728&ILP&1.86&&6737.00&21.62\\
		&$2$-way&3707&Naive Greedy&1.57&-473.44&786.00&\\
		&$2$-way&3707&LP-Guided Greedy&1.13&-310.75&751.00&4.45\\
		&$2$-way&3707&ILP&0.27&&708.00&9.92\\
		\bottomrule
	\end{tabular}
\end{table*}

\begin{table*}
	\caption{Runtimes for computing approximate repairs for the selection phase of the naive greedy (NG) and LP-guided greedy (LPG)}\label{table:approx-runtimes}
	\begin{tabular}{c r r l r}
		\toprule
		Dataset&Shape&\#Errors&Algorithm&Runtime\\
		\toprule
		\multirow{4}{*}{Coreutils}&$1$-way&45728&Approximate NG&0.38\\
		&$1$-way&45728&Approximate LPG&1.39\\
		&$2$-way&3707&Approximate NG&0.04\\
		&$2$-way&3707&Approximate LPG&0.10\\
		\midrule
		\multirow{6}{*}{Legislative}&$1$-way&166977&Approximate NG&1.69\\
		&$1$-way&166977&Approximate LPG&3.99\\
		&$2$-way&443695&Approximate NG&timeout\\
		&$2$-way&443695&Approximate LPG&timeout\\
		&loop&3783&Approximate NG&0.14\\
		&loop&3783&Approximate LPG&0.23\\
		\midrule
		\multirow{10}{*}{ICIJ}&$1$-way&558799&Approximate NG&11.15\\
		&$1$-way&558799&Approximate LPG&20.11\\
		&$2$-rep&397108&Approximate NG&7.41\\
		&$2$-rep&397108&Approximate LPG&19.72\\
		&$2$-way&7553&Approximate NG&0.08\\
		&$2$-way&7553&Approximate LPG&0.22\\
		&loop&56614&Approximate NG&0.93\\
		&loop&56614&Approximate LPG&1.68\\
		&$3$-split&876933&Approximate NG&15.69\\
		&$3$-split&876933&Approximate LPG&26.00\\
		\midrule
		\multirow{8}{*}{LDBC}&$1$-way&79064&Approximate NG&0.63\\
		&$1$-way&79064&Approximate LPG&2.77\\
		&$2$-rep&72756&Approximate NG&1.32\\
		&$2$-rep&72756&Approximate LPG&2.37\\
		&loop&423&Approximate NG&0.00\\
		&loop&423&Approximate LPG&0.01\\
		&$3$-split&1358499&Approximate NG&22.32\\
		&$3$-split&1358499&Approximate LPG&51.70\\
		\bottomrule
	\end{tabular}
\end{table*}

\begin{table*}
	\caption{Edge deletions of approximate repairs suggested by the selection phase of the naive greedy (NG) and LP-guided greedy (LPG)}\label{table:approx-deletions}
	\begin{tabular}{c r r l r}
		\toprule
		Dataset&Shape&\#Errors&Algorithm&Runtime\\
		\toprule
		\multirow{4}{*}{Coreutils}&$1$-way&45728&Approximate NG&11820.67\\
		&$1$-way&45728&Approximate LPG&6733.67\\
		&$2$-way&3707&Approximate NG&744.33\\
		&$2$-way&3707&Approximate LPG&603\\
		\midrule
		\multirow{6}{*}{Legislative}&$1$-way&166977&Approximate NG&5483.33\\
		&$1$-way&166977&Approximate LPG&2618\\
		&$2$-way&443695&Approximate NG&\\
		&$2$-way&443695&Approximate LPG&\\
		&loop&3783&Approximate NG&87\\
		&loop&3783&Approximate LPG&38\\
		\midrule
		\multirow{10}{*}{ICIJ}&$1$-way&558799&Approximate NG&1580.33\\
		&$1$-way&558799&Approximate LPG&1188\\
		&$2$-rep&397108&Approximate NG&113.67\\
		&$2$-rep&397108&Approximate LPG&92\\
		&$2$-way&7553&Approximate NG&1743.33\\
		&$2$-way&7553&Approximate LPG&1180\\
		&loop&56614&Approximate NG&731.67\\
		&loop&56614&Approximate LPG&725\\
		&$3$-split&876933&Approximate NG&9.67\\
		&$3$-split&876933&Approximate LPG&1\\
		\midrule
		\multirow{8}{*}{LDBC}&$1$-way&79064&Approximate NG&30464.33\\
		&$1$-way&79064&Approximate LPG&22438\\
		&$2$-rep&72756&Approximate NG&95.33\\
		&$2$-rep&72756&Approximate LPG&31\\
		&loop&423&Approximate NG&51.33\\
		&loop&423&Approximate LPG&46\\
		&$3$-split&1358499&Approximate NG&311.67\\
		&$3$-split&1358499&Approximate LPG&100\\
		\bottomrule
	\end{tabular}
\end{table*}

\end{document}